\documentclass[a4paper,11pt]{article}

\usepackage{relsize}
\usepackage{epigraph}
\usepackage{nicefrac}
\usepackage[margin=1in]{geometry}
\usepackage{amsmath,amssymb,amsthm}
\usepackage{hyperref}
\usepackage{fullpage}
\hypersetup{colorlinks=true,linkcolor=black,citecolor=blue}

\usepackage{mathtools}
\usepackage{float}
\usepackage[titletoc,title]{appendix}
\usepackage{graphicx, color}
\usepackage[classfont=sanserif,langfont=roman,funcfont=italic]{complexity}
\usepackage{caption}
\usepackage{subcaption}
\usepackage{setspace}

\floatstyle{ruled}
\newfloat{algorithm}{ht}{loa}
\providecommand{\algorithmname}{Algorithm}
\floatname{algorithm}{\protect\algorithmname}
\newcount\Comments
\Comments=1
\newcommand{\kibitz}[2]{\ifnum\Comments=1\textcolor{#1}{#2}\fi}

\usepackage[dvipsnames]{xcolor}
\colorlet{LightRubineRed}{RubineRed!70!}

\usepackage{comment}
\usepackage{enumerate}
\usepackage{amsmath, amsthm, amssymb, amstext,  graphicx}

\newtheorem{theorem}{Theorem}[section]

\newtheorem{definition}{Definition}
\newtheorem{proposition}{Proposition}
\newtheorem{corollary}{Corollary}

\newtheorem{lemma}{Lemma}

\newtheorem{open}{Open Problem}

\usepackage{multirow}
\usepackage{hyperref}
\hypersetup{
	colorlinks   = true, 
	urlcolor     = blue, 
	linkcolor    = blue, 
	citecolor   = blue 
}
\usepackage{xcolor}
\renewcommand{\vec}[1]{\mathbf{#1}}

\def\lc{\left\lceil}   
\def\rc{\right\rceil}

\title{Communication Complexity of Cake Cutting\footnote{This project has received funding from the European Research Council (ERC) under the European Union’s
Horizon 2020 research and innovation programme (grant agreement No 740282). 
The project was also supported by the ISF grant 1435/14 administered by the Israeli Academy of Sciences and Israel-USA Bi-national Science Foundation (BSF) grant 2014389.
Simina also acknowledges support from the I-CORE Program of the Planning and Budgeting Committee and The Israel Science Foundation. }
}

\author{
	Simina Br\^anzei\footnote{Purdue University, USA. E-mail: \textcolor{blue}{\href{mailto:simina.branzei@gmail.com}{simina.branzei@gmail.com}}.}\\
	\newline
	\and
	Noam Nisan\footnote{Hebrew University of Jerusalem and Microsoft Research, Israel. E-mail: 
		\textcolor{blue}{\href{mailto:noam@cs.huji.ac.il}{noam@cs.huji.ac.il}}.}
}

\date{}

\begin{document}

\maketitle

\begin{abstract}
We study classic cake-cutting problems, but in discrete models rather than using infinite-precision real values, specifically, focusing on their communication complexity. Using general discrete simulations of classical infinite-precision protocols (Robertson-Webb and moving-knife), we roughly partition the various
fair-allocation problems into 3 classes: ``easy" (constant number of rounds of logarithmic many bits), ``medium" (poly-logarithmic total communication), and ``hard".
Our main technical result concerns two of the ``medium" problems (perfect allocation for 2 players and equitable allocation for any number of players) which we prove 
are not in the ``easy" class.  Our main open problem is to separate the ``hard" from the ``medium" classes.
\end{abstract}

\section{Introduction}
The concept of dividing valuable resources in a way that is fair has a long history in economic thinking and underpins the very notion of justice in society \cite{Foley67,Rawls71,Varian74,Moulin03}. An instantiation of this challenge is the problem of allocating a set of goods---such as land, time, mineral deposits, or an inheritance---among multiple participants with equal rights but distinct interests. 
The model that encapsulates the pure fair division 
problem\footnote{In economies with production, the participants may have different contributions to the product of society, so their claims may not necessarily be equal, but our focus here will be on equal rights.} is due to Steinhaus \cite{Steinhaus48} under the name of cake cutting. The ``cake'' is a metaphor for the collection of interest, formally defined as the unit interval, while the value of each participant (player) for a piece of cake is given by the integral of a private probability measure over that piece. The cake cutting model can be seen as the limit of a discrete model of fair division---in which the players have additive valuations over a finite set of indivisible goods---by letting the number of goods (i.e. ``atoms'') grow to infinity.

This model has given rise to a rich literature concerned with understanding what fairness itself means and what procedures can materialize it \cite{BT96,RW98,Pro13}. The outcome of a protocol is an allocation, which is an assignment of (non-overlapping) pieces to the players, so that each piece is a union of intervals  and the whole cake is allocated.
Four of the commonly considered notions of fairness are proportionality, envy-freeness, equitability, and perfection.
A simple envy-free protocol is the Cut-and-Choose procedure for two players, Alice and Bob, in which a mediator asks Alice to cut the cake in two pieces of equal value to her, after which Bob is allowed to take his favorite while Alice is left with the remainder.

Moving beyond Cut-and-Choose, protocols for more players can get very complex and, in fact,
a central question in the cake cutting literature is that of computing fair allocations despite the informational challenge of private preferences. The goal of a mediator (or center) is to help the parties reach a fair solution, which in turn requires it to learn enough valuation information.  
The existing cake cutting protocols are broadly classified in three categories as follows.
The largest class contains so called ``discrete'' protocols, which operate in a query model (due to Robertson and Webb \cite{RW98,WS07}), where the center can repeatedly ask the players to make a cut or evaluate an existing piece.
Despite the fact that the Robertson-Webb (RW) query model is viewed as discrete in the cake cutting literature, it is an infinite precision model, which is unavoidable for the purpose of computing exact 
solutions (or even approximate ones for arbitrarily complex instances.)
Continuous (moving knife) procedures generally involve multiple knives sliding over the cake, with stopping conditions that signal the presence of a fair allocation. 
In a companion paper \cite{BN17} we formalize a general class of moving knife procedures 
and prove that they can be approximately simulated by logarithmically many queries in the RW model.
Finally, direct revelation protocols were studied also in the context of mechanism design \cite{MT10,MN12,CLPP13,BM15}, where the players directly submit their entire valuation to the center, whose goal is to compute a fair allocation despite the strategic behavior of the players.
The positive results in the direct revelation model are either only for very restricted preferences, such as piecewise uniform functions \cite{CLPP13}, or rely again on infinite precision for general preferences \cite{MT10}.

In this paper we study the communication complexity \cite{KN96,LS09} of cake cutting: we assume that each player 
knows its own valuation function, and
study the amount of communication that the players need to exchange (between themselves or with a mediator)
in order to find an $\epsilon$-fair allocation, for various notions of fairness.  

There are several motivations for studying the communication complexity of fair allocation problems.
First, it is natural to use discrete protocols for studying notions of complexity rather than
infinite-precision ones like the RW model, and communication complexity is the strogest, most general, model that allows only discrete queries 
to each player's valuation\footnote{Any query that has $k$ possible answers to the valuation of player $i$ can be simulated
in a communication protocol by $\log_2 k$
bits of communication from player $i$.}.
Second,
we will show that communication lower bounds also imply lower bounds not only in the RW model but also in various moving-knife models, allowing
us to utilize the strong tools already developed in the theory of communication complexity for proving results on cake cutting in 
other models. 
Third, cake cutting problems seem to be interesting challenges from the point of view of communication complexity:
these problems have low non-deterministic communication complexity\footnote{I.e. specifying a fair allocation requires a small
number of bits, and its fairness can be easily verified.}, and there are rather few such problems for which good
deterministic or randomized lower bounds are known (specifically \cite{BR17, RW16} and Karchmer-Wigderson games \cite{KW90}.)  
Since the existence proofs of fair allocations are based on fixed point theorems such as Sperner, Brower, or Borsuk-Ulam \cite{Neyman46,simmons80,Stromquist80,ALON1987247,Su99,DQS12,FFGZ16,FG18}, their study is a basic instance of understanding various
 ``PPAD-like'' classes \cite{P94} in the communication
complexity model.
Finally, this work is part of the general thread of understanding the communication complexity of economic problems, such as 
markets, auctions \cite{NS06,BNS07,DV13,DNS14,ANRW15,Assadi17,BMW17}, matchings \cite{GNOW15}, and voting \cite{CS05}.

It is clear that in discrete models such as communication complexity exact fairness is impossible to achieve and
we thus focus on the approximate notion, $\epsilon$-fairness.
We note that understanding $\epsilon$-fairness is motivated by the fact that in reality the goods to be 
allocated are rarely completely continuous but rather, at best, are fine enough so that a continuous model
is a good approximation. Similarly, in reality we do not need (or can really get) exact fairness, but rather approximate fairness
should suffice.  It is thus interesting, from a complexity point of view, to study the complexity of achieving various notions of
fairness as a function of the desired approximation parameter $\epsilon$. While much previous work has studied the complexity of
cake cutting notions in 
terms of the number of players $n$ \cite{BT96,RW98,EP06b,WS07,Pro09,AM16}, recent work \cite{CP12,PW17,BN17} started 
studying the dependence in terms of $\epsilon$ as well.  
In this paper we will always treat $n$ as a small fixed constant, and focus on the complexity in terms of the desired approximation
level $\epsilon$. 
There
does remain work in understanding the joint dependence on both $n$ and $\epsilon$.

\subsection{Our Results}

We study both general notions of fairness and in particular the following four standard notions:  
proportionality, envy-freeness, equitability, and perfection \cite{RW98}:
\begin{definition}
An allocation $A_1 \ldots A_n$ of the cake among players $1 \ldots n$, where each player $i$ receives piece $A_i$, is called:
\begin{itemize}
\item $\epsilon$-proportional if for every $i$: $v_i(A_i) \ge 1/n - \epsilon$.
\item $\epsilon$-envy-free if for every $i,j$: $v_i(A_i) \ge v_i(A_j) - \epsilon$.
\item $\epsilon$-equitable if for every $i,j$: $|v_i(A_i) - v_j(A_j)| \le \epsilon$.\footnote{The equitability requirement is considered particularly relevant in situations of heightened conflict, 
such as divorce settlements or war negotiations.}
\item $\epsilon$-perfect if for every $i,j$: $|v_i(A_j) - 1/n| \le \epsilon$.
\end{itemize}
\end{definition}


Our first contribution is a careful definition of the communication complexity model for cake cutting. 
We observe for the problem to make sense we must have the 
following two features be part of the setting: 
the valuations of the players must have bounded density
(this is the correct equivalent to the no-atoms assumption in the usual cake cutting
model) and that the number of cuts (pieces, intervals) in the produced allocation must be bounded.  

\medskip

Without bounded density, nothing can be done, as is captured by the following proposition.

\begin{proposition}
	Let $\epsilon_0 = 1/3$ be a constant approximation parameter. No finite discrete protocol can output a
	$\epsilon_0$-proportional allocation between $n=2$ players for all atom-less valuations (but whose densities are not necessarily bounded).
	The same impossibility holds for every other number of players $n$, $\epsilon<1/2$, and any other standard fairness criteria.
\end{proposition}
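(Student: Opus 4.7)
The plan is a pigeonhole argument that exploits the mismatch between the countably many output allocations of any finite discrete protocol and the ``continuum'' of places where unbounded-density valuations can hide their mass.

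Fix a finite discrete protocol $\Pi$. Since $\Pi$ terminates on every input, its decision tree has at most countably many leaves, and each leaf produces a single allocation with finitely many cut points. Let $\mathcal{A}$ denote this at-most-countable set of possible outputs and let $C = \bigcup_{A \in \mathcal{A}} \mathrm{cuts}(A) \subseteq [0,1]$ be the corresponding (at-most-countable) set of cut points that the protocol could ever use. Introduce the ``needle'' valuation $v_\delta^p$ with density $1/\delta$ on $[p, p+\delta]$ and zero elsewhere; this is atom-less but has unbounded density as $\delta \to 0$.

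On the symmetric input $(v_\delta^p, v_\delta^p)$, both players have their entire value on $[p,p+\delta]$, so $\epsilon_0 = 1/3$-proportionality demands that each player's piece contain at least $\delta/6$ units of Lebesgue measure from $[p,p+\delta]$. In particular, the output allocation must have at least one cut strictly inside $[p,p+\delta]$; otherwise the entire needle sits in a single piece and the other player receives value $0$, which is less than $1/n - \epsilon_0 = 1/6$. The contradiction is obtained by choosing $(p,\delta)$ such that $[p,p+\delta]$ is disjoint from $C$: when $\Pi$ uses at most $N$ bits of communication on every input, $C$ is finite, the set of ``bad'' $p$'s (with some element of $C$ in $[p,p+\delta]$) has measure at most $|C|\cdot \delta$, and for $\delta < 1/(|C|+1)$ a good $p^\ast$ exists.

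The main obstacle is handling protocols that terminate on every input but use unboundedly many bits as the input varies, in which case $C$ may be countably infinite and even dense in $[0,1]$. Here I stratify $[0,1-\delta]$ by the transcript produced on $(v_\delta^p, v_\delta^p)$: each transcript $t$ yields a fixed output $A^t$ with a fixed finite number $k_t$ of cuts, and the set of $p$ mapped to $t$ is contained in the $p$'s for which $[p, p+\delta]$ crosses a cut of $A^t$, which has Lebesgue measure at most $2 k_t \delta$. By countable additivity $\sum_t 2 k_t \delta \ge 1-\delta$, forcing $\sum_t k_t = \Omega(1/\delta)$; a measure-theoretic pigeonhole over strata indexed by transcript length then reduces the argument to the bounded-communication case on a positive-measure subfamily of needles, yielding the contradiction.

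The final sentence of the proposition follows by the same template: placing the entire needle in a single piece produces a gap of essentially $1$ in each of envy-freeness, equitability, and perfection, far exceeding any $\epsilon < 1/2$; the extension to $n$ players uses $n$ pairwise disjoint needles distributed so that any partition into $n$ pieces must assign some player none of them.
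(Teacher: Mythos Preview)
Your bounded-communication case is essentially the paper's own proof: both arguments plant a ``needle'' valuation concentrated on $[p,p+\delta]$, observe that any $\epsilon_0$-proportional allocation must cut strictly inside the needle, and then use pigeonhole against the finite set of cut points the protocol can ever produce. The paper's version is slightly cleaner in that it explicitly invokes K\"onig's lemma: a \emph{finite discrete protocol} (finite-depth computation tree with finite branching at each node) has a finite tree, hence finitely many leaves, hence a finite total cut set $C$ of size $M$; taking $\delta<1/M$ yields an interval $[p,p+\delta]$ disjoint from $C$.

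Your ``main obstacle'' paragraph is therefore unnecessary under the paper's (and the proposition's) intended reading: once ``finite discrete protocol'' is interpreted as bounded number of queries, K\"onig's lemma rules out the countably-infinite-$C$ scenario outright, and your bounded case is the whole proof. If you intended the stronger statement where the protocol merely terminates on every input but with no uniform bound, be aware that the argument you sketch there is not complete. The inequality $\sum_t k_t = \Omega(1/\delta)$ is not a contradiction by itself (a countable tree may well have unbounded total cuts), and the sentence ``a measure-theoretic pigeonhole over strata indexed by transcript length then reduces the argument to the bounded-communication case'' is a plan, not a proof: the transcript, and hence the stratification, depends on $\delta$ as well as on $p$, so you cannot fix a length-$L$ stratum of positive measure and then freely shrink $\delta$ within it. Either drop this case and align your hypothesis with the paper's, or supply an actual argument for it.
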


On the other hand, if the protocol is allowed to produce allocations with an unbounded number of
cuts then allocations that are $\epsilon$-fair can be found with high probability without any communication.

\begin{proposition}
	For every $n$ and $\epsilon$, there exists a randomized protocol, with public coins, that without any communication or 
	queries produces with high probability an $\epsilon$-perfect allocation
	with at most $O(\epsilon^{-2})$ cuts. The
	protocol may be turned into one with private coins by adding a single round of communication of $O(\log \epsilon^{-1})$ bits. 
     The same holds for any other standard fairness criteria.
\end{proposition}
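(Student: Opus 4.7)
The plan is to produce a piecewise-constant labeling of the cake using only public randomness and no communication, then invoke concentration to show it is $\epsilon$-perfect for every valuation profile with high probability. Concretely, I would partition $[0,1]$ into $k = \Theta(\epsilon^{-2})$ equal-length intervals $I_1, \ldots, I_k$, fixed in advance and independent of the valuations. Using the public coins, draw labels $L_1, \ldots, L_k \in \{1, \ldots, n\}$ uniformly and independently, and set $A_j = \bigcup_{\ell : L_\ell = j} I_\ell$. The resulting allocation uses at most $k-1 = O(\epsilon^{-2})$ cuts, as required. For any pair $(i,j)$, the value $v_i(A_j) = \sum_\ell v_i(I_\ell) \cdot \mathbf{1}[L_\ell = j]$ is a sum of independent bounded random variables whose expectation is exactly $1/n$. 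The bounded-density hypothesis (which the paper has already argued must be part of the model) forces $v_i(I_\ell) = O(1/k)$ uniformly, so Hoeffding gives $\Pr[|v_i(A_j) - 1/n| > \epsilon] \leq 2\exp(-\Omega(k \epsilon^2))$. Union bounding over the $n^2$ pairs and choosing $k$ to be a sufficiently large constant times $\epsilon^{-2}$ then yields $\epsilon$-perfectness with high probability.

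For the private-coin version, the idea is to replace the fully independent labeling by one that is merely pairwise independent. Standard pairwise-independent hash families from $\{1,\ldots,k\}$ into $\{1,\ldots,n\}$ can be specified by $O(\log k) = O(\log \epsilon^{-1})$ bits. Pairwise independence still gives $\mathrm{Var}(v_i(A_j)) = \sum_\ell v_i(I_\ell)^2 \cdot (1/n)(1-1/n) = O(1/k)$, so by Chebyshev $\Pr[|v_i(A_j) - 1/n| > \epsilon] = O(1/(k \epsilon^2))$, and taking $k = \Theta(\epsilon^{-2})$ with a sufficiently large hidden constant still works. To turn this into a protocol, one designated player samples the seed with its private coins and broadcasts it in a single round of $O(\log \epsilon^{-1})$ bits, after which every party can deterministically reconstruct the same allocation.

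Finally, to extend to the other standard fairness criteria it suffices to observe that an $\epsilon$-perfect allocation is automatically $\epsilon$-proportional, $2\epsilon$-envy-free, and $2\epsilon$-equitable, so the same protocol with $\epsilon$ tuned by a constant factor covers all four notions at once. I do not anticipate a significant conceptual obstacle; the only care needed is in tracking the constants and verifying that the bounded-density assumption really does make $v_i(I_\ell)$ uniformly $O(1/k)$ at the claimed scale, so that Hoeffding (for public coins) and Chebyshev (for the pairwise-independent version) both deliver concentration with $k$ of order $\epsilon^{-2}$.
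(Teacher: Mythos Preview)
Your public-coin argument coincides with the paper's: partition $[0,1]$ into $\Theta(\epsilon^{-2})$ equal-length pieces, assign each to a uniformly random player, and use concentration together with the bounded-density assumption (you invoke Hoeffding explicitly; the paper phrases the same thing as ``within a few standard deviations of $1/n$''). Your observation that $\epsilon$-perfect implies the other three notions up to constants is also the intended way to cover the remaining fairness criteria.

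For the private-coin conversion you take a genuinely different route. The paper simply invokes Newman's theorem \cite{Newman91} as a black box to compress the public randomness to $O(\log \epsilon^{-1})$ bits, which one player then broadcasts. You instead replace the fully independent labeling by a pairwise-independent hash family with an $O(\log \epsilon^{-1})$-bit seed and argue via Chebyshev. Both arguments are correct. Your approach is more self-contained and sidesteps the mild awkwardness of applying Newman's theorem over a continuous input space; the paper's approach is terser and inherits the exponential tail bound from the public-coin analysis, whereas Chebyshev gives only a polynomial tail---which here just means absorbing the $n^2$ union bound into a larger hidden constant in $k$, harmless since $n$ is treated as fixed.
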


From this point on we will therefore assume valuations with bounded densities and protocols that produce allocations
with a bounded number of cuts.

\medskip

Our second set of contributions is providing general simulation results that apply to a rather general class of (abstract)
fairness notions, including all those studied in the literature (the formal requirement for such general fairness notions can be found in Section \ref{sec:model_fairnessnotions}).
We start with a general simulation of the (infinite precision)
RW model by the (discrete, finite precision) communication model. 


\begin{theorem} 
Let $\mathcal{M}$ be any protocol in the RW model that makes at most $r$ queries and outputs $\mathcal{F}$-fair allocations, where $\mathcal{F}$ is a fairness notion. Then there exists a communication protocol $\mathcal{M}_{\epsilon}$ with $r$ rounds of $O(\log \epsilon^{-1})$ bits of communication each, such that $\mathcal{M}_{\epsilon}$ outputs $\epsilon$-$\mathcal{F}$-fair allocations.
\end{theorem}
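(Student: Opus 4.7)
The natural candidate for $\mathcal{M}_\epsilon$ is the direct discretization of $\mathcal{M}$: fix a precision parameter $k = \Theta(\log \epsilon^{-1})$ to be chosen later, and in each of the $r$ rounds have the center transmit the parameters of $\mathcal{M}$'s next RW query (a cut point in $[0,1]$ and possibly a target value in $[0,1]$) rounded to $k$ bits, while the queried player computes the exact response under her true valuation $v_i$ and returns its $k$-bit rounding. All quantities live in $[0,1]$, so each round uses $O(k) = O(\log \epsilon^{-1})$ bits, and the final allocation is determined, exactly as in $\mathcal{M}$, by the rounded answers.

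The heart of the proof would be a \emph{virtual-valuation} argument. Given the simulated transcript $(\tilde q_t, \tilde a_t)_{t \le r}$ and resulting allocation $\pi$, I would construct a bounded-density profile $\tilde v = (\tilde v_1, \ldots, \tilde v_n)$ with two properties: (a) the exact RW execution of $\mathcal{M}$ on $\tilde v$ reproduces the same transcript and hence the same allocation $\pi$, and (b) $\|\tilde v_i - v_i\|_{\mathrm{TV}} = O(r L \cdot 2^{-k})$ for each $i$, where $L$ is the density bound guaranteed by the bounded-density assumption. Each rounded answer imposes a single linear constraint on $\tilde v_i$ (either $\tilde v_i([x,y]) = \tilde a_t$ for an evaluation query or $\tilde v_i([x,\tilde a_t]) = \alpha$ for a cut query); since the exact answer of $v_i$ to the same query is already within $O(L \cdot 2^{-k})$ of the rounded value, each such constraint can be enforced by locally redistributing at most $O(L \cdot 2^{-k})$ of mass within the relevant interval, and summing over $r$ queries gives (b).

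Once $\tilde v$ is in hand, the conclusion is immediate: $\pi$ is the output of a genuine exact RW run of $\mathcal{M}$ on the bounded-density profile $\tilde v$, so it is $\mathcal{F}$-fair for $\tilde v$ by the hypothesis on $\mathcal{M}$. Because any abstract fairness notion is a predicate on the $n^2$ piece-valuations $\{\tilde v_i(A_j)\}_{i,j}$, and $|v_i(A_j) - \tilde v_i(A_j)| \le \|\tilde v_i - v_i\|_{\mathrm{TV}} = O(rL \cdot 2^{-k})$, the continuity assumption on $\mathcal{F}$ formalized in Section~\ref{sec:model_fairnessnotions} gives that $\pi$ is $\mathcal{F}$-fair for $v$ with additive slack $O(rL \cdot 2^{-k})$. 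Setting $k = \lceil \log_2(CrL/\epsilon) \rceil$ for an appropriate constant $C$ depending on $\mathcal{F}$ drives this below $\epsilon$, and with $r$, $n$, $L$ treated as constants this is $O(\log \epsilon^{-1})$ bits per round, as required.

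The main technical obstacle I anticipate is making the construction of $\tilde v$ in step~(a)--(b) rigorous: although each individual constraint is a mild local perturbation, several queries typically touch overlapping intervals, and the virtual valuation must simultaneously be non-negative, integrate to one, and respect the density bound $L$. The clean way around this is to look at the minimal partition of $[0,1]$ induced by the rounded endpoints of all queries to a given player $i$ (at most $O(r)$ elementary intervals), and to process the constraints in the order they appear in the transcript, each time absorbing the required mass adjustment inside an elementary interval not yet pinned down by earlier constraints; the density bound and the per-query error of $O(L \cdot 2^{-k})$ then control the cumulative TV distance.
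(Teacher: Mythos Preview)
Your high-level plan is in the right spirit, but the virtual-valuation step has a genuine gap that the paper's approach sidesteps entirely.

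The problem is consistency. Your constraints on $\tilde v_i$ are of the form $\tilde v_i([0,\tilde y])=\tilde\alpha$ for various rounded points $\tilde y$ and rounded values $\tilde\alpha$, and nothing in your scheme prevents two such constraints from being outright contradictory. Concretely, suppose the protocol issues $\mathrm{Cut}_i(\tilde\alpha)$; the player finds the true $y$ with $v_i([0,y])=\tilde\alpha$ and returns its rounding $\tilde y$. If the protocol then issues $\mathrm{Eval}_i(\tilde y)$, the player computes $v_i([0,\tilde y])$ exactly and rounds. Since $|y-\tilde y|\le 2^{-k}$ and the density may be as large as $L$, $v_i([0,\tilde y])$ can differ from $\tilde\alpha$ by up to $L\cdot 2^{-k}$, which is enough to cross a rounding boundary; the returned $\tilde a$ need not equal $\tilde\alpha$. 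You then require $\tilde v_i([0,\tilde y])=\tilde\alpha$ and $\tilde v_i([0,\tilde y])=\tilde a$ simultaneously, which is impossible. The same phenomenon arises with two $\mathrm{Cut}$ queries whose (distinct) rounded targets happen to produce the same rounded cut point, or two $\mathrm{Eval}$ queries at points that collide after rounding. Your last paragraph worries about non-negativity and density, but the obstruction is more basic: the system of linear constraints can be infeasible, so no $\tilde v$ exists on which $\mathcal{M}$ reproduces the rounded transcript, and the appeal to the $\mathcal{F}$-fairness of $\mathcal{M}$ collapses.

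The paper's argument avoids this by reversing the order of operations: each player first replaces $v_i$ by an $m$-simple valuation $v_i'$ (with $m=O(C\cdot\epsilon^{-1})$) that is $\epsilon/C$-close on every interval, and then the players execute $\mathcal{M}$ \emph{exactly} on $(v_1',\ldots,v_n')$. Because the $v_i'$ are $m$-simple, every RW answer is exactly specified by $O(\log m)=O(\log\epsilon^{-1})$ bits, so no rounding of answers is needed and no post-hoc virtual valuation must be constructed---$v'$ itself is the witness. The output is $\mathcal{F}$-fair for $v'$ by hypothesis on $\mathcal{M}$, and the interval-wise closeness plus the bound $C$ on the number of cuts gives $\epsilon$-$\mathcal{F}$-fairness for $v$. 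As a bonus, the per-round bit count depends only on $C$ and $\epsilon$, not on $r$; in your scheme the accumulated TV error scales with $r$, forcing $k=\Theta(\log(r/\epsilon))$ bits per round.
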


A more general class of protocols is that of moving knife protocols, in which ``knives''
are continuously moved across the cake until some condition is met (perhaps in addition to RW queries).
In a companion paper \cite{BN17} we defined a general class of such protocols that can be approximately simulated in the RW model,
and here we define an even more general class that can be approximately simulated with small communication. 
All known moving-knife procedures fall into this class, using a constant number of steps.
While the formal definition is a bit cumbersome, in an intuitive sense it captures protocols that rely on the simple intermediate value theorem 
(rather than a ``full-fledged'' fixed point theorem.)

\begin{theorem}
Let $\mathcal{M}$ be any moving knife protocol that runs in constantly many steps and outputs $\mathcal{F}$-fair partitions. Then for every $\epsilon > 0$, there is a communication protocol $\mathcal{M}_{\epsilon}$ that uses 
	$O(\log \epsilon^{-1})$ rounds each with $O(\log \epsilon^{-1})$ bits of communication and 
	outputs $\epsilon$-$\mathcal{F}$-fair partitions. 
\end{theorem}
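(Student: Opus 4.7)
The plan is to process the moving-knife protocol $\mathcal{M}$ one step at a time, simulating each of its (constantly many) steps with a $O(\log \epsilon^{-1})$-round, $O(\log \epsilon^{-1})$-bit subroutine. Pure RW-style queries within a step are handled immediately by the previous theorem; the new ingredient is the simulation of a continuous knife motion, whose termination condition is governed by the intermediate value theorem.

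Concretely, for each moving-knife step I would model the continuous motion by a single scalar time parameter $t \in [0,1]$, with knife positions depending continuously (and in fact Lipschitz-ly, by the bounded density assumption) on $t$. The stopping rule can be written as $g(t) = 0$ for a continuous real-valued function $g$ defined from the player valuations of the current intervals; the moving-knife semantics guarantee that $g$ changes sign over $[0,1]$, so a zero exists. Using the bounded-density hypothesis, $g$ is Lipschitz with a constant that depends only on the density bound and on the constant number of knives appearing in the protocol. I would then run a binary search on $t$: at each iteration, bisect the current interval of uncertainty, compute the knife positions at the midpoint to $O(\log \epsilon^{-1})$ bits of precision, and evaluate $g$ there by a constant number of RW-style queries, each of which costs $O(\log \epsilon^{-1})$ bits via the previous theorem. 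After $O(\log \epsilon^{-1})$ iterations, we locate a $t^\ast$ with $|g(t^\ast)| \le \epsilon'$, where $\epsilon'$ is as small as needed.

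To get an $\epsilon$-$\mathcal{F}$-fair output, I would set the per-step precision $\epsilon'$ to $\epsilon / K$, where $K$ absorbs (i) the constant number of steps $c$, (ii) the Lipschitz constants relating the scalar parameter $t$ to the resulting value differences, and (iii) the standard sensitivity constant of the abstract fairness notion $\mathcal{F}$ (as codified in Section~\ref{sec:model_fairnessnotions}). Since $c$ is a constant, the resulting binary search per step still uses $O(\log \epsilon^{-1})$ rounds of $O(\log \epsilon^{-1})$ bits, and concatenating $c$ such subroutines preserves these bounds up to a constant factor.

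The main obstacle I expect is formalizing the error propagation across steps in a way that works uniformly for the abstract class of fairness notions $\mathcal{F}$ covered by the general simulation statement. In particular, one has to argue that a step that stops at an $\epsilon'$-approximate zero of its stopping function produces intermediate knife positions whose induced piece values are within a controlled multiple of $\epsilon'$ of those produced by the exact moving-knife step, and that this approximation composes linearly with the number of subsequent steps without blowing up. Once this robustness lemma is in place, the binary-search simulation above yields the claimed round and bit complexities.
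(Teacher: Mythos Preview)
Your approach is essentially the paper's: binary search on the time parameter of each moving-knife step, using Lipschitz continuity to bound the error in device positions and trigger values, with per-step precision scaled by the constant number of steps.

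Two points worth flagging. First, the ``main obstacle'' you identify---error propagation across steps---is not proved as a lemma in the paper; it is built into the \emph{definition} of a moving-knife protocol via an explicit robustness clause (replacing each exact outcome $R_k$ by an $\epsilon$-outcome $\tilde R_k$ must still yield an $\epsilon$-$\mathcal{F}$-fair allocation). So your worry is real, but the paper resolves it axiomatically rather than analytically. Second, the Lipschitz property of the device functions is only required to hold for \emph{hungry} valuations (density bounded below), not merely bounded-density ones; the paper therefore begins by perturbing each valuation to $v_i' = (1-\epsilon/2)v_i + \epsilon/2$ before running the simulation, a preprocessing step your sketch omits.
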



For fairness notions for which no protocols are known in existing models,
we still show that a fair allocation can be found with nearly linear communication.

\begin{proposition} \label{lem:sim_all}
	For any fairness notion $\mathcal{F}$ (that is guaranteed to always exist and has at most a constant $C$ number of cuts), 
an  $\epsilon$-$\mathcal{F}$-fair allocation can be found using a single round
	of $O(\epsilon^{-1}\log\epsilon^{-1})$ bits of communication.
\end{proposition}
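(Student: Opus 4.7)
The plan is to have each player send a discretized description of their valuation, and then let the mediator pick an exactly $\mathcal{F}$-fair allocation for these approximate valuations, which will be $\epsilon$-$\mathcal{F}$-fair for the true ones after a small cut-snapping step. Concretely, let $L$ denote the density bound and set $N = \Theta(CL/\epsilon)$. Partition the cake into $N$ equal intervals $I_1,\ldots,I_N$, each of length $1/N$ and of $v_i$-value at most $L/N = O(\epsilon/C)$. Each player $i$ sends, in a single round, the values $v_i(I_1),\ldots,v_i(I_N)$, each rounded to the nearest multiple of $\epsilon/N$. Since every such value lies in $[0, L/N]$, each quantity fits in $O(\log\epsilon^{-1})$ bits, so the total communication is $O(N\log\epsilon^{-1}) = O(\epsilon^{-1}\log\epsilon^{-1})$ bits (as $n$ and $C$ are constants).

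Let $\tilde v_i$ be the piecewise-uniform valuation on $[0,1]$ whose mass on $I_j$ equals the rounded value reported by player $i$, spread uniformly over $I_j$. By construction, for any union $A$ of the grid intervals we have $|\tilde v_i(A) - v_i(A)| \le N\cdot(\epsilon/N) = \epsilon$. By the standing assumption on $\mathcal{F}$, an exactly $\mathcal{F}$-fair allocation $A^* = (A_1^*,\ldots,A_n^*)$ with at most $C$ cuts exists for $\tilde v_1,\ldots,\tilde v_n$; the mediator picks any such $A^*$ (no further communication is needed, as the $\tilde v_i$ are now common knowledge after the single round). The mediator then snaps each of the at most $C$ cut points to the nearest endpoint of an interval $I_j$, producing an allocation $\hat A$ whose cuts all lie on the grid.

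To conclude, I would bound the two sources of error separately. Snapping a single cut can move it by at most $1/N$ and therefore changes any $\tilde v_i(A_k^*)$ by at most $L/N$, so $|\tilde v_i(\hat A_k) - \tilde v_i(A_k^*)| \le CL/N \le \epsilon$ for every player $i$ and piece $k$. Combined with the rounding bound (applicable now because $\hat A$ uses only grid cuts), we get $|v_i(\hat A_k) - \tilde v_i(A_k^*)| \le 2\epsilon$ for all $i,k$. Any standard fairness notion $\mathcal{F}$ is defined by inequalities of the form $v_i(A_k) \lessgtr v_{i'}(A_{k'}) + c$ (possibly with a constant $c$ such as $1/n$), so perturbing every value by at most $2\epsilon$ turns an exact satisfaction into an $O(\epsilon)$-satisfaction; rescaling $\epsilon$ by a constant gives exactly $\epsilon$-$\mathcal{F}$-fairness.

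The main obstacle is conceptual rather than computational: one must argue that the abstract class of ``fairness notions'' under consideration is robust under small multiplicative perturbations of valuations, so that exact fairness for $\tilde v$ transfers to approximate fairness for $v$. For the concrete notions (proportional, envy-free, equitable, perfect) and more generally for any notion defined by finitely many linear inequalities in the $v_i(A_k)$'s with bounded coefficients, this robustness is immediate, so the proof goes through with only the additive loss accumulated above.
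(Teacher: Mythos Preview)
Your proof is correct and follows the same core idea as the paper: each player discretizes its valuation on a grid of size $\Theta(C/\epsilon)$, transmits the description in $O(\epsilon^{-1}\log\epsilon^{-1})$ bits, an exactly $\mathcal{F}$-fair allocation is found for the transmitted valuations, and robustness of $\mathcal{F}$ yields $\epsilon$-fairness for the true valuations.

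Two points where the paper's version is tighter. First, the snapping step is unnecessary. The paper proves (its Lemma~\ref{simple}) that the $m$-simple approximation $v'$ satisfies $|v([x,y])-v'([x,y])|\le \epsilon$ for \emph{every} subinterval $[x,y]$, not just grid-aligned ones; hence the fair allocation $A^*$ for the approximated valuations can be output directly, and one invokes robustness on $A^*$ itself. Your rounding bound was stated only for unions of grid intervals, which forced the snap-and-compare detour and the awkward comparison between $v_i(\hat A_k)$ and $\tilde v_i(A^*_k)$ (two different allocations), rather than the cleaner same-allocation comparison the abstract-fairness definition is phrased in terms of. Second, what you flag as the ``main obstacle'' is in fact dissolved by fiat in the paper: robustness under $\epsilon$-perturbation of the value matrix is taken as the \emph{definition} of an abstract fairness notion, so no separate argument is needed. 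A minor caveat you glossed over is that your $\tilde v_i$ need not be normalized to total mass $1$; the paper's rounding of the cumulative function $v([0,k/m])$ rather than of the increments $v(I_j)$ handles this automatically.
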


Armed with these general simulation results we get a high-level map of the communication complexity of various notions of fairness. 
We qualitatively categorize the standard fairness criteria 
in terms of their communication complexity into three categories: 
``Easy'', ``Medium'', and ``Hard''. Let us discuss each of these categories.

\vspace{0.1in}
\noindent
{\bf Easy:}
\vspace{0.1in}

The first class of problems consists of fair division problems that have known bounded protocols in the RW model, i.e. can be solved with a constant number of rounds of logarithmic (or even poly-logarithmic) bits.
Applying our simulation theorem to known protocols in the RW model \cite{AM16,EP84} we get:

\smallskip

\begin{theorem}
	The following problems have communication protocols with a constant number of rounds of communication of $O(\log \epsilon^{-1})$
	bits each round:
	\begin{enumerate}
	    \item For any fixed number of players $n$, a connected $\epsilon$-proportional allocation among $n$ players.
	    \item For any fixed number of players $n$, for some constant $C$ that depends on $n$, 
                  an $\epsilon$-envy-free allocation with at most $C$ cuts. 
	\end{enumerate}
\end{theorem}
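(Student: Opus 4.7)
The plan is to derive both parts as direct consequences of the RW-to-communication simulation theorem stated earlier in the excerpt, combined with known constant-query protocols in the Robertson--Webb model once $n$ is treated as a fixed constant.

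For part (1), I would invoke the Even--Paz protocol from \cite{EP84}, which for $n$ players produces a connected proportional allocation using $O(n \log n)$ queries in the RW model. Since $n$ is fixed, this is $O(1)$ queries. Feeding this protocol into the simulation theorem yields a communication protocol with a constant number of rounds of $O(\log \epsilon^{-1})$ bits each, producing an $\epsilon$-proportional allocation. The only subtlety is to verify that the connectedness property is preserved by the simulation: since the simulation only perturbs the numerical answers to cut/eval queries (and hence cut positions) by at most $\epsilon$, the combinatorial structure of the resulting allocation---which intervals are assigned to which player---is unchanged, so each player still receives a single contiguous interval.

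For part (2), I would invoke the Aziz--Mackenzie protocol \cite{AM16}, which for any fixed $n$ produces an envy-free allocation in the RW model using a bounded (tower-type in $n$, but constant for fixed $n$) number of queries and a bounded number $C = C(n)$ of cuts. Applying the simulation theorem to this RW protocol yields a communication protocol with $O(1)$ rounds of $O(\log \epsilon^{-1})$ bits producing an $\epsilon$-envy-free allocation. As before, the number of cuts is a combinatorial quantity determined by the sequence of queries made, and is unaffected by the $\epsilon$-discretization of the answers, so the bound of at most $C$ cuts transfers automatically.

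The main thing to verify is thus neither the existence of the RW protocols (those are in the literature) nor the simulation itself (that is the earlier theorem), but rather the conservation of structural properties---connectedness in part (1) and the bounded cut count in part (2)---under the simulation. This amounts to a routine inspection of the simulation procedure, which only rounds the numerical answers to queries and does not alter either the order of the queries made or the combinatorial role played by each answer in defining the allocation.
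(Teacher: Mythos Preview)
Your proposal is correct and essentially identical to the paper's proof, which simply cites the Even--Paz and Aziz--Mackenzie RW protocols and applies the simulation theorem. One small simplification: the preservation of connectedness and the cut bound follows immediately from the form of the simulation theorem, which asserts $\mathcal{M}_\epsilon(v) = \mathcal{M}(v')$ for some nearby valuation profile $v'$---the output is literally an output of the original RW protocol on some input, so it automatically inherits every structural property that $\mathcal{M}$ guarantees on all inputs, with no separate ``inspection of the simulation procedure'' needed.
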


We show that this upper bound is tight in a strong way:

\begin{theorem}
	Every (deterministic or randomized) protocol for (not-necessarily connected) $\epsilon$-proportional allocation among any $n \ge 2$ 
	players requires $\Omega(\log \epsilon^{-1})$ bits of communication (with no restriction
	on the number of rounds.) 
\end{theorem}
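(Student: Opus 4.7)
The plan is a reduction from the Greater-Than problem ($\mathrm{GT}$) on inputs of size $T = \Theta(\epsilon^{-1})$. It is known that $\mathrm{GT}$ on $[T]$ has randomized communication complexity $\Omega(\log T)$, even under a constant promise gap on $|x-y|$ (for example, via the argument of Miltersen--Nisan--Safra--Wigderson), so any such reduction yields the claimed $\Omega(\log \epsilon^{-1})$ bound; the deterministic case then follows \emph{a fortiori}.

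I will establish the $n = 2$ case first and extend to $n > 2$ by padding with dummy players. The reduction encodes the $\mathrm{GT}$ inputs into bounded-density valuations. Concretely, I parametrize a family of valuations by a pivot $p_z \in [1/4, 3/4]$ spaced $\Theta(\epsilon)$ apart, where each valuation $v_z$ has a distinguished median at $p_z$: for example, density $1/(2p_z)$ on $[0, p_z]$ and $1/(2(1-p_z))$ on $(p_z, 1]$, giving value $1/2$ on each side of $p_z$, with an additional asymmetric bias (say, weighting Alice's mass more heavily leftward and Bob's rightward) whose role is to rule out symmetry-based ``neutral'' allocations. Given a $\mathrm{GT}$ input $(x, y)$, Alice adopts valuation $v_x$, Bob adopts $v_y$, and they run the alleged $\epsilon$-proportional protocol to produce an allocation $(A, B)$.

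The core technical step is recovering $\mathrm{sign}(x - y)$ from $(A, B)$. I partition the cake at $\min(p_x, p_y)$ and $\max(p_x, p_y)$ into three intervals $I_1, I_2, I_3$, and track $\alpha_i = |A \cap I_i|/|I_i|$, the fractional share of $I_i$ that goes to Alice. Both proportionality constraints $v_x(A) \ge 1/2 - \epsilon$ and $v_y(B) \ge 1/2 - \epsilon$ are affine in $(\alpha_1, \alpha_2, \alpha_3)$; writing $v_y(B) = 1 - v_y(A)$ and subtracting the two yields an inequality whose leading term is proportional to $(\alpha_1 - \alpha_3)(p_y - p_x)$. With the chosen asymmetric valuations, the contribution of the $\alpha_2$-term can be controlled to be $O(\epsilon)$. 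Hence, once $|x - y|$ exceeds a sufficiently large constant $g$, the sign of $\alpha_1 - \alpha_3$ is forced to match $\mathrm{sign}(p_y - p_x) = \mathrm{sign}(y - x)$, which is the $\mathrm{GT}$ answer.

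The main obstacle is exactly what the bias in the valuations is designed to overcome: under purely symmetric valuations, ``neutral'' allocations with $\alpha_1 \approx \alpha_3$ can satisfy both proportionality constraints regardless of $\mathrm{sign}(x-y)$, and the signature $\alpha_1 - \alpha_3$ is then not forced. The asymmetric construction must be tuned carefully so that the $\alpha_2$ contribution to the subtracted inequality is genuinely $O(\epsilon)$ and does not swamp the signal for large $|x-y|$; this is the most delicate piece of the proof. Finally, for $n > 2$, append $n - 2$ dummy players with uniform density-$1$ valuations. Since a uniform player is $\epsilon$-proportional on any piece of length at least $1/n - \epsilon$, the two-player hard instance can be embedded on a dedicated sub-cake of length $2/n$ (suitably rescaled) with the dummies absorbing the remainder, and the $\Omega(\log \epsilon^{-1})$ lower bound lifts to all $n \ge 2$.
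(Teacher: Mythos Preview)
Your reduction contains a fatal error for the randomized case: the randomized communication complexity of $\mathrm{GT}$ on $[T]$ is \emph{not} $\Omega(\log T)$ but $\Theta(\log\log T)$. Indeed, Nisan's protocol (the very result the paper cites as \cite{nis93} in the proof of Theorem~\ref{ub}) solves greater-than on $k$-bit integers with $O(\log k)$ randomized communication; for $k=\log T$ this is $O(\log\log T)$, and a matching lower bound is standard. A constant promise gap $|x-y|\ge g$ does not change this. Consequently, even if every other step of your reduction were airtight, you would obtain only an $\Omega(\log\log\epsilon^{-1})$ randomized lower bound, which is exponentially weaker than the $\Omega(\log\epsilon^{-1})$ claimed in the theorem. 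The deterministic case could in principle go through a $\mathrm{GT}$ reduction (since $D(\mathrm{GT})=\Theta(\log T)$), but the theorem explicitly asks for a randomized bound, and a $\mathrm{GT}$ reduction simply cannot deliver it.

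There is a second, independent gap in the extraction step. You propose to read off $\mathrm{sign}(x-y)$ from $\mathrm{sign}(\alpha_1-\alpha_3)$, where the $\alpha_i$ are measured relative to the partition at $\min(p_x,p_y)$ and $\max(p_x,p_y)$. But computing these quantities requires knowing \emph{both} $p_x$ and $p_y$, so neither party can perform the decoding alone without further communication that you have not accounted for. Moreover, the ``forcing'' argument---that the sign of $\alpha_1-\alpha_3$ is determined once $|x-y|$ is large---is asserted rather than proved; the allocation is an arbitrary $\epsilon$-proportional one with possibly many cuts, and it is not clear the asymmetric bias you sketch actually eliminates all ``neutral'' allocations.

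By contrast, the paper's proof avoids any reduction. It argues directly that a protocol with $t$ leaves uses at most $Ct$ cut locations in total; one then constructs a single valuation (shared by all $n$ players) that assigns each atomic interval a weight which is a multiple of $1/R$ for some $R$ coprime to $n$ with $1/(nR)>\epsilon$, so that no player's share can lie within $\epsilon$ of $1/n$. This forces $t\ge\epsilon^{-\Omega(1)}$. The argument is essentially an output-size lower bound and sidesteps entirely the limitation that $\mathrm{GT}$ is randomly easy.
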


\vspace{0.1in}
\noindent
{\bf Medium:}
\vspace{0.1in}

The second class corresponds to problems that have known moving-knife protocols: computing a connected equitable allocation between $n=2$ players, an perfect allocation with two cuts (the minimum possible) between $n=2$ players, and a connected envy-free allocation among $n=3$ players.
We show these problems have logarithmic or poly-logarithmic total communication complexity:
equitable and envy-free can be solved approximately with $O(\log{\epsilon^{-1}})$ bits,
perfect with $O(\log^2 \epsilon^{-1})$, and general moving knife protocols with $polylog(\epsilon^{-1})$.

Our main technical result is a lower bound that applies to the two player problems above (i.e. perfect and equitable) showing that they are not in the ``easy class'' and thus separating the ``easy'' and ``medium'' classes.

\medskip

Our technical analysis of these problems proceeds by introducing a natural discrete communication problem that we call the ``crossing problem''.

\begin{definition}
	The {\em general} $crossing(m)$ problem is the following: Alice gets a sequence of integer 
      numbers $x_0,x_1, \ldots ,x_m$ with $0 \le x_i \le m$ for all $i$ and similarly Bob gets 
	a sequence of integer numbers $y_0,y_1, \ldots ,y_m$ with $0 \le y_i \le m$ for all $i$, and such 
     that $x_0 \le y_0$ and $x_m \ge y_m$. Their goal is to find an index $1 \le i \le m$ 
	such that either both $x_{i-1} \le y_{i-1}$ and $x_i \ge y_i$ or that both
	$x_{i-1} \ge y_{i-1}$ and $x_i \le y_i$.\footnote{An even more difficult variant of this problem
		would insist that the output is the former rather than the latter.   This variant does not
		seem to correspond directly to any fair division problem so we will not name it.  It is worth
		noting that the protocol in the proof of theorem \ref{ub} solves this harder variant as well.}
	
	In the monotone variant, $mon-crossing(m)$, Alice and Bob are also assured that 
     $0=x_0 \le x_1 \le x_2 \le \cdots \le x_m=m$ and $m=y_0 \ge y_1 \ge \cdots \ge y_m=0$ and again need to find
	an index $1 \le i \le m$ such that $x_{i-1} \le y_{i-1}$ and $x_i \ge y_i$.
\end{definition} 

Our upper bounds on the communication complexity of the crossing problem are as follows.

\begin{theorem} 
	The deterministic communication complexity of $crossing(m)$ is $O(\log^2(m))$.  The randomized
	communication complexity of $crossing(m)$ is $O(\log(m) \log\log(m))$.
\end{theorem}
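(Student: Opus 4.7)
The plan is binary search on the crossing location. Alice and Bob maintain an interval $[a,b] \subseteq \{0,1,\ldots,m\}$ with the invariant $x_a \le y_a$ and $x_b \ge y_b$; initially $[a,b]=[0,m]$, which is exactly the given boundary condition. In each round they compute the midpoint $c = \lfloor (a+b)/2 \rfloor$ and determine whether $x_c \le y_c$ or $x_c \ge y_c$ (ties broken arbitrarily). In the former case they set $a \leftarrow c$; in the latter, $b \leftarrow c$. Either update preserves the invariant, and after $O(\log m)$ halvings $b = a+1$, so $i=b$ is a valid crossing index of the ``from below to above'' type; this incidentally also solves the harder variant noted in the footnote.

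For the deterministic bound, each midpoint comparison is implemented by having Alice send $x_c \in \{0,\ldots,m\}$ (which takes $O(\log m)$ bits) and Bob reply with a single bit indicating whether $x_c \le y_c$. With $O(\log m)$ rounds at $O(\log m)$ bits per round, this gives the claimed $O(\log^2 m)$ total.

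For the randomized bound, each midpoint step is simply an instance of the greater-than function $\mathrm{GT}$ on two integers in $\{0,\ldots,m\}$, and $\mathrm{GT}$ has randomized communication complexity $O(\log \log m)$ with constant error (Nisan; Miltersen--Nisan--Safra--Wigderson). Rather than amplifying each call to inverse-polylogarithmic error, which would cost an extra $\log\log m$ factor per round, I would plug this randomized $\mathrm{GT}$ oracle into the noisy binary search framework of Feige, Raghavan, Peleg, and Upfal, which locates a target among $m$ positions using only $O(\log m)$ comparisons even when every comparison is answered correctly with only constant probability. The total cost is therefore $O(\log m)$ rounds times $O(\log \log m)$ bits per round, i.e., $O(\log m \cdot \log \log m)$. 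The main conceptual hurdle is precisely this randomized step, since a black-box union bound over $O(\log m)$ ordinary binary-search rounds is not tight enough; the noisy-binary-search analysis is the right tool to avoid the extra logarithmic factor, and it transplants cleanly into the communication model because randomness can be shared publicly (or exchanged in a short preamble).
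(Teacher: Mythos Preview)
Your deterministic argument is exactly the paper's proof: binary search on the index, each comparison implemented by sending an $O(\log m)$-bit integer, for a total of $O(\log^2 m)$ bits.

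For the randomized bound you diverge from the paper, and your route has two issues. First, the paper's argument is simpler than you assume: it invokes the fact that the randomized complexity of greater-than on $k$-bit integers is $O(\log k)$ \emph{already with error probability $1/k^c$ for any constant $c$} (this is the bound cited from Nisan). With $k=\log m$ and $c=2$, each midpoint comparison costs $O(\log\log m)$ bits with error $1/(\log m)^2$, and a plain union bound over the $O(\log m)$ rounds of ordinary binary search keeps the total error below $1/\log m$. So there is no extra $\log\log m$ factor to avoid: your premise that ``amplifying each call to inverse-polylogarithmic error would cost an extra $\log\log m$ factor per round'' holds only for naive repeat-and-majority amplification, not for the actual $\mathrm{GT}$ protocol, whose error--communication tradeoff is $O(\log(k/\varepsilon))$ rather than $O(\log k\cdot\log(1/\varepsilon))$.

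Second, the noisy-binary-search detour you propose for the \emph{outer} search does not obviously apply. The Feige--Raghavan--Peleg--Upfal analysis locates a single fixed target: a query at position $c$ is a noisy version of ``is $c$ below the target?'', and the random-walk argument relies on every correct answer moving you toward that one leaf. In the general $crossing(m)$ problem the difference $x_c-y_c$ need not be monotone in $c$, so there can be many crossings and the sign at $c$ is not a consistent comparison against any fixed index. Once a noisy answer breaks the invariant $x_a\le y_a$, $x_b\ge y_b$, even subsequent correct answers need not steer the search back toward a valid crossing, and the standard analysis no longer goes through. You would need an additional argument (or a noisy-search result tailored to sign-change problems with multiple solutions) to make this work; the paper's union-bound route sidesteps the issue entirely.
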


\begin{theorem}
	The (deterministic) communication complexity of the $mon-crossing(m)$ problem is $O(\log(m))$.
\end{theorem}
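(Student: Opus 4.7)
The plan is to give a deterministic protocol that does simultaneous binary search on both the crossing index $i^*$ and a companion ``crossing value'' $v^*$, using only $2$ bits per round to halve one of two intervals, for a total of $O(\log m)$ bits.

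First, I would observe that there exists an admissible crossing value $v^* \in [\max(x_{i^*-1}, y_{i^*}),\ \min(x_{i^*}, y_{i^*-1})]$. This interval is nonempty because $x_{i^*-1} \le y_{i^*-1}$ and $y_{i^*} \le x_{i^*}$ by the defining properties of $i^*$. The protocol would maintain public intervals $[l_i, r_i] \ni i^*$ and $[l_v, r_v] \ni v^*$, both initialized to $[0, m]$ using the boundary conditions $x_0 = y_m = 0$ and $x_m = y_0 = m$.

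In each round set $m_i = \lfloor (l_i + r_i)/2 \rfloor$ and $m_v = \lfloor (l_v + r_v)/2 \rfloor$. Alice sends $a = \mathbf{1}[x_{m_i} \ge m_v]$ and Bob sends $b = \mathbf{1}[y_{m_i} \ge m_v]$. If $a \ne b$, the sign of $f(m_i) := x_{m_i} - y_{m_i}$ is determined directly: $a=1, b=0$ forces $x_{m_i} \ge m_v > y_{m_i}$ so we update $r_i \gets m_i$, while $a=0, b=1$ updates $l_i \gets m_i$. If $a = b$, we instead halve $[l_v, r_v]$. The key monotonicity sandwich, checked by splitting on whether $i^* \le m_i$ or $i^* > m_i$, is that when $i^* > m_i$ one has $x_{m_i} \le x_{i^*-1} \le v^* \le y_{i^*-1} \le y_{m_i}$, and when $i^* \le m_i$ one has $y_{m_i} \le y_{i^*} \le v^* \le x_{i^*} \le x_{m_i}$. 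In both sub-cases, $a=b=1$ forces $v^* \ge m_v$ (update $l_v \gets m_v$) and $a=b=0$ forces $v^* < m_v$ (update $r_v \gets m_v$).

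Each round expends exactly $2$ bits and halves one of $[l_i, r_i]$ and $[l_v, r_v]$. Since both intervals start with width $m$, each can be halved at most $\lceil \log m \rceil$ times, so the protocol terminates after at most $2\lceil \log m \rceil$ rounds with $r_i - l_i = 1$ and outputs $r_i$, for a total of at most $4\lceil \log m \rceil = O(\log m)$ bits. The main obstacle is the case analysis for the $a=b$ branch, which is what makes it possible to extract useful information from a ``tied'' comparison; getting the admissible $v^*$ right so that the monotonicity sandwich works in both sub-cases is the technical heart of the argument. A secondary issue is handling the boundary when one of the two intervals collapses to unit width before the other, which I would address by choosing an appropriate strict/non-strict mix of inequalities in $a$ and $b$ so that once $v^*$ is pinned down every subsequent round necessarily triggers the index-halving branch.
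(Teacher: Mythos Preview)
Your protocol is exactly the paper's: each round Alice and Bob each send one bit comparing their value at the midpoint index against the midpoint of the current value range, and depending on the four outcomes either the index range or the value range is halved. The paper packages this as a recursion on a two-parameter problem $mon\text{-}crossing(m,k)$, transforming the sequences (capping at $k/2$ or subtracting $k/2$) so that each recursive call is again a valid instance; you instead track a witness pair $(i^*,v^*)$ and your ``monotonicity sandwich'' is precisely the right observation for the $a=b$ branch and (implicitly, via contradiction) for the $a\neq b$ branch.

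The one real gap is the boundary case you flag, and your proposed fix does not close it. No strict/non-strict mix in $a,b$ can guarantee index-halving once $[l_v,r_v]$ has collapsed: if $x_{m_i}=y_{m_i}=m_v$ then $a=b$ under \emph{any} choice of strict versus non-strict, and even short of that, with (say) $a=\mathbf{1}[x_{m_i}>m_v]$ and $b=\mathbf{1}[y_{m_i}\ge m_v]$ the configuration $i^*\le m_i$, $y_{m_i}=v^*=m_v$, $x_{m_i}>v^*$ still gives $a=b=1$ with no progress. An easy repair is to send a ternary comparison ($<,=,>$) against $m_v$, still $O(1)$ bits each; when both report ``$=$'' you have $x_{m_i}=y_{m_i}$, and since the protocol in fact also maintains $x_{l_i}\le y_{l_i}$ and $x_{r_i}\ge y_{r_i}$, you may recurse into either half (switching witness if necessary --- the new witness value is forced to equal $m_v\in[l_v,r_v]$). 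The paper's sequence-transformation analysis sidesteps all of this by never committing to a particular witness.
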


In terms of total communication we have a logarithmic lower bound, so closing the gap for the general crossing problem remains an open problem.
Our main technical result is a lower bound that shows that even the monotone variant of the crossing problem cannot be solved using
a constant number of rounds of logarithmic many bits of communication each and thus is not in the ``easy'' class
described above.  Our proof uses ``round elimination'' lemmas from communication complexity \cite{mnsw95,ss08}.

\begin{theorem}
	Any (deterministic or randomized) protocol that solves $mon-crossing(m)$ using rounds of 
	communication of
	$t$-bits each (with $\log m \le t \le m$) requires $\Omega(\log m/ \log t)$ rounds of communication.
\end{theorem}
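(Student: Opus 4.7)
The plan is to establish this lower bound via a round-elimination argument in the style of \cite{mnsw95,ss08}, exploiting a natural self-reducibility of the monotone crossing problem.

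First I would set up an embedding of a multi-instance version of $mon$-$crossing$ into a single larger instance. Fix parameters $\ell$ and $m'$ with $\ell m' \le m$. In the multi-instance problem, Alice holds $\ell$ independent $mon$-$crossing(m')$ inputs $x^{(1)},\ldots,x^{(\ell)}$, and Bob holds an index $i^* \in [\ell]$ together with his side $y^{(i^*)}$ of the $i^*$-th instance; the goal is to output the crossing of the $i^*$-th sub-instance. The embedding concatenates the sub-instances into a single $mon$-$crossing(\ell m')$ input by placing block $i$ at positions $[(i-1)m', i m']$, with Alice shifting her values in block $i$ upward by $(i-1)m'$ so her global sequence remains non-decreasing. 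Using $(i^*, y^{(i^*)})$, Bob forms his global non-increasing sequence $Y$ by keeping $Y$ constantly at $i^* m'$ throughout all blocks preceding $i^*$, embedding a vertically shifted copy of $y^{(i^*)}$ inside block $i^*$ (so $Y$ decreases from $i^* m'$ down to $(i^*-1)m'$ across that block), and keeping $Y$ constantly at $(i^*-1)m'$ on all subsequent blocks. In any non-active block $i \neq i^*$ we then have either $X < Y$ throughout (when $i < i^*$) or $X > Y$ throughout (when $i > i^*$), so the global crossing must lie inside block $i^*$ and, by the vertical shifts, coincides with the crossing of $(x^{(i^*)}, y^{(i^*)})$. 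A symmetric embedding handles the case where Bob holds the list of $\ell$ sub-instances and Alice holds the active index.

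Next I would iterate the round-elimination step on a hard distribution in which the $\ell$ sub-instances are drawn i.i.d.\ from a distribution $\mu$ on $mon$-$crossing(m')$ and $i^*$ is uniform on $[\ell]$. If $mon$-$crossing(\ell m')$ admits a $k$-round protocol with $t$-bit messages and small constant error, Alice speaking first, then because Alice's $t$-bit message is independent of $i^*$, the round-elimination lemma of \cite{mnsw95,ss08} yields a $(k-1)$-round protocol for $mon$-$crossing(m')$ (Bob first) with error increased by $O(\sqrt{t/\ell})$. The symmetric embedding then eliminates Bob's new first round, and so on, alternating sides at each iteration. Choosing $\ell = \Theta(t^2)$ keeps the per-step error at $O(1/t)$, so after $k$ eliminations the accumulated error is still bounded by a constant less than $1/2$, while the size shrinks to $m/\ell^k = m/t^{\Theta(k)}$. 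Since any $0$-round protocol for $mon$-$crossing(m')$ with $m' \ge 2$ must err on $\mu$ with constant probability, we need $m/t^{\Theta(k)} = O(1)$, giving $k = \Omega(\log m/\log t)$.

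The main obstacle, as usual in round-elimination proofs, is exhibiting a product-structured hard distribution $\mu$ that (i) is preserved, up to the round-elimination error, under conditioning on each player's first message so that the induction can be continued, and (ii) makes the residual $0$-round problem on the shrunken instance genuinely hard. The quantitative choice $\ell = \Theta(t^2)$ is forced by balancing the $O(\sqrt{t/\ell})$ error per step against the $\Omega(\log m/\log t)$ iterations of elimination that must be carried out; smaller $\ell$ breaks the error accumulation and larger $\ell$ prevents the size from shrinking enough. Because the round-elimination lemma applies at the level of information-theoretic protocols, the same argument handles both deterministic and randomized protocols simultaneously, matching the statement of the theorem.
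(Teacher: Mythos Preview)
Your approach is essentially the paper's: the same self-embedding of $P^\ell(mon\text{-}crossing(m'))$ into $mon\text{-}crossing(\ell m')$, followed by iterated round elimination via \cite{mnsw95,ss08}. The embedding you describe is correct (the paper uses the cruder padding $Y\equiv \ell m'$ before the active block and $Y\equiv 0$ after, but your tighter version works just as well), and the alternation of the symmetric embedding to handle Bob-first rounds is exactly what is needed.

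There is, however, a genuine arithmetic slip in your parameter choice. With $\ell=\Theta(t^{2})$ the per-step error is $\sqrt{t/\ell}=\Theta(t^{-1/2})$, not $O(1/t)$ as you write. Over $k=\Theta(\log m/\log t)$ eliminations the accumulated error is then $\Theta\bigl((\log m/\log t)\cdot t^{-1/2}\bigr)$, which is \emph{unbounded} in the regime $t=\Theta(\log m)$ (it grows like $\sqrt{\log m}/\log\log m$), so the induction collapses precisely at the interesting end of the range $\log m\le t\le m$. The fix is easy: take $\ell=\Theta(t^{3})$, giving per-step error $O(1/t)$ and total error $O\bigl((\log m/\log t)/t\bigr)=O(1/\log t)$ once you use $t\ge\log m$. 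Alternatively, the paper parametrizes by the assumed number of rounds $r$ rather than by $t$: it sets $k=m^{1/r}$, so that after one elimination the residual size $m'=m^{(r-1)/r}$ satisfies $(m')^{1/(r-1)}=m^{1/r}$ and the same $k$ can be reused inductively; assuming $t<m^{1/r}/(100r^{4})$ then gives per-step error below $1/(10r^{2})$ and total error below $1/(10r)$. Either parametrization yields the claimed $\Omega(\log m/\log t)$ bound.

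Finally, your last-paragraph worry about exhibiting a product hard distribution is unnecessary here: the round-elimination lemma as quoted in the paper is stated directly for randomized protocols solving the relation, so no explicit distribution needs to be constructed in the application.
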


It turns out the the monotone crossing problem exactly captures the complexity of finding a 
connected equitable allocation between $n=2$ players, while the general crossing problem captures exactly the
complexity of finding a perfect allocation (with 2-cuts) between $n=2$ players. 
The monotone variant also
implies an upper bound for finding a connected envy-free allocation among $n=3$ players. 

\begin{corollary} 
	The deterministic communication complexity of finding an $\epsilon$-perfect allocation with two cuts between $n=2$ players is $O(\log^2 \epsilon^{-1})$. The randomized communication complexity of this problem is $O(\log \epsilon^{-1} \log\log \epsilon^{-1})$.
\end{corollary}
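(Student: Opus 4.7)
The plan is to reduce the problem of finding an $\epsilon$-perfect two-cut allocation for $n=2$ players to the general $crossing(m)$ problem with $m = \Theta(\epsilon^{-1})$, and then to invoke the upper bounds on $crossing(m)$ already established in the theorem above. Recall that a two-cut allocation is specified by cut points $0 \le a \le b \le 1$, where one player receives the middle interval $[a,b]$ and the other receives the complement $[0,a] \cup [b,1]$. For $\epsilon$-perfection it is enough to ensure that $|v_k([a,b]) - 1/2| \le \epsilon$ for both $k \in \{1,2\}$, since the two complementary conditions follow automatically.

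I would set up the reduction as follows. Discretize the left cut to positions $a_i = i/m$ for $i \in \{0, 1, \ldots, m\}$, with the hidden constant in $m = \Theta(\epsilon^{-1})$ chosen (based on the density bound) so that shifting a cut by one step changes either valuation by at most $\epsilon$. For each $i$, Alice computes the integer $x_i \in \{0, \ldots, m\}$ equal to the smallest $j \ge i$ for which $v_1([a_i, a_j]) \ge 1/2$, and sets $x_i = m$ if no such $j$ exists; Bob computes $y_i$ analogously from $v_2$. Both sequences are monotone non-decreasing. Without loss of generality $x_0 \le y_0$, so that Alice's half-mark lies to the left of Bob's; on the other hand, for the smallest index $i^*$ with $x_{i^*} = m$ we have $x_{i^*} = m \ge y_{i^*}$. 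After trivial padding out to length $m$, the hypotheses of the general crossing problem are satisfied, and the earlier upper bound theorem produces a crossing index $i$ using $O(\log^2 m) = O(\log^2 \epsilon^{-1})$ deterministic bits, respectively $O(\log m \log\log m) = O(\log \epsilon^{-1} \log\log \epsilon^{-1})$ randomized bits.

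From a crossing index $i$ I would output the allocation with cuts $a = a_{i-1}$ and $b = x_{i-1}/m$ (or its mirror image, depending on which of the two crossing patterns occurred). The crossing condition pins the true simultaneous half-mark $(a^*, b^*)$, satisfying $v_1([a^*,b^*]) = v_2([a^*,b^*]) = 1/2$, into the small rectangle $[a_{i-1}, a_i] \times [x_{i-1}/m, y_{i-1}/m]$, and by the density bound this forces $|v_k([a,b]) - 1/2| = O(1/m) \le \epsilon$ for $k \in \{1,2\}$, yielding the required $\epsilon$-perfect allocation. The main technical point is this error analysis converting a purely discrete crossing into a genuinely $\epsilon$-perfect continuous allocation, together with the careful treatment of the boundary behaviour once $a$ passes a player's half-mark (handled by the padding convention $x_i = m$); both are resolved by the density bound and the choice of discretization, so the qualitative reduction is clean and the complexity bounds follow directly from the crossing upper bound theorem.
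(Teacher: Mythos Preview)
Your overall plan matches the paper's: reduce $\epsilon$-perfect two-cut allocation to the general $crossing(m)$ problem with $m=\Theta(\epsilon^{-1})$, then invoke the $O(\log^2 m)$ deterministic and $O(\log m\log\log m)$ randomized upper bounds. The construction of the crossing instance (discretize the left cut, let each player report the discretized right endpoint that makes the middle worth about $1/2$) is also essentially the paper's.

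The gap is in the decoding step. Your proposed output $(a_{i-1},\,x_{i-1}/m)$ together with the ``small rectangle plus density bound'' justification does not work. First, the rectangle you name is incorrect: with the crossing $x_{i-1}\le y_{i-1}$ and $x_i\ge y_i$, Bob's continuous half-mark is monotone in the left cut, so $b^*\ge y_{i-1}/m$, not $b^*\le y_{i-1}/m$; the true localization is $b^*\in[y_{i-1}/m,\,y_i/m]$. Second, and more importantly, even the correct $b$-interval need not be small in Lebesgue measure, so the density bound does not control $|v_2([a,b])-1/2|$. Concretely, take $v_1$ with density $2$ on $[0,\tfrac14]\cup[\tfrac34,1]$ and $0$ on $(\tfrac14,\tfrac34)$, and $v_2$ with density $2$ on $[\tfrac14,\tfrac34]$ and $0$ elsewhere (so $D=2$). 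Then $x_0/m\approx\tfrac14$, $x_1/m\approx\tfrac34$, while $y_0/m\approx y_1/m\approx\tfrac12$; the crossing is at $i=1$, your output is $(0,\tfrac14)$, and $v_2([0,\tfrac14])=0$, which is not $\epsilon$-perfect for any small $\epsilon$.

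The missing idea is that the error must be measured in the \emph{other} player's valuation, not via Lebesgue length. In the crossing direction $x_{i-1}\le y_{i-1}\le y_i\le x_i$, the right cut should be taken at a $y$-value (say $y_{i-1}/m$ or $y_i/m$), so that $v_2\approx 1/2$ holds by construction, and then one checks $v_1$ by observing that this $y$-value lies in $[x_{i-1}/m,\,x_i/m]$, an interval whose $v_1$-measure is $O(D/m)$ because it equals (up to $O(D/m)$) the $v_1$-mass of the single left step $[a_{i-1},a_i]$. This is exactly what the paper does (with the roles of $x$ and $y$ swapped according to the crossing direction), after one extra round in which the relevant $x_{i-1},x_i,y_{i-1},y_i$ are exchanged. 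With this corrected decoding your reduction goes through and the stated complexity bounds follow.
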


\begin{corollary}
	The deterministic communication complexity of finding a connected $\epsilon$-equitable allocation 
	between $n = 2$ players and a connected $\epsilon$-envy-free allocation 
	among $n = 3$ players is $O(\log \epsilon^{-1})$.
\end{corollary}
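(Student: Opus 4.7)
The plan is to derive both upper bounds by reducing each problem to $mon\text{-}crossing(m)$ with $m = \Theta(\epsilon^{-1})$, then invoking the preceding theorem, which solves $mon\text{-}crossing(m)$ using $O(\log m) = O(\log \epsilon^{-1})$ deterministic bits.

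For the two-player equitable case, a connected allocation is specified by a single cut $c \in [0,1]$, with Alice receiving $[0,c]$ and Bob receiving $[c,1]$; equitability demands $|v_1([0,c]) - v_2([c,1])| \le \epsilon$. I would subdivide $[0,1]$ into $m = \Theta(\epsilon^{-1})$ candidate cuts $c_i = i/m$ and run $mon\text{-}crossing(m)$ on Alice's sequence $x_i = \lfloor m\, v_1([0,c_i])\rfloor$ and Bob's sequence $y_i = \lceil m\, v_2([c_i,1])\rceil$. Since $v_1([0,c])$ is nondecreasing from $0$ to $1$ and $v_2([c,1])$ is nonincreasing from $1$ to $0$, both sequences satisfy the monotonicity requirements with the correct endpoints. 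The crossing index returned by the protocol identifies a cut $c_i$ at which $v_1([0,c_i])$ and $v_2([c_i,1])$ agree up to $O(1/m)$, and the bounded-density hypothesis then turns this into an $O(\epsilon)$ equitability guarantee for an appropriate choice of the implicit constant in $m$.

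For the three-player connected envy-free case, I would discretize Stromquist's moving-knife procedure. In Stromquist, a sword sweeps the cake at some position $c_1$ while each player $i$ keeps a knife at their own median $m_i(c_1)$ of $[c_1,1]$; the procedure stops at the first $c_1$ at which the holder of the middle knife (whose position we denote $m_{(2)}(c_1)$) finds $[0,c_1]$ at least as valuable as the middle piece $[c_1, m_{(2)}(c_1)]$. The underlying monotonicity is that $v_i([0,c_1])$ is nondecreasing in $c_1$ while $v_i([c_1,1])/2$ is nonincreasing, so each player's readiness to take the left piece flips exactly once, at the point where $v_i([0,c_1])$ crosses $1/3$. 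I would discretize $c_1$ into $m = \Theta(\epsilon^{-1})$ positions and encode the first-readiness event of the current middle-knife holder as a monotone crossing between two integer sequences that the three players can populate jointly with only $O(\log \epsilon^{-1})$ communication. The main obstacle is the bookkeeping around the middle-knife holder possibly changing with $c_1$: the reduction must ensure that the index reported by the crossing protocol corresponds to a legitimate Stromquist stopping configuration rather than to an artifact of a reordering of the three medians. I would handle this by tracking, at each discretized $c_1$, only the constant-size information needed to identify the middle player and test their readiness, with a final bounded-density argument upgrading Stromquist's exact envy-freeness to $\epsilon$-envy-freeness in the discretized protocol.
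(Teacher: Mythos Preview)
Your equitable reduction is essentially the paper's: both discretize the single cut point into $m=\Theta(\epsilon^{-1})$ positions and set up monotone integer sequences from the two players' left/right values, differing only in whether the rounding is done via $m$-simple valuations (the paper) or floor/ceiling (you). Either variant works and yields the same bound.

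Your envy-free reduction, however, has a genuine gap. The claim that ``each player's readiness to take the left piece flips exactly once, at the point where $v_i([0,c_1])$ crosses $1/3$'' is correct only for the player currently holding the \emph{middle} knife; a non-middle player's readiness depends on $m_{(2)}(c_1)$, which is determined by all three valuations and need not behave monotonically in $c_1$. You acknowledge that the middle-knife holder can change, but ``tracking, at each discretized $c_1$, the constant-size information needed to identify the middle player'' is not free: deciding who is the middle player at a given $c_1$ already requires comparing three $O(\log m)$-bit knife positions, and doing this along a binary search costs $\Theta(\log^2 m)$---exactly the generic moving-knife simulation bound, not $O(\log m)$. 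You never specify what the two integer sequences for the $mon\text{-}crossing$ instance are or which two parties hold them, so no actual reduction has been given.

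The paper's route (Lemma~\ref{lem:envy-free_to_crossing}) is quite different from discretizing Stromquist. After one round exchanging all three players' midpoints and $1/3$, $2/3$ marks, it fixes an ordering (say $\ell_A\le\ell_B\le\ell_C$) and branches on whether Bob and Carol both prefer the middle or both prefer the right piece of Alice's $(\ell_A,r_A)$ partition. In each branch it builds a \emph{two-party} monotone crossing instance between Alice and Bob alone---Alice's sequence encodes an approximate tie between two specified pieces from her viewpoint, Bob's sequence encodes an approximate tie from his---and Carol simply takes first pick at the end. The key idea you are missing is that the initial $O(\log\epsilon^{-1})$-bit exchange reduces the third player's role to a final choice, so the residual search really is a two-party $mon\text{-}crossing(m)$ instance.
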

	
\begin{corollary}
Any (deterministic or randomized) protocol for finding an $\epsilon$-perfect allocation with 2-cuts between $n=2$ players or for finding a connected $\epsilon$-equitable allocation between $n=2$ players using rounds of  communication of
$polylog(\epsilon^{-1})$-bits each requires $\Omega(\log \epsilon^{-1}/ \log \log \epsilon^{-1})$ rounds of communication.	
\end{corollary}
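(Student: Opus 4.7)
The plan is to transfer the round-elimination lower bound established for $mon\text{-}crossing(m)$ to both allocation problems via explicit reductions, and then to set the instance size appropriately so that the bound translates from $m$ into $\epsilon$.

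\emph{First, I would pin down the reductions.} The paragraph preceding the corollary asserts that $mon\text{-}crossing(m)$ ``exactly captures'' the connected equitable problem for $n=2$ and that $crossing(m)$ captures the perfect 2-cut problem for $n=2$; I would make these precise in the direction needed for a lower bound, i.e.\ by reducing $mon\text{-}crossing(m)$ to each allocation problem. Given an instance $((x_i),(y_i))$, I would build valuations $v_A, v_B$ on $[0,1]$ with bounded piecewise-constant densities, calibrated so that $v_A([0,i/m]) = x_i/m$ and $v_B([i/m,1]) = y_i/m$ for every $i$. A connected $\epsilon$-equitable cut at some point $t$ then falls in an interval $[(i-1)/m, i/m]$ and, for $m$ a sufficiently large polynomial in $\epsilon^{-1}$, forces the crossing conditions $x_{i-1} \le y_{i-1}$ and $x_i \ge y_i$. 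For the perfect 2-cut problem I would use the analogous encoding from the preceding $crossing$ correspondence; since $mon\text{-}crossing$ is a syntactic restriction of $crossing$, any lower bound on mon-crossing descends automatically to $crossing$, and hence to perfect allocation.

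\emph{Second, I would invoke round elimination.} Suppose a protocol solves one of the allocation problems in $R$ rounds of $t = polylog(\epsilon^{-1})$ bits each. Composed with the reduction above, which inflates each message by at most $O(\log m)$ bits, it yields a protocol for $mon\text{-}crossing(m)$ with $m = \mathrm{poly}(\epsilon^{-1})$ using $R$ rounds of $t' = t + O(\log m) = polylog(m)$ bits per round. Choosing $m$ so that $\log m \le t' \le m$, the round-elimination theorem for $mon\text{-}crossing$ forces
\[
R \;=\; \Omega\!\left(\frac{\log m}{\log t'}\right) \;=\; \Omega\!\left(\frac{\log \epsilon^{-1}}{\log\log \epsilon^{-1}}\right),
\]
which is the claimed bound. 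Randomization requires no extra work, as the underlying round-elimination theorem is stated uniformly for deterministic and randomized protocols.

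\emph{The main obstacle} will be calibrating the encoding so that approximate fairness genuinely witnesses a crossing: because the allocation problem allows $\epsilon$-slack while $mon\text{-}crossing$ is exact, I must scale $m$ sufficiently above $\epsilon^{-1}$ that each per-step value increment $(x_i - x_{i-1})/m$ dominates $\epsilon$, ensuring that every $\epsilon$-equitable (or $\epsilon$-perfect) cut unambiguously selects a single interval $[(i-1)/m, i/m]$ whose endpoints satisfy the crossing conditions, while keeping $m$ polynomial in $\epsilon^{-1}$ so that $\log m = \Theta(\log \epsilon^{-1})$.
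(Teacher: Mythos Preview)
Your overall approach is exactly the paper's: the corollary follows by composing the reductions from $mon\text{-}crossing(m)$ to the two cake problems (Lemmas~\ref{lem:crossing_to_equitable} and~\ref{lem:crossing_to_perfect}, the latter combined with the observation that $mon\text{-}crossing$ is a special case of $crossing$) with the round-elimination lower bound for $mon\text{-}crossing(m)$ (Theorem~\ref{thm:bound_lower}), and then using $\epsilon^{-1}=\Theta(m^2)$ so that $\log m=\Theta(\log\epsilon^{-1})$ and $\log t=\Theta(\log\log\epsilon^{-1})$.

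One concrete issue with your sketch: the encoding $v_A([0,i/m])=x_i/m$ does \emph{not} give bounded density. In $mon\text{-}crossing(m)$ the increments $x_i-x_{i-1}$ can be as large as $m$, so the density of such a valuation on a single subinterval of length $1/m$ can be $m$. Since the communication model explicitly requires a density bound independent of $\epsilon$ (and your own Proposition on unbounded densities shows nothing works otherwise), this encoding is not admissible. The paper's reduction handles this by placing almost all the mass, with constant density, on $[0,1/3]\cup[2/3,1]$ and encoding the $x_i$'s in the middle third with total weight only $1/m$; this keeps the density uniformly at most $3$ while pushing the gap between ``crossing'' and ``not crossing'' down to order $1/m^2$, which is exactly why $\epsilon^{-1}=\Theta(m^2)$ rather than $\Theta(m)$. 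Your ``main obstacle'' paragraph correctly anticipates the slack-versus-increment calibration, but the bounded-density constraint is the reason the calibration has to be done in this squeezed way, and you should make that explicit. Also, the paper's reductions add \emph{no} communication at all, so there is no $O(\log m)$ inflation of messages to worry about.
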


In particular, this theorem implies nearly logarithmic lower bounds on the number of queries in the RW model 
for these two problems.
Comparing our result for equitability to the recent lower bound of $\Omega(\log \epsilon^{-1}/\log\log \epsilon^{-1})$ by \cite{PW17} 
in the RW model,
our result has two advantages and one disadvantage. The first advantage is that the communication complexity model is 
significantly stronger,
and the second is that our lower bound applies also to randomized protocols (with public coins and two-sided error). The disadvantage
is that our lower bound is only for connected allocations.\footnote{Notice however that as mentioned above
if an arbitrary number of cuts is allowed (a finite number
that may depend on $\epsilon$) then the lower bound is false in a very strong sense.}

We conjecture that the $\epsilon$-equitable and $\epsilon$-perfect allocation problems remain of 
``medium'' complexity (rather than ``easy'') for any fixed constant
number of cuts (that does not depend on $\epsilon$).  
Similarly we conjecture that for $n=3$ players the envy-free problem with connected pieces is not ``easy''.

\vspace{0.1in}
\noindent
{\bf Hard:}
\vspace{0.1in}

The final class of problems are ones where reasonable protocols are not known. Our general result above states that nearly
linear (in $\epsilon^{-1}$) many bits of communication suffice for essentially any fairness criteria.
The main open problem left by this paper is to prove a separation between the ``medium class'' and the ``hard class''.
I.e., to 
prove a super-logarithmic lower bound on the communication 
complexity of some fairness criteria.  This would in particular imply that such a problem
has no moving knife protocol for it (in a rather general class of moving knife procedures), a type of result that is not
known.  Natural candidates for such 
problems are those that have no known protocols for them:

\begin{enumerate}
	\item An $\epsilon$-perfect allocation among $n \ge 3$ players and any (bounded) number of cuts.
	\item A connected $\epsilon$-envy-free allocation among $n \ge 4$ players 
\end{enumerate}

\section{Model and Preliminaries} \label{sec:model}

We have a cake, represented by the interval $[0,1]$, and a set of players $N = \{1, \ldots, n\}$. Each player $i$ has a value density function $v_i : [0,1] \rightarrow \Re_+$. We denote the value of player $i$ for any interval $[a,b] \subseteq [0,1]$ by $v_i([a,b]) = \int_a^b v_i(x) \; dx$. The valuations are additive, so $v_i\left(\bigcup_{j=1}^m X_j\right) = \sum_{j=1}^m v_i(X_j)$ for any disjoint intervals $X_1, \ldots, X_m \subseteq [0,1]$. Note that, by definition, atoms are worth zero, and w.l.o.g. the valuations are normalized so that $v_i([0,1])=1$ for each player $i$.

\smallskip

A \emph{piece} of cake is a finite union of disjoint intervals. A piece is \emph{connected} if it's a single interval.
An \emph{allocation} $A = (A_1, \ldots, A_n)$ is a partition of the cake among the players, such that each player $i$ receives piece $A_i$, the pieces are
disjoint, and $\bigcup_{i \in N} A_i = [0,1]$.\footnote{Some works also consider the problem of fair division where cake can be discarded; see, e.g., \cite{CLPP13,HHA16}.}
An allocation $A$ has $C$ cuts if $C$ is the minimum number of cuts needed to demarcate $A$; e.g. the allocation $A = (A_1, A_2)$ where $A_1 = [0.3,0.5]$ and $A_2 = [0,0.3] \cup [0.5,1]$ has $2$ cuts, at $0.3$ and $0.5$.\footnote{The number of cuts used to demarcate the allocation $A$ would not increase if the piece $A_1 = [0.3,0.4]$ were expressed as $A_1 = [0.3,0.4] \cup [0.4,0.5]$.}

\smallskip

We assume that 
the valuations are private information. We will always have 
\begin{itemize}
\item Bounded density: For some fixed constant $D$ (that may depend on $n$), 
for all $0 \leq a < b \leq 1$, we have that $v_i([a,b]) \le D \cdot (b-a)$.  
\end{itemize}

A valuation $v_i$ is \emph{hungry} if it has positive density everywhere: $v_i(x) > 0$ for all $x \in [0,1]$.

\subsection{Fairness Notions} \label{sec:model_fairnessnotions}

Given player valuations $v_1 \ldots v_n$, our goal is to partition the cake
into pieces $A_1 \ldots A_n$ in a way that satisfies certain fairness criteria, at least approximately.
The following are some of the most commonly considered notions of fairness:

\begin{itemize}
	\item $\epsilon$-proportionality: For all $i$: $v_i(A_i) \ge 1/n - \epsilon$.
	\item $\epsilon$-envy-freeness: For all $i$ and $j$: $v_i(A_i) \ge v_i(A_j) - \epsilon$.
	\item $\epsilon$-equitability: For all $i$ and $j$: $|v_i(A_i) - v_j(A_j)| \le \epsilon$.
	\item $\epsilon$-perfection: For all $i$ and $j$: $|v_i(A_j) - 1/n| \le \epsilon$.
\end{itemize}

For every profile of valuations, it is possible to achieve each one of 
these fairness notions exactly, i.e. with $\epsilon=0$ (see, e.g., \cite{BT96,RW98,Pro13}). The question is how many cuts are needed. In the simplest case, we can give each participant
a connected piece of the cake (i.e. $A_i = [a_i,b_i]$, for all $i$ and some interval $[a_i,b_i]$), but this is not always possible and for some notions of fairness the cake must be cut into a larger number of sub-intervals. Proportional cake cutting, envy-free cake cutting, and equitable cake cutting are each possible with  connected pieces (i.e. $n-1$ cuts), while perfect cake-cutting is possible
with $n(n-1)$ cuts (but not less). Clearly the more cuts do we allow a protocol
to make, the easier its job may become. 

The number of players will be fixed and arbitrary. The number of cuts for any allocation that a protocol can output will be bounded by a constant which may depend on the number of players.
Finally, some of our results will be about general (abstract) notions of fairness, which encompass the four notions above. We will require from a fairness notion that it admits an approximate version as follows.

\begin{definition}[Abstract fairness]
	An abstract fairness notion $\mathcal{F}$ must satisfy the condition:
	\begin{itemize}
		\item Let $A = (A_1, \ldots, A_n)$ be any allocation that is $\mathcal{F}$-fair with respect to some valuations $\vec{v} = (v_1 \ldots v_n)$. Then for any $\epsilon > 0$ and valuations $\vec{v}' = (v_1' \ldots v_n')$, if $|v_i(A_j) - v_i'(A_j)| \leq \epsilon$ $\forall i,j \in N$, the allocation $A$ is also $\epsilon$-$\mathcal{F}$-fair with respect to valuations $\vec{v}'$.
	\end{itemize}
\end{definition}

\subsection{Communication Complexity}

The players will embark on a communication protocol. Each player $i$ starts by knowing its own valuation $v_i$ and the communication task is to find an allocation that is (approximately) fair from the point of view of all the players.
We will count not only bits of communication 
but also the number of rounds of interaction.
Some of our protocols are randomized (with public coins) and need to 
compute the desired output with a specified probability of 
error. 

Fair allocation problems are relations and their communication complexity will be defined next 
(see \cite{KN96}, chapters 3 and 5).   Each party knows its own valuation function. For randomized protocols, each party is also allowed to flip a random coin; that is, each party has access to a private random string of some arbitrary length, where the strings are chosen independently according to some probability distribution. For example, in the case of two parties, Alice and Bob, holding valuations $v_{Alice}$ and $v_{Bob}$, and random strings $r_A$ and $r_B$ of arbitrary lengths, each combination of $v_{Alice}, v_{Bob}, r_A, r_B$ determines a leaf in the protocol tree, which is marked with some allocation $A$. The allocation $A$ is defined as the output of the protocol on input $(v_{Alice}, v_{Bob})$.

\begin{definition}[Deterministic and randomized communication complexity]
	For any notion of fairness $\mathcal{F}$,
	a protocol $\mathcal{P}$ computes an allocation that is fair according to $\mathcal{F}$ if for every profile of valuations $v_1 \ldots v_n$, the protocol reaches a leaf marked by an allocation $A$ that is $\mathcal{F}$-fair with respect to the valuations $v_1 \ldots v_n$. 
	The \emph{deterministic communication complexity} of the fairness notion $\mathcal{F}$, denoted $D(\mathcal{F})$, is the number of bits sent on the worst case input by the best protocol that computes $\mathcal{F}$-fair allocations. 
	
	The \emph{randomized $\epsilon$-error communication complexity} of a fairness notion $\mathcal{F}$, $R_{\epsilon}(\mathcal{F})$, is the worst case number of bits sent by the best randomized protocol that computes $\mathcal{F}$-fair allocations with probability $1-\epsilon$. The error probability is taken over the random choices of the protocol on the worst case input.
\end{definition}

In the definition of randomized protocols above, each party has its own random coins to flip; this is referred to as the ``private'' coin model. It is also possible to allow the parties to have a ``public'' coin, where all of them can see the results of a single series of random coin flips.

\smallskip
We will always require that our protocols produce allocations with at most a bounded number of
cuts $C$, where $C$ may depend on $n$, but not on $\epsilon$.

\begin{definition}
	A communication protocol is a 
	$[t,r]$-protocol if it has $r$ rounds of communication each using at most $t$ bits of communication.  
\end{definition}

\subsection{Why Bound the Density?}
	
	The bounded density requirement is
	the analogue of having no atoms.  The necessity of bounding the density can be demonstrated by the following impossibility result that applies not only to communication protocols but also 
	to any type of protocol that uses any type 
	of queries that have a finite number of possible answers  
	(in contrast to the usual RW protocols in which answers to queries are real  
	numbers). 
	
	\begin{definition} \label{def:reasonable}
		A cake cutting problem is called reasonable if whenever all players have all their weight 
		on some sub-interval $[x,y]$, i.e. $v_1([x,y])=v_2([x,y])= \cdots = v_n([x,y]) = 1$, then the 
		allocation must
		cut the cake somewhere within this interval (and perhaps at finitely many other locations as well).
	\end{definition}
	
	Notice that all studied notions of approximate fairness are reasonable in this sense. For concreteness
	think about the 
	simplest challenge of $(\epsilon=1/3)$-approximate proportional cake 
	cutting between $n=2$ players.
	
	\begin{lemma} \label{lem:density_bound}
		A finite query protocol that always makes only a finite number of cuts
		cannot solve any reasonable cake cutting challenge 
		in time that is bounded by any function of $\epsilon$ and $n$ on all possibly-unbounded-density 
		valuations.
	\end{lemma}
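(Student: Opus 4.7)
The plan is a pigeonhole argument that exploits two separate finiteness constraints: each query returns one of only finitely many answers, and each output allocation contains only finitely many cuts. Suppose for contradiction that some protocol $\mathcal{P}$ solves a reasonable cake cutting challenge using at most $T = f(\epsilon,n)$ queries, where each query returns one of at most $k$ answers, and every output allocation uses at most $C$ cuts. The protocol's decision tree then has at most $k^T$ leaves, so $\mathcal{P}$ produces at most $k^T$ distinct output allocations across all valuation profiles. Consequently, the set $S$ of all cut positions that appear in some reachable output allocation is finite, with $|S| \leq C k^T$.

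Next I would construct a family of adversarial valuations concentrated at a location that $\mathcal{P}$ cannot reach. Partition $[0,1]$ into $N = Ck^T + 1$ disjoint sub-intervals $I_1, \ldots, I_N$ of equal length. Since $|S| < N$, some interval $I_{j^*}$ contains no point of $S$. Now define the valuation profile in which every player places all of their mass uniformly on $I_{j^*}$, i.e., $v_i(x) = 1/|I_{j^*}|$ for $x \in I_{j^*}$ and $v_i(x) = 0$ otherwise. This profile is atom-less and normalized, and its density, while large, is unrestricted by hypothesis. By the reasonable-problem assumption (Definition~\ref{def:reasonable}), the allocation $\mathcal{P}$ produces on this profile must cut somewhere inside $I_{j^*}$.

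However, running $\mathcal{P}$ on this profile leads to some leaf of the decision tree whose labeling allocation has all of its cut positions in $S$, and by construction $S$ avoids $I_{j^*}$. This contradicts the requirement that the output cut inside $I_{j^*}$, so no such bounded protocol can exist.

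The step I expect to carry the most subtlety is cleanly handling the randomized case. For a randomized protocol I would take $S$ to be the union of cut positions across all leaves of the protocol tree reachable with positive probability on any valuation profile; this set remains finite (bounded by $C$ times the number of leaves, which is still only singly exponential in the total communication, including any shared random string of length bounded by $T$), so the same pigeonhole applies, with only a slightly larger $N$. A minor additional point is verifying that each standard fairness notion is reasonable in the sense of Definition~\ref{def:reasonable}: if the entire mass of every player lies on $[x,y]$ and no cut falls inside $[x,y]$, then one player receives value $1$ and every other player receives value $0$, which violates $\epsilon$-proportionality, $\epsilon$-envy-freeness, $\epsilon$-equitability, and $\epsilon$-perfection for any $\epsilon < 1/2$.
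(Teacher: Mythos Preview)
Your proposal is correct and follows essentially the same pigeonhole argument as the paper: bound the total number of cut positions appearing across all leaves of the protocol tree, then concentrate all players' mass on a sub-interval avoiding every such position to contradict reasonableness. The only minor difference is that the paper allows each query to have a possibly different (finite) number of answers and invokes K\"onig's lemma to conclude the tree is finite, whereas you assume a uniform bound $k$; your extra remarks on the randomized case and on checking that the standard fairness notions are reasonable are not in the paper's proof but are harmless additions.
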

	\begin{proof}
		Assume by way of contradiction that the number of queries 
		made by the protocol is finite (for some fixed $n$ and $\epsilon$), i.e. the ``computation tree" of the
		protocol has finite depth. Since every query in the tree can have only a finite number of answers,
		every node in the tree has a finite number of children, and so by Konig's lemma, the whole
		computation tree is finite.  Since the protocol always makes only a finite number of cuts, the
		total number of locations $x$ at which the protocol ever makes a cut is finite -- call this number $M$.
		
		Now if we consider the possible input valuation profiles where all players have their valuation 
		concentrated on the sub-interval $[x,x+\delta]$ then, since the challenge is 
		reasonable, the protocol is required to be able to output a cut in every such interval.  But when 
		$\delta<1/M$, more than $M$ such possible answers are required.
	\end{proof}

\subsection{Why Bound the Number of Pieces?}

It turns out that if a protocol is not bounded in the number of pieces that
it can produce, then even $\epsilon$-perfect cake-cutting becomes very easy.

\medskip

\begin{lemma} \label{lem:num_pieces_bound}
	There exists a randomized public-coin protocol that produces
	an $\epsilon$-perfect allocation among any constant $n$ number of players 
	using $C=O(\epsilon^{-2})$ pieces with high probability without any communication.  
\end{lemma}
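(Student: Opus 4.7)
The plan is to use the public random string to carve the cake into a large number of equal sub-intervals and then independently assign each sub-interval to one of the $n$ players uniformly at random. Since the randomness is public, every party (and the mediator) reads off the same allocation from the shared string with no communication whatsoever. The only thing left to verify is that, with high probability over the public coins, this allocation is $\epsilon$-perfect. Concretely, I would divide $[0,1]$ into $k$ intervals $I_1, \ldots, I_k$ each of length $1/k$, for some $k = \Theta(\epsilon^{-2})$ whose constant depends on $n$ and on the density bound $D$, and associate to each $I_\ell$ an independent uniformly random label $L_\ell \in \{1, \ldots, n\}$ drawn from the public coins. Player $j$'s piece is then $A_j = \bigcup_{\ell : L_\ell = j} I_\ell$, which uses at most $k = O(\epsilon^{-2})$ sub-intervals in total, hence at most $O(\epsilon^{-2})$ cuts.

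Next, I would fix a pair $(i,j)$ and analyze $v_i(A_j) = \sum_{\ell=1}^{k} \mathbf{1}[L_\ell = j]\, v_i(I_\ell)$. Each summand is an independent bounded random variable in $[0, v_i(I_\ell)]$, and the crucial use of the bounded-density assumption is that $v_i(I_\ell) \le D/k$. The expectation is exactly $(1/n)\sum_\ell v_i(I_\ell) = 1/n$ by additivity of $v_i$ and the normalization $v_i([0,1])=1$. Hoeffding's inequality then gives
\[
\Pr\bigl[\, |v_i(A_j) - 1/n| > \epsilon \,\bigr] \le 2\exp\!\left(-\frac{2\epsilon^2}{\sum_\ell (D/k)^2}\right) = 2\exp\!\left(-\frac{2\epsilon^2 k}{D^2}\right).
\]
Choosing $k = \lceil C D^2 \epsilon^{-2}\rceil$ for a sufficiently large constant $C$ makes the right-hand side smaller than any desired $1/\mathrm{poly}(n)$; a union bound over the $n^2$ pairs $(i,j)$ shows that simultaneously $|v_i(A_j) - 1/n| \le \epsilon$ for all pairs with probability at least $1 - o(1)$, i.e., the output is $\epsilon$-perfect with high probability. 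Since $n$ (and $D$) are treated as constants, $k = O(\epsilon^{-2})$ suffices.

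There is no real obstacle here beyond invoking the correct concentration bound; the only subtlety is that one must use the bounded-density hypothesis explicitly, both to control the per-coordinate range in Hoeffding (which is where $D/k$ enters) and to ensure that the expectation of $v_i(A_j)$ is cleanly $1/n$ independent of how the mass of $v_i$ is distributed across the $I_\ell$. Finally, $\epsilon$-perfection implies the other standard fairness notions with comparable loss, so the same construction simultaneously certifies the statement for proportionality, envy-freeness, and equitability, matching the more general statement in Proposition~2 of the introduction.
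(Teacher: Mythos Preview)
Your proof is correct and follows essentially the same approach as the paper: split into $O(\epsilon^{-2})$ equal-length sub-intervals, assign each uniformly at random via public coins, and use concentration (you invoke Hoeffding explicitly, the paper just appeals to the standard deviation being $O(\epsilon)$) together with a union bound over the $n^2$ pairs. One tiny quibble: the expectation $\mathbb{E}[v_i(A_j)] = 1/n$ follows from additivity and normalization alone and does not actually require bounded density---that hypothesis is only needed for the concentration step, as you correctly use it there.
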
  
\begin{proof}
	The unit interval is pre-split
	to $O(\epsilon^{-2})$ intervals of equals lengths (the hidden constant in the big-O notation depends polynomially on $n$). 
	The public 
	random coins are used to allocate each interval to one of the 
	$n$ players at random.  With high probability each player
	will get a total weight of within a few standard deviations from $1/n$. The bounded density assumption implies that
	each sub-interval has $O(\epsilon^{-2})$ weight for each of the players which implies that the standard deviation here is $O(\epsilon)$.  Thus the allocation specified by the public random coins is $\epsilon$-perfect with high probability.
\end{proof}

While this protocol uses public coins, the standard simulation 
of \cite{Newman91}
implies a simultaneous private-coin protocol 
with $O(\log\epsilon^{-1})$ bits of communication.

\medskip

\begin{open} \label{open:det_perfect_sim}
	What is the deterministic simultaneous communication complexity of finding a perfect allocation
	with $C=poly(\epsilon^{-1})$ cuts?
\end{open}

\section{Basic Simulations}

We provide several simulations, which use the observation that valuations with bounded density can be approximated by simple piecewise-constant-density valuations.

\begin{definition} \label{def:simple_valuations}
	A valuation $v'$ is called $m$-simple if (a) for every integer $0 \le k < m$, we have that
	$v'([k/m, (k+1)/m])$ is itself an integer multiple of $1/m$ and (b) for every $0 < x < 1$ we have that 
	$v'([k/m,(k+x)/m]) = x \cdot v'([k/m, (k+1)/m])$.
\end{definition}

Initially each player will approximate its valuation function
by a simple one.

\begin{lemma} \label{simple}
	For any valuation $v$, there exists an $m$-simple valuation $v'$ such that for every sub-interval $[x,y]$ 
	we have that
	$|v([x,y])-v'([x,y])| \le \epsilon$, where $m=O(\epsilon)$.
\end{lemma}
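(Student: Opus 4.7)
The plan is to partition $[0,1]$ into the $m$ cells $I_k = [k/m, (k+1)/m]$ and construct $v'$ by declaring it to have constant density on each $I_k$, with the total mass $v'(I_k)$ chosen as a non-negative integer multiple of $1/m$. The issue is that naive rounding of each $v(I_k)$ to the nearest multiple of $1/m$ loses $1/m$ per cell and could accumulate to total error $\Theta(1)$ on long sub-intervals; the remedy is \emph{cumulative} rounding so that the errors telescope.

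Concretely, let $S_j = v([0, j/m])$ for $j = 0, 1, \ldots, m$, so $0 = S_0 \le S_1 \le \cdots \le S_m = 1$. Define $T_j = \lfloor m S_j \rfloor / m$; then $T_j$ is a non-decreasing sequence of multiples of $1/m$ with $T_0 = 0$, $T_m = 1$, and most importantly $0 \le S_j - T_j < 1/m$ for every $j$. Now set
\[
a_k \;=\; m\,(T_{k+1} - T_k) \in \mathbb{Z}_{\ge 0}, \qquad v'(I_k) \;=\; a_k/m,
\]
and extend $v'$ to have constant density $a_k$ on $I_k$. By construction $\sum_k a_k = m$, so $v'([0,1]) = 1$, and both parts of the $m$-simple definition are satisfied.

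To verify the approximation, take any sub-interval $[x,y]$ and let $k,l$ be the indices with $x \in I_k$ and $y \in I_l$. Split
\[
[x,y] = [x, (k+1)/m] \,\cup\, [(k+1)/m, l/m] \,\cup\, [l/m, y],
\]
when $k < l$ (the case $k = l$ is handled similarly with a single partial piece). For the middle piece, $v([(k+1)/m, l/m]) - v'([(k+1)/m, l/m]) = (S_l - S_{k+1}) - (T_l - T_{k+1})$, whose absolute value is bounded by $|S_l - T_l| + |S_{k+1} - T_{k+1}| \le 2/m$. For each of the two end pieces, which lives inside a single cell of length at most $1/m$, the bounded-density hypothesis gives $v \le D/m$, while $v' \le v'(I_k) = T_{k+1} - T_k \le (S_{k+1} - S_k) + 1/m \le (D+1)/m$, so the error there is at most $(2D+1)/m$. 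Summing, $|v([x,y]) - v'([x,y])| \le (4D + 4)/m$. Choosing $m = \lceil (4D+4)/\epsilon \rceil = O(\epsilon^{-1})$ yields the claimed bound.

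The main thing to be careful about is the non-negativity of $a_k$ (which is why I use the floor rather than rounding to the nearest) together with the boundary conditions $T_0 = 0$, $T_m = 1$ that guarantee $\sum a_k = m$; any different rounding rule must still produce a non-decreasing sequence of multiples of $1/m$ from $0$ to $1$ that tracks $S_j$ to within $O(1/m)$. Everything else is routine bookkeeping using the bounded density $D$, which is absorbed into the hidden constant in $m = O(\epsilon^{-1})$.
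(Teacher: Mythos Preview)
Your proof is correct and uses essentially the same idea as the paper: round the cumulative values $v([0,k/m])$ (rather than the per-cell masses) to nearby multiples of $1/m$ so that the rounding errors telescope on any sub-interval. The paper's version rounds up instead of down and is much terser, but your writeup is more careful---you explicitly verify non-negativity of the cell masses, the boundary conditions, and the final error bound using the bounded-density constant $D$.
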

\begin{proof}
	Let $m$ be the smallest integer greater than $2/\epsilon$ and define, for every integer $0 \le k \le m$, 
	that $v'([0,k/m]) = \lc v([0,k/m]) \rc_\epsilon$, where $\lc y \rc_\epsilon$ is the smallest 
	integer multiple of $\epsilon$ that is greater or equal to $y$. Then interpolate and
	for every $0 < x < 1$ set 
	$v'([k/m,(k+x)/m]) = x \cdot v'([k/m, (k+1)/m])$. The valuation $v'$ obtained has the required property.
\end{proof}

From this we can easily deduce the following general simulation, which in particular implies that bounded RW protocols can be simulated efficiently
in the communication model.

\begin{lemma} \label{lem:any_simultaneous}
	For any notion of fairness $\mathcal{F}$ that is guaranteed to exist and has at most a constant $C$ number of cuts, 
	an $\epsilon$-approximate version of this notion can be found by a simultaneous protocol that uses
	$O(\epsilon^{-1}\log\epsilon^{-1})$ bits of communication.
\end{lemma}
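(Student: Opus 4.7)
The plan is to discretize each valuation using Lemma~\ref{simple} and have each player simultaneously transmit the discretization to a referee, who then computes a fair allocation with respect to the discretized profile. The abstract fairness condition will ensure this allocation is $\epsilon$-fair for the true valuations.

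First, I apply Lemma~\ref{simple} with error parameter $\epsilon' = \epsilon/(C+1)$. Each player $i$ replaces its valuation $v_i$ by an $m$-simple valuation $v_i'$ with $m = O((C+1)/\epsilon) = O(\epsilon^{-1})$ (recall $n$ and $C$ are constants), satisfying $|v_i([x,y]) - v_i'([x,y])| \le \epsilon/(C+1)$ for every sub-interval $[x,y]$.

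Second, each player encodes $v_i'$ as a single message to a referee. By Definition~\ref{def:simple_valuations}, an $m$-simple valuation is determined by the $m$ values $v_i'([k/m, (k+1)/m])$ for $k=0,\dots,m-1$, each of which is an integer multiple of $1/m$ in $[0,1]$ and thus specifiable with $O(\log m)$ bits. So each player sends $O(m \log m) = O(\epsilon^{-1} \log \epsilon^{-1})$ bits, and since $n$ is constant the total communication of this simultaneous protocol is $O(\epsilon^{-1}\log \epsilon^{-1})$.

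Third, the referee, now in possession of the entire profile $\vec{v}' = (v_1',\dots,v_n')$, computes (without further communication) an $\mathcal{F}$-fair allocation $A = (A_1,\dots,A_n)$ with respect to $\vec{v}'$, using at most $C$ cuts. Such an allocation exists by the standing assumption on $\mathcal{F}$. Finally, I check fairness against the true $\vec{v}$: each piece $A_j$ is a union of at most $C+1$ intervals (since there are at most $C$ cuts in total). Applying the per-interval bound from Lemma~\ref{simple} to each of these sub-intervals and summing via additivity gives $|v_i(A_j) - v_i'(A_j)| \le (C+1)\cdot\epsilon/(C+1) = \epsilon$ for every $i,j$. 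The abstract fairness condition then yields that $A$ is $\epsilon$-$\mathcal{F}$-fair with respect to $\vec{v}$, as required. There is no real obstacle here beyond careful error bookkeeping: the only thing to be cautious about is scaling the per-interval approximation by the constant $C+1$ so that the cumulative error on each allocated piece does not exceed $\epsilon$.
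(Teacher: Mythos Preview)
Your proof is correct and follows essentially the same approach as the paper: approximate each valuation by an $m$-simple one with $m=O(C\cdot\epsilon^{-1})$, transmit the full description in $O(m\log m)$ bits, and invoke the existence guarantee plus the abstract fairness definition. Your version is simply more explicit about the per-piece error accounting (bounding the number of intervals in a piece by $C+1$), which the paper leaves implicit.
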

\begin{proof}
	Each player approximates its valuation by an $m$-simple one, as in Lemma \ref{simple}, for $m=O(C \cdot \epsilon^{-1})$, and sends the complete description
	of the simple approximation. The assured fair allocation for the approximating valuations is found, and by Lemma \ref{simple} (and the definition of abstract fairness) it is also $\epsilon$-fair
	for the original valuations.
\end{proof}

\subsection{Simulating the RW Model}

Cake cutting protocols usually operate in a query model attributed to Robertson and Webb \cite{RW98} (and explicitely stated by Woeginger and Sgall \cite{WS07}). In this model, the protocol communicates with the players using the following types of queries:
\begin{itemize}
	\item{}$\emph{\textbf{Cut}}_i(\alpha)$: Player $i$ cuts the cake at a point $y$ where $v_{i}(0,y) = \alpha$. The point $y$ becomes a {\em cut point}.
	\item{}$\emph{\textbf{Eval}}_i(y)$:  Player $i$ returns $v_{i}(0,y)$, where $y$ is a previously made cut point.
\end{itemize}
Note the questions and answers have infinite precision. 
Any RW protocol asks the players a sequence of cut and evaluate queries, at the end of which it outputs an allocation where the pieces are demarcated by cut points. 


We show a general simulation theorem for RW protocols as follows. First, we say that two valuations $v,v'$ are $\epsilon$-close if $|v([x,y]) - v'([x,y])| \leq \epsilon$ for each interval $[x,y]$.

\begin{lemma} \label{sim-rw}
	Let $\mathcal{M}$ be any protocol in the RW model which makes at most $r$ queries and produces an allocation of the cake with a number of cuts bounded by a constant $C$. Then there exists a communication protocol $\mathcal{M}_{\epsilon}$ with $r$ rounds of $O(\log \epsilon^{-1})$ bits of communication each, such that for every valuation tuple $v$, $\mathcal{M}_{\epsilon}(v) = \mathcal{M}(v')$, for a valuation tuple $v'$ that is $\epsilon/C$-close to $v$.
\end{lemma}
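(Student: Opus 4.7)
The plan is for each player to first discretize their valuation to an $m$-simple approximation (via Lemma \ref{simple}) with $m = \Theta(C/\epsilon)$, so that the resulting valuation $v'_i$ is $(\epsilon/C)$-close to $v_i$ in the sense required. The communication protocol $\mathcal{M}_\epsilon$ then faithfully simulates $\mathcal{M}$ on the profile $v' = (v'_1, \ldots, v'_n)$, with each of the $r$ queries of $\mathcal{M}$ corresponding to one round of $\mathcal{M}_\epsilon$. By construction the final output will equal $\mathcal{M}(v')$ for a $v'$ that is $(\epsilon/C)$-close to $v$, establishing exactly the relationship promised by the statement.

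The central task is to show that each query of $\mathcal{M}$ can be resolved using $O(\log m) = O(\log \epsilon^{-1})$ bits per round. For a Cut query $\mathbf{Cut}_i(\alpha)$, the answer $y$ satisfying $v'_i([0,y]) = \alpha$ lies in some sub-interval $[k/m, (k+1)/m]$, since $v'_i$ is $m$-simple and piecewise-linear on this grid. Player $i$ therefore transmits the index $k \in \{0,\dots,m-1\}$ together with the two integer multiples of $1/m$ that equal $v'_i([0,k/m])$ and $v'_i([0,(k+1)/m])$; everyone can then reconstruct $y$ by solving the linear equation on that sub-interval. For an Eval query on a previously made cut point $y$, player $i$ symmetrically transmits the relevant endpoint values of $v'_i$ on the grid sub-interval containing $y$, from which $v'_i([0,y])$ is determined. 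In either case the message fits in $O(\log m)$ bits, and one round of communication per query suffices.

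The main obstacle I anticipate is careful bookkeeping: although the $\alpha$-values chosen by $\mathcal{M}$ and the resulting cut points $y$ may be rationals whose denominators grow across rounds, we need that the \emph{incremental} information revealed each round stays bounded. The key observation justifying this is that all parties share the protocol's deterministic state and the grid values of each player's simple valuation disclosed so far, so they can carry along the exact symbolic expressions for all accumulated $\alpha$'s and $y$'s at no communication cost; only the newly queried grid data (an index $k$ and the two multiples of $1/m$ at the bracketing grid points) must be transmitted. Together with the observation that Lemma \ref{simple} yields $|v_i([x,y]) - v'_i([x,y])| \le \epsilon/C$ for every sub-interval, this is all that is needed: the simulation is exact on $v'$, so $\mathcal{M}_\epsilon(v) = \mathcal{M}(v')$ as required.
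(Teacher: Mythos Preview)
Your proposal is correct and follows essentially the same approach as the paper's proof: approximate each $v_i$ by an $m$-simple $v'_i$ via Lemma~\ref{simple} with $m=\Theta(C/\epsilon)$, then simulate each RW query on the $v'_i$'s by transmitting the relevant grid index $k$ and the two endpoint values $v'_i([0,k/m])$, $v'_i([0,(k+1)/m])$, each an integer multiple of $1/m$ and hence encodable in $O(\log m)$ bits. Your explicit discussion of the bookkeeping issue---that all parties can carry exact symbolic expressions for the accumulated cut points and $\alpha$'s so only the new grid data must be sent each round---is a helpful clarification the paper leaves implicit.
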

\begin{proof}
	Each player $i$ that was given a valuation $v_i$ will first approximate it by a $m$-simple 
	valuation $v'_i$ as shown in Lemma \ref{simple} with $m=O(C \cdot \epsilon^{-1})$, where $C$
	is the maximum number of intervals that the cake can be cut into by protocol $\mathcal{M}$. 
	The players will run the simulated protocol below on
	the $v'_i$'s and as Lemma \ref{simple} guarantees that for every sub-interval 
	$|v_i([x,y])-v'_i([x,y])| \le \epsilon/C$, and each player gets at most $C$ connected intervals, the lemma
	will follow.
	
	To run the RW protocol $\mathcal{M}$ on the approximate valuations $v'$, note that the answer 
	to any RW query on
	these types of ``simple" valuations requires only $O(\log m)=O(\log \epsilon ^{-1})$ bits of communication.
	Specifically, the exact real-valued answer to a query ``$v'_i([0,x])=?$'' with $k/m \le x < (k+1)/m$
	can be given by specifying
	the values $v'_i([0,k/m])$ and $v'_i([0,(k+1)/m])$ which are integer multiples of $1/m$ and thus require
	$O(\log m)$ bits of communication.  Similarly the answer to a query 
	``$v'_i([0,?])=x$'' can either be given by an integer $k$ such that $v'_i([0,k/m])=x$ if such
	an integer exists, or otherwise by specifying the single integer $k$ such that 
	$v'_i([0,k/m]) < x < v'_i([0,(k+1)/m])$ as well as these two values 
	themselves, still $O(\log m)$
	bits of communication.
	Thus the players can
	simulate the RW protocol using $r$ rounds of $O(\log \epsilon^{-1})$ bits per round, which implies a communication protocol $\mathcal{M}_{\epsilon}$ with the required properties.
\end{proof}

By the definition of the closeness of $v'$ to $v$ and relying on the bounded number of cuts, we immediately get:

\begin{theorem} \label{sim-rw-continuous}
	Let $\mathcal{M}$ be any protocol in the RW model that makes at most $r$ queries and outputs $\mathcal{F}$-fair allocations. Then there exists a communication protocol $\mathcal{M}_{\epsilon}$ with $r$ rounds of $O(\log \epsilon^{-1})$ bits of communication each, such that $\mathcal{M}_{\epsilon}$ outputs ($\epsilon$-$\mathcal{F}$)-fair allocations.
\end{theorem}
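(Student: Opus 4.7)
The plan is to derive the theorem directly from Lemma \ref{sim-rw} together with the abstract fairness condition from Section \ref{sec:model_fairnessnotions}; no new construction is needed beyond what Lemma \ref{sim-rw} already provides. Lemma \ref{sim-rw} gives a communication protocol $\mathcal{M}_{\epsilon}$ with exactly the round/bit complexity demanded in the statement ($r$ rounds of $O(\log \epsilon^{-1})$ bits each), whose output on any input tuple $v$ is $\mathcal{M}(v')$ for some $v'$ that is $(\epsilon/C)$-close to $v$, i.e.\ $|v_i([x,y]) - v'_i([x,y])| \leq \epsilon/C$ on every subinterval, where $C$ is the constant bound on the number of cuts produced by $\mathcal{M}$.

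Next, I would use the hypothesis that $\mathcal{M}$ outputs $\mathcal{F}$-fair allocations: setting $A = \mathcal{M}(v') = \mathcal{M}_{\epsilon}(v)$, the allocation $A$ is $\mathcal{F}$-fair with respect to the perturbed tuple $v'$. To translate this fairness back to the original valuations, I would bound the per-piece discrepancy. Because $A$ has at most $C$ cuts in total, every single piece $A_j$ is a union of at most $C$ disjoint intervals, so additivity of the valuations combined with the per-subinterval closeness yields
\[
|v_i(A_j) - v'_i(A_j)| \;\leq\; C \cdot \frac{\epsilon}{C} \;=\; \epsilon
\qquad \text{for every } i,j \in N.
\]

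Finally, I would invoke the abstract fairness condition with the roles of $v$ and $v'$ swapped: since $A$ is $\mathcal{F}$-fair with respect to $v'$ and $|v'_i(A_j) - v_i(A_j)| \leq \epsilon$ holds for all $i,j$, the same allocation $A$ is $\epsilon$-$\mathcal{F}$-fair with respect to $v$. Combined with the bit/round bounds inherited from Lemma \ref{sim-rw}, this gives exactly the statement of the theorem.

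The result is essentially a corollary of Lemma \ref{sim-rw}, so there is no serious obstacle; the only point that needs care is tracking the factor $C$ — which is precisely why Lemma \ref{sim-rw} guarantees $(\epsilon/C)$-closeness rather than merely $\epsilon$-closeness. This ensures the accumulated per-piece error, after summing the per-subinterval errors across the at most $C$ intervals of a piece, stays within the $\epsilon$ budget that the abstract fairness notion consumes.
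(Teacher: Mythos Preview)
Your proposal is correct and is exactly the argument the paper has in mind: the paper itself merely states that the theorem follows ``by the definition of the closeness of $v'$ to $v$ and relying on the bounded number of cuts,'' and you have spelled out precisely those two steps (bounding $|v_i(A_j)-v'_i(A_j)|$ via the $C$-cut bound, then invoking the abstract fairness condition). There is nothing to add.
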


Applying the known bounded protocols in the RW model for computing proportional and envy-free allocations \cite{EP84,AM16} we get:

\begin{corollary}\label{easy}
	For every $\epsilon > 0$,
	$\epsilon$-proportional cake cutting and $\epsilon$-envy-free cake cutting among any fixed number $n$ of players can be done with $O(1)$ rounds of $O(\log{\epsilon^{-1}})$ bits of communication.
\end{corollary}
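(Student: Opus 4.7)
The plan is to derive Corollary \ref{easy} as a direct instantiation of Theorem \ref{sim-rw-continuous}, applied to two well-known RW protocols: the Even--Paz protocol \cite{EP84} for proportional cake cutting, and the Aziz--Mackenzie protocol \cite{AM16} for envy-free cake cutting.

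First, I would check that both $\mathcal{F} =$ proportionality and $\mathcal{F} =$ envy-freeness satisfy the abstract fairness condition of Section \ref{sec:model_fairnessnotions}, so that Theorem \ref{sim-rw-continuous} applies. This is immediate: if $A$ is exactly proportional under $\vec{v}$ and $|v_i(A_j)-v_i'(A_j)|\le \epsilon$ for every $i,j$, then $v_i'(A_i)\ge v_i(A_i)-\epsilon\ge 1/n-\epsilon$, so $A$ is $\epsilon$-proportional under $\vec{v}'$. For envy-freeness, the same pointwise-closeness bound gives $v_i'(A_i)\ge v_i(A_i)-\epsilon \ge v_i(A_j)-\epsilon \ge v_i'(A_j)-2\epsilon$, which is $(2\epsilon)$-envy-freeness; this only changes the constant in front of $\epsilon$ and is absorbed in the simulation.

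Next, I would verify that for any fixed number of players $n$, both cited RW protocols satisfy the two hypotheses needed by Theorem \ref{sim-rw-continuous}, namely (i) a constant bound $r$ on the number of RW queries and (ii) a constant bound $C$ on the number of cuts in the output. The Even--Paz protocol uses $O(n\log n)$ Cut/Eval queries and returns a connected allocation with exactly $n-1$ cuts; for $n$ constant, this is $r=O(1)$ and $C=O(1)$. The Aziz--Mackenzie protocol terminates in a number of queries bounded by a (tower-type) function of $n$ alone, and likewise uses a number of cuts that depends only on $n$; again, for $n$ fixed, both $r$ and $C$ are constants.

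Plugging these bounds into Theorem \ref{sim-rw-continuous} yields, in each case, a communication protocol with $O(1)$ rounds, each of $O(\log \epsilon^{-1})$ bits, producing an $\epsilon$-$\mathcal{F}$-fair allocation (for the envy-free case, $\epsilon$ is replaced by $\epsilon/2$ to absorb the factor of $2$ mentioned above, which does not affect the asymptotic bit count). There is essentially no obstacle here beyond referring to the right existing protocols and noting that the abstract fairness condition is satisfied; the entire content of the corollary has already been done by Theorem \ref{sim-rw-continuous}.
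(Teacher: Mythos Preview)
Your proposal is correct and follows exactly the paper's approach: the corollary is obtained by plugging the Even--Paz and Aziz--Mackenzie RW protocols (each with a bounded number of queries and cuts for fixed $n$) into Theorem~\ref{sim-rw-continuous}. Your added verification of the abstract-fairness condition and the handling of the factor-of-two for envy-freeness are appropriate elaborations of what the paper leaves implicit.
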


We note that our simulation theorem falls short of approximately simulating arbitrary RW protocols for 
arbitrary problems since information can be lost when approximating by simple valuations.  This is demonstrated by the following example
for $n=2$ players.

\begin{quote}
	\begin{itemize}
		\item Ask Alice to cut the cake in half: $x = Cut_1(0.5)$. 
		
		\item If $x$ is irrational, Alice takes the whole cake. 
		
		\item Else, Bob gets its favorite piece in $\{[0,x], [x,1]\}$ and Alice the remainder.
	\end{itemize}
\end{quote}

Let $\epsilon > 0$. Consider hungry valuations $\vec{v} = (v_{Alice}, v_{Bob})$ such that Alice's midpoint is $x \in \Re \setminus \mathbb{Q}$. When running the protocol on valuations $\vec{v}$, Alice takes the whole cake.
However, when the players approximate their valuations by some simple valuations $\vec{v}'$, with $m = O(\epsilon^{-1})$ intervals, Alice's midpoint of the cake with respect to her new valuation $v_{Alice}'$ is a number $x' \in \mathbb{Q}$, since $1/2 \in \mathbb{Q}$, $m \in \mathbb{N}$, and her value for each interval $[i/m,(i+1)/m]$ is an integer multiple of $1/m$. Thus the only kind of execution observed on simple valuations is the computation of envy-free allocation between the two players.
\footnote{This type of discrepancy can also be observed for fair protocols, such as the RW implementation of the Dubins-Spanier protocol, where each player submits their $1/n$ mark (i.e. the point $x_i$ such that $v_i([0,x_i]) = 1/n$), then the player with the leftmost mark receives the piece $[0,x_i]$, and the protocol is repeated with the remaining $n-1$ players on the leftover cake. If a player ``trembles" when answering the cut queries, it can get pieces worth very different values when running the protocol on simple valuations compared to the original valuations. However, in this case, the allocation remains approximately proportional in both scenarios.}


\subsection{Simulating Moving Knife Procedures} \label{moving}

A family of protocols that has remained undefined until recently is that of moving knife procedures, 
which involve multiple knives sliding over the cake until some stopping condition is met (such as Austin's moving knife procedure for finding a perfect allocation between two players \cite{RW98}). In a companion paper \cite{BN17} we  formalized moving knife protocols that can be approximately
simulated in the RW model. Here we provide a more general definition of a class of protocols that can be approximately simulated with little
communication.  

An example of a moving knife protocol is Austin's procedure \cite{RW98}, which computes a perfect allocation between two players with two cuts (which is the minimum number of cuts required for perfect allocations) and operates as follows:
\medskip
\begin{quote}
	\textbf{Austin's procedure}:
	\emph{A referee slowly moves a knife from left to right across the cake. At any
		point, a player can call stop.
		When a player called, a second knife is placed at the left edge of the cake.
		The player that shouted stop -- say 1 -- then moves both knives parallel
		to each other.
		While the two knives are moving, player 2 can call stop at any time.
		After 2 called stop, a randomly selected player gets the portion between player
		1's knives, while the other one gets the two outside pieces.
	}
\end{quote}

The correctness is obtained by using the Intermediate Value Theorem. If at the beginning of the operation player 2 values the piece between the knives at less than 1/2, then his value for the remainder of the cake is more than 1/2. Then by the intermediate value theorem there exists a position of the sword such that player 2's value for the piece between the knives is exactly 1/2.
In general, moving knife protocols can have multiple knifes and multiple players that move them across the cake until some stopping condition is met.

The high level description of our formal model has an ordered collection of ``knives'' each of them controlled by a single player, who can place his
knife on the cake based on the locations of the previous knives in the collection as well as his own valuation.  Each player moves his knife continuously
as a function of ``the time'' and the locations of the previous knives, until some condition happens.  
The basic idea of the simulation is that (a) Given the position of previous
knives (and the current time), each player can announce the (approximate) position of his knife using a small ($O(\log \epsilon^{-1})$) 
amount of communication (b)
to find the (approximate) time where the condition happens, the players can engage in a simple binary search which takes $O(\log \epsilon^{-1})$
rounds (in each of which each player had to announce his location).

The details of the definition balance between ensuring that everything is smooth and robust enough so that
this general idea can work and allowing enough flexibility to capture a wide family of reasonable protocols including all known ones.
In particular, our definition allows multiple ``moving knife steps'' with ``simple'' and ``robust'' decisions between these steps, where all
known moving knife protocols require only a constant number of steps.  The exact definition appears in Appendix \ref{sec:appendix_mk}, and it
allows the following simulation statement.



\begin{theorem} \label{thm:mk_protocol_sim}
	Let $\mathcal{M}$ be any moving knife protocol that runs in constantly many steps and outputs $\mathcal{F}$-fair partitions.
	
	Then for every $\epsilon > 0$, there is a communication protocol $\mathcal{M}_{\epsilon}$ that uses 
	$O(\log \epsilon^{-1})$ rounds each with $O(\log \epsilon^{-1})$ bits of communication and
	outputs $\epsilon$-$\mathcal{F}$-fair partitions. 
\end{theorem}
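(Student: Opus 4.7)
The plan is to formalize the two-part strategy sketched in the paper: replace valuations by simple discretizations, then simulate each moving-knife step by a binary search on time.

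First, each player $i$ would replace its valuation $v_i$ with an $m$-simple approximation $v_i'$ via Lemma \ref{simple}, for $m = \Theta(\epsilon^{-1})$, where the hidden constant absorbs the (constant) number of steps, the number of knives per step, and the bound $C$ on the number of cuts in the output allocation. By Lemma \ref{simple} we have $|v_i([x,y]) - v_i'([x,y])| \le \epsilon/C$ on every sub-interval, so by the abstract-fairness axiom any partition that is $\mathcal{F}$-fair with respect to $\vec{v}'$ is automatically $\epsilon$-$\mathcal{F}$-fair with respect to $\vec{v}$. It therefore suffices to simulate $\mathcal{M}$ on $\vec{v}'$ with an error small enough to be absorbed into the fairness slack.

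Next I would simulate a single moving-knife step. A step consists of an ordered list of knives, each controlled by a single player, whose position at time $t \in [0,1]$ is a continuous function of $t$ and of the positions of earlier knives, together with a stopping predicate of intermediate-value-theorem type: a continuous quantity whose sign change locates the desired configuration. For a candidate time $t$ taken on a $1/m$-granular grid, each controlling player can locally compute the exact location of its knife under $v_i'$ --- a multiple of $1/m$ --- and announce it in $O(\log m) = O(\log \epsilon^{-1})$ bits; sequencing the knife announcements within a single round, the entire configuration at time $t$ costs $O(\log \epsilon^{-1})$ bits, after which every player can locally evaluate the stopping predicate. The desired stopping time is then located by binary search on $t \in [0,1]$ at resolution $O(\epsilon)$, giving $O(\log \epsilon^{-1})$ rounds each of $O(\log \epsilon^{-1})$ bits; continuity together with the bounded-density assumption $D$ ensures that each player's value for each resulting piece is within $O(\epsilon)$ of its value at the true stopping time. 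Since there are only constantly many steps, these errors compound by at most a constant factor, and the final partition is $\mathcal{F}$-fair for some valuation tuple within $\epsilon/C$ of $\vec{v}'$, hence $\epsilon$-$\mathcal{F}$-fair for $\vec{v}$.

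The main obstacle I anticipate is justifying that the binary search faithfully locates a usable stopping configuration despite the double source of perturbation (time discretization and the replacement $v_i \to v_i'$). One needs the formal definition in Appendix \ref{sec:appendix_mk} to impose enough smoothness and robustness that an approximate zero-crossing of the stopping predicate still corresponds to an exact stopping configuration of $\mathcal{M}$ on some nearby valuations; this is precisely the role of the ``smooth and robust'' clauses in the definition, which must be designed so that they are broad enough to capture all known moving-knife procedures while narrow enough to guarantee this stability. Once this robustness is in hand, the round and bit counts follow directly from the constant-number-of-steps assumption and the standard binary-search analysis.
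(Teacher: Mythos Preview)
Your proposal captures the core binary-search-on-time idea correctly, but the preprocessing and error control differ from the paper in a way that leaves a real gap.

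The paper does \emph{not} pass to $m$-simple valuations.  Instead it first perturbs each $v_i$ to be \emph{hungry} (strictly positive density), via $v_i' = (1-\epsilon/2)v_i + \epsilon/2$, because the Lipschitz continuity of the device functions $F_j$ in Definition~\ref{def:knife} is only promised for hungry valuations.  It then simulates each moving-knife step directly on these hungry valuations: at a candidate time $t$ each player computes its device value $F_j(t,\tilde x_1,\ldots,\tilde x_{j-1})$ exactly but \emph{transmits} only an $O(\log\epsilon^{-1})$-bit approximation, and the Lipschitz bound is used inductively to control how the rounding error in earlier devices propagates to later ones.  Binary search on $t$ then locates a time where the trigger lies in $[-\epsilon,\epsilon]$, and finally the explicit \emph{robustness} clause in the definition of a moving-knife protocol converts $\epsilon$-outcomes of each step into an $\epsilon$-$\mathcal{F}$-fair allocation.

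Your route via $m$-simple valuations runs into two concrete problems.  First, an $m$-simple valuation may have density zero on entire sub-intervals, so it is not hungry, and the Lipschitz guarantee on the $F_j$ can fail outright (for instance, a knife defined by an inverse-value condition can jump discontinuously across a zero-density region).  Second, even on a simple valuation and a grid time, a general device function $F_j$ need not take a value that is a multiple of $1/m$; the $F_j$ are arbitrary Lipschitz functions of time and previous devices, not just cut/eval queries, so your ``exact multiple of $1/m$'' claim is unjustified at the stated level of generality.  The repair is precisely what the paper does: keep the valuations real-valued but hungry, communicate device values only approximately, bound error propagation via the Lipschitz constant, and invoke the robustness clause rather than trying to land exactly on an $\mathcal{F}$-fair partition for nearby valuations.
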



This implies that procedures such as the Austin \cite{Austin82} procedure, the Barbanel-Brams, Stromquist, and Webb moving knife protocols \cite{BB04,RW98}, and the moving knife step for computing equitable allocations among any number of players \cite{BN17,SS17} can be simulated efficiently in the communication model.

\begin{corollary} \label{cor:mk_sim_all}
	Austin's procedure for computing a perfect allocation between two players, the Barbanel-Brams, Stromquist, and Webb procedures for computing an envy-free allocation with connected pieces among $n=3$ players, and the moving knife step for computing an equitable allocation for any fixed number $n$ of players can be simulated using $O(\log{\epsilon^{-1}})$ rounds
	each with $O(\log{\epsilon^{-1}})$ bits of communication for all $\epsilon > 0$.
\end{corollary}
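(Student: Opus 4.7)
The plan is to deduce this corollary directly from Theorem \ref{thm:mk_protocol_sim} by checking, for each of the listed procedures, that it fits the abstract moving knife framework of Appendix \ref{sec:appendix_mk} and uses only a constant number of ``moving knife steps.'' Once this membership is established, the theorem supplies a communication protocol with $O(\log \epsilon^{-1})$ rounds of $O(\log \epsilon^{-1})$ bits each that produces an $\epsilon$-approximate version of the same fairness notion, and since each of the fairness notions in question (perfection, envy-freeness, equitability) is one of our abstract fairness notions, the corollary follows.

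First I would handle Austin's procedure. It naturally decomposes into two steps: in step one a single knife, controlled by the referee, sweeps the cake left to right and the players are allowed to call stop when the piece to the left of the knife is worth exactly $1/2$ to them; in step two, two parallel knives are moved in tandem by the player who called stop, keeping the piece between the knives at value $1/2$ for him, while the other player watches for the moment at which that middle piece is also worth $1/2$ to her. Each step is governed by a continuous monotone function of ``time'' and the positions of previously-placed knives, and termination within each step is driven by an intermediate-value-theorem condition, which is precisely the type of behavior the abstract model captures. Since there are only two steps, the hypothesis of Theorem \ref{thm:mk_protocol_sim} is satisfied.

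Next I would treat the three-player envy-free procedures of Barbanel-Brams, Stromquist, and Webb, and the equitable moving-knife step of \cite{BN17,SS17}. In each case, the protocol consists of a small constant number of stages in which an ordered collection of knives is placed and moved continuously as a function of time and of the positions of the preceding knives; stopping conditions are again intermediate-value statements (e.g., a player's piece equaling another player's piece in value, or all players' pieces having equal value). I would go stage by stage and verify that the knife-placement and knife-movement rules are sufficiently smooth and robust to meet the regularity requirements of the abstract definition, and that the transitions between stages are the ``simple'' decisions allowed by the framework. Once each procedure is exhibited as a constant-step moving knife protocol in this sense, Theorem \ref{thm:mk_protocol_sim} yields a communication protocol with the stated bounds.

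The main obstacle is not any deep mathematical content but rather the bookkeeping of matching each of these historically-described, informal procedures to the formal abstract model in Appendix \ref{sec:appendix_mk}: identifying knives, the player controlling each, the dependence of each knife's position on time and on previous knives, and the stopping predicates, and then checking the smoothness/robustness conditions the model imposes. Because all of these protocols are known to terminate via a single intermediate-value argument per stage and use only finitely many stages independent of $\epsilon$, this verification is routine, and no new quantitative estimates beyond those already provided by Theorem \ref{thm:mk_protocol_sim} are needed.
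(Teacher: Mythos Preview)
Your proposal is correct and matches the paper's own treatment: the paper presents this corollary as an immediate consequence of Theorem~\ref{thm:mk_protocol_sim}, simply asserting (without spelling out the verification) that each of the listed procedures fits the abstract moving-knife framework with constantly many steps. Your write-up is in fact more detailed than the paper's, which offers no proof beyond the sentence ``We obtain the following simulations for the known moving knife procedures.''
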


Notice that this simulation requires $O(\log \epsilon^{-1})$ rounds of communication, as
opposed to the simulation of the RW model that only requires a constant number of rounds. Our lower bounds 
(Corollary \ref{med-lower}) will imply that this is unavoidable.

\section{Proportional Cake Cutting} 

We turn to the problem of computing allocations for specific fairness notions and start with the most basic, known as proportionality. 

\subsection{A General Logarithmic Lower Bound} 

A logarithmic (in $\epsilon^{-1}$) lower bound is trivial not just for communication protocols but for any model with a bounded
number of answers to every query. This lower bound applies to essentially any cake cutting challenge, and we will prove it for the simplest one of proportional cake cutting between
any fixed number $n \ge 2$ of players.  

\medskip

\begin{theorem} \label{thm:proportional_lower_bound}
For every fixed $n \ge 2$, any protocol whose queries all have a bounded number of possible answers and always makes at most $C$ cuts requires $\Omega(\log \epsilon^{-1})$ queries for $\epsilon$-proportional cake cutting between $n$ of players.
\end{theorem}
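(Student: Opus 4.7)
The plan is to prove this by a counting argument on protocol leaves: any deterministic protocol with $r$ queries, each admitting at most $k = O(1)$ possible answers, produces at most $k^r$ distinct output allocations, so it suffices to exhibit a family of $M$ valuation profiles such that no allocation with at most $C$ cuts is $\epsilon$-proportional for more than $O(C)$ of them. If we can achieve $M = \Omega(\epsilon^{-c})$ for some constant $c > 0$, this forces $k^r = \Omega(\epsilon^{-c})$ and hence $r = \Omega(\log \epsilon^{-1})$. Randomized public-coin protocols are handled by a standard Yao's minimax argument against the uniform distribution over our hard family.

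I would first specialize to $n = 2$; the case $n \geq 3$ follows by reserving a small subinterval of the cake for players $3, \ldots, n$ with valuations concentrated on dedicated sub-subintervals that any sensible protocol must allocate as intended, then applying the two-player bound on the remainder. For $n = 2$, fix density bound $D = 2$ and, for each parameter $z \in [1/4, 3/4]$, consider the asymmetric step-density valuation $v^{(z)}$ equal to $1/(2z)$ on $[0,z]$ and $1/(2(1-z))$ on $[z,1]$. One checks that $v^{(z)}$ has total mass $1$, density bounded by $2$, and median exactly $z$. The hard family assigns both players the common valuation $v^{(z)}$ for $z$ ranging over an arithmetic progression of $M = \Theta(\epsilon^{-1/2})$ values spaced by $\Theta(\sqrt{\epsilon})$ inside $[1/4, 3/4]$.

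The crux is the combinatorial claim that, for any allocation $A$ with at most $C$ cuts, the set $\{z : v^{(z)}(A_1) \in [1/2 - \epsilon, 1/2 + \epsilon]\}$ has Lebesgue measure $O(C\sqrt{\epsilon})$. To establish it I would set $F_1(z) := |A_1 \cap [0, z]|$, which is non-decreasing, $1$-Lipschitz and piecewise linear with slopes in $\{0, 1\}$ and at most $C$ breakpoints at the cut positions of $A$, and then write
\[
F(z) := v^{(z)}(A_1) = \frac{F_1(z)}{2z} + \frac{|A_1| - F_1(z)}{2(1-z)}.
\]
On each of the $\leq C + 1$ pieces on which $F_1$ is linear, clearing denominators turns the equation $F(z) = 1/2$ into a quadratic in $z$ with at most two real roots, so $F = 1/2$ has only $O(C)$ real solutions overall. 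Around a simple root the window $\{z : |F(z) - 1/2| \leq \epsilon\}$ has width $O(\epsilon)$, and around a possible double root width $O(\sqrt{\epsilon})$; summing yields the stated $O(C\sqrt{\epsilon})$ measure bound. Combined with the $\Theta(\sqrt{\epsilon})$ spacing of the $z_j$'s, each output allocation is $\epsilon$-proportional for $O(C) = O(1)$ profiles, so the counting bound forces $k^r \geq M/O(C) = \Omega(\epsilon^{-1/2})$ and thus $r = \Omega(\log \epsilon^{-1})$.

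The main obstacle is the combinatorial claim itself: a priori a multi-cut allocation could ``average'' the valuation (e.g., a fine checkerboard that is nearly density-$1/2$ everywhere) and thereby be $\epsilon$-proportional for many symmetric profiles at once. The asymmetric choice of $v^{(z)}$ is designed precisely to defeat this averaging, because $F(z)$ is then a nontrivial rational function whose pieces are genuine non-constant quadratics rather than identically $1/2$; the $O(C)$ bound on the number of real solutions of $F(z) = 1/2$ is the key algebraic input that prevents any single bounded-cut allocation from serving too many profiles in the family.
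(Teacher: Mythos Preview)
Your approach is correct for $n=2$ and genuinely different from the paper's. The paper's argument is protocol-adaptive and number-theoretic: it first collects the finite set $X=\{x_1<\cdots<x_k\}$ of all cut points appearing at \emph{any} leaf of the protocol tree (so $k\le C\cdot t$ where $t$ is the number of leaves), then constructs a \emph{single} adversarial profile in which all players share a valuation that assigns each gap $[x_i,x_{i+1}]$ a weight that is an integer multiple of $1/R$ for some $R\approx(10n\epsilon)^{-1}$ chosen coprime to $n$. Every player's value on every output piece is then a multiple of $1/R$, and no such multiple lies within $\epsilon$ of $1/n$, so no leaf works. Your argument instead builds a protocol-oblivious \emph{family} $\{v^{(z)}\}$ and bounds, via the quadratic structure of $z\mapsto v^{(z)}(A_1)$, how many family members any single bounded-cut allocation can serve. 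The paper's route is shorter and handles all $n$ uniformly; your route yields a hard \emph{distribution}, which is what makes the Yao extension to randomized protocols go through (the paper does not discuss this). One small imprecision: the ``simple root $O(\epsilon)$ / double root $O(\sqrt\epsilon)$'' dichotomy is not literally correct when the two roots are $\Theta(\sqrt\epsilon)$ apart, but the uniform bound $|\{z:|P(z)|\le\epsilon\}|=O(\sqrt\epsilon)$ for any monic quadratic $P$ holds directly and is all you need.

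There is, however, a gap in your $n\ge 3$ reduction. If players $3,\ldots,n$ have valuations concentrated on reserved sub-intervals, $\epsilon$-proportionality only yields the one-sided constraints $v^{(z)}(A_1)\ge 1/n-\epsilon$ and $v^{(z)}(A_2)\ge 1/n-\epsilon$; nothing stops the protocol from handing large chunks of the ``two-player'' region to players $3,\ldots,n$ (they value it at zero and do not object), so you lose the upper bound on $v^{(z)}(A_1)$ that your interval-based quadratic argument requires. The clean fix is to give \emph{all} $n$ players the common valuation $v^{(z)}$: then $\sum_i v^{(z)}(A_i)=1$ together with each summand being at least $1/n-\epsilon$ forces $v^{(z)}(A_1)\in[1/n-\epsilon,\;1/n+(n-1)\epsilon]$, and your piecewise-quadratic analysis goes through verbatim with target $1/n$ and window $n\epsilon$.
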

\begin{proof}
	We will show that the computation tree of the protocol requires 
	$\epsilon^{-\Omega(1)}$ leaves. Let $t$ be the total number of leaves of
	the protocol and let $X=\{x_1 < x_2 < \cdots < x_k\}$ be the set of all cuts made at all leaves
	of the protocol (thus $k \le C \cdot t$).  
	
	We will now choose a profile of valuations on which it is impossible
	that the protocol achieves $\epsilon$-proportionality.  In this chosen
	profile, all players will have the exact same valuation 
	$v_1 = \cdots = v_n = v$.
	We will choose $v$ in a way that puts a weight that is
	an integer multiple of $1/R$, for some fixed integer $R$, in every sub-interval
	$[x_i,x_{i+1}]$.  That would imply that every player gets a total
	value that is an integer multiple of $1/R$. 
	If we choose $R$ to be relatively prime to $n$ then certainly
	for every integer $m$ we would have
	$|m/R - 1/n| > 1/(nR)$.  If $1/(nR) > \epsilon$ 
	then at least one player would have
	to get a value that is less than $1/n$ by more than $\epsilon$.
	
	We will choose $R \approx 1/(10n\epsilon)$ which is relatively prime 
	to $n$
	which ensures that $1/(nR) > \epsilon$  and now need
	to construct the required valuation $v$.  Our goal is to assign
	to every sub-interval $[x_i,x_{i+1}]$ a weight that is an integer
	multiple of $1/R$ that is close to its length $x_{i+1}-x_i$.    This
	is done by first rounding $x_{i+1}-x_i$ down to an integer multiple
	of $1/R$ and setting it as (our initial version of) $v([x_i,x_{i+1}])$.  This is not quite
	a valuation since we may be missing a total weight that is at 
	most $k/R$ (a rounding loss of at most $1/R$ for each of the $k$ sub-intervals).  Now, if $k < \sqrt{R} = O(\epsilon^{-1/2})$ then 
	the longest of the $k$ sub-intervals is longer than 
	$k/R$ and we can put all the missing weight on it, increasing 
	the density on it from at most $1$ to at most $2$ satisfying the
	condition of bounded density. 
\end{proof}

\subsection{A Simultaneous Upper Bound}

The number of rounds in the general RW simulation is the same as the number of queries in the original RW protocol. In some cases one can get very 
simple communication protocols, such as the following, which is derived from a simultaneous protocol from \cite{BBKP14}.

\medskip

\begin{theorem} \label{thm:proportional_upper_bound}
For every fixed $n$, there exists a protocol for finding a connected $\epsilon$-proportional allocation between $n$ players, where the players 
	simultaneously, in a single round, send $O(\log \epsilon^{-1})$
	bits of communication and the allocation is determined by the sent
	messages.
\end{theorem}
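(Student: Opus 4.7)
The plan is to emulate the Dubins-Spanier ``leftmost mark wins'' procedure in a single simultaneous round by having each player describe, up to rounding, the locations of all $n-1$ of their proportional marks at once, and then letting any observer of the messages sweep from left to right to both cut the cake and assign the resulting pieces.

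First, each player $i$ privately computes their exact proportional marks $x_{i,1}<\cdots<x_{i,n-1}$ defined by $v_i([0,x_{i,k}])=k/n$, rounds each \emph{down} to the nearest multiple of $\delta:=\epsilon/D$ (where $D$ is the density bound from Section~\ref{sec:model}), and transmits all $n-1$ rounded values $\tilde{x}_{i,k}$. Each value lies on a grid of size $1/\delta$, so encoding takes $(n-1)\cdot O(\log(1/\delta))=O(\log\epsilon^{-1})$ bits per player since $n$ is a fixed constant; all $n$ transmissions happen simultaneously.

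After receiving the $n(n-1)$ rounded marks, the allocation is reconstructed by a greedy left-to-right sweep. Set $c_0:=0$. For $k=1,2,\ldots,n-1$, among the players not yet assigned a piece, let $i_k$ be one that minimizes $\tilde{x}_{i,k}$; set the $k$-th cut to $c_k:=\tilde{x}_{i_k,k}$ and award the interval $[c_{k-1},c_k]$ to player $i_k$. Award the remaining interval $[c_{n-1},1]$ to the last unassigned player.

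Two things need to be checked, and neither is a real obstacle. The cuts are strictly increasing: since $i_{k-1}$ minimized $\tilde{x}_{\cdot,k-1}$ over a set still containing $i_k$, we get $c_{k-1}=\tilde{x}_{i_{k-1},k-1}\le\tilde{x}_{i_k,k-1}<\tilde{x}_{i_k,k}=c_k$. For the value bound on player $i_k$, rounding down gives $v_{i_k}([0,\tilde{x}_{i_k,k-1}])\le(k-1)/n$, while the density bound gives $v_{i_k}([0,c_k])\ge k/n-D\delta=k/n-\epsilon$; combining these with the monotonicity $c_{k-1}\le\tilde{x}_{i_k,k-1}$ yields $v_{i_k}([c_{k-1},c_k])\ge 1/n-\epsilon$. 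The identical argument with $c_n:=1$ handles the final player. The key feature of the analysis is that each player's guarantee depends only on two consecutive rounded marks from their own message, so the rounding error does not accumulate across the $n-1$ phases and the single uniform choice $\delta=\Theta(\epsilon)$ suffices.
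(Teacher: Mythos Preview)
Your proof is correct and follows essentially the same approach as the paper: both have each player simultaneously announce (approximations of) its $n-1$ proportional marks and then run the Dubins--Spanier leftmost-mark sweep to produce a connected allocation. The only cosmetic differences are that the paper first passes to $m$-simple valuations before taking marks whereas you round the exact marks directly, and your claim that the cuts are \emph{strictly} increasing need not hold after rounding (only $\tilde{x}_{i_k,k-1}\le\tilde{x}_{i_k,k}$ is guaranteed), but weak monotonicity suffices and your value bound goes through unchanged.
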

\begin{proof}
	The players first approximate their valuations by $m$-simple
	ones (for $m=O(\epsilon^{-1})$) and then each
	player $i$ sends the $n-1$ locations that split its simple valuation
	into $n$ pieces each of weight exactly $1/n$. (As in the
	proof of Lemma \ref{sim-rw}, the simplicity
	of the approximating valuation allows representing each of these
	locations using $O(\log \epsilon^{-1})$ bits.)
	
	We can now use this information to construct a proportional
	allocation among the approximating valuations: denote
	$x_i^j$ to be the $j$ location sent by player $i$, and for
	notational convenience define $x_i^0=0$ and $x_i^n=1$ for 
	all players.
	Thus we have that $v_i([x_{j-1},x_j])=1/n$ for all $i=1 \ldots n$
	and $j=1 \ldots n$ (here $v_i$ is the approximate simple valuation,
	not the original one.)
	
	Let $i_1 = \arg \min_i x_i^1$ and let $x_*^1=x_{i_1}^1$
	and allocate $[0,x_*^1]$ to
	$i_1$. Let $i_2 = \arg \min_{i \ne i_1} x_i^2$ and $x_*^2=x_{i_2}^2$,
	and allocate $[x_*^1,x_*^2]$ to $i_2$.  Similarly, at each stage
	we allocate the next piece the the player who wants the smallest 
	amount. The correctness of this allocation follows from the
	fact that at each step $k$ of the allocation, the allocated
	piece of the cake was to
	the left of $x_i^k$ for all unallocated players $i$.
\end{proof}

\begin{open} \label{open:simultaneous}
What is the simultaneous communication complexity of
	$\epsilon$-envy-free cake cutting between $n=3$ players (into
	an arbitrary finite number of pieces)?
\end{open}

\section{Equitable, Perfect, and Envy-Free Cake-Cutting} \label{sec:eqpfef}

Finally we study fairness notions \footnote{We focus on allocations that use a minimum number of cuts.} for which moving knife protocols exist: equitable and perfect allocations for $n=2$ players, and envy-free for $n=3$ players.

\subsection{The Crossing Problem}

We will relate these fair division problems through a discrete search problem called crossing that we now introduce. 

\begin{definition} \label{def:crossing_problem}
	The {\em general} $crossing(m)$ problem is the following: Alice gets a sequence of integer numbers $x_0,x_1, \ldots ,x_m$ with $0 \le x_i \le m$ for all $i$ and similarly Bob gets 
	a sequence of integer numbers $y_0,y_1, \ldots ,y_m$ with $0 \le y_i \le m$ for all $i$, and such that $x_0 \le y_0$ and $x_m \ge y_m$. Their goal is to find an index $1 \le i \le m$ 
	such that either both $x_{i-1} \le y_{i-1}$ and $x_i \ge y_i$ or that both
	$x_{i-1} \ge y_{i-1}$ and $x_i \le y_i$.\footnote{An even more difficult variant of this problem
		would insist that the output is the former rather than the latter.   This variant does not
		seem to correspond directly to any fair division problem so we will not name it.  It is worth
		noting that the protocol in the proof of theorem \ref{ub} solves this harder variant as well.}
	 	
	In the monotone variant, $mon-crossing(m)$, Alice and Bob are also assured that $0=x_0 \le x_1 \le x_2 \le \cdots \le x_m=m$ and $m=y_0 \ge y_1 \ge \cdots \ge y_m=0$ and again need to find
	an index $1 \le i \le m$ 
	such that $x_{i-1} \le y_{i-1}$ and $x_i \ge y_i$.
\end{definition}

The $\epsilon$-equitable cake cutting problem can be seen as equivalent to monotone crossing as follows.

\begin{lemma} \label{lem:equitable_to_crossing}
	For $n=2$ players, the $\epsilon$-equitable cake-cutting problem into $C=2$ connected pieces can be reduced to the $mon-crossing(m)$
	problem for $m=O(\epsilon^{-1})$ with an additional single round where each player send $O(\log m)$ bits to the other one.
\end{lemma}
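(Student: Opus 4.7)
The plan is to turn the $m$-simple approximations guaranteed by Lemma \ref{simple} into exactly the integer sequences required by $mon$-$crossing(m)$, run that protocol to locate the relevant sub-interval, and then spend one extra round exchanging a handful of integers so that both parties can independently solve a linear equation for the final cut point.

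Concretely, each player $i$ first replaces $v_i$ by an $m$-simple approximation $v'_i$ (Lemma \ref{simple}) for $m = \Theta(\epsilon^{-1})$ chosen so that $|v_i(I) - v'_i(I)| \le \epsilon/2$ for every sub-interval $I$. Alice then sets $x_j := m \cdot v'_1([0, j/m])$ and Bob sets $y_j := m \cdot v'_2([j/m, 1])$ for $j = 0, 1, \ldots, m$. By $m$-simplicity together with the normalization $v'_i([0,1]) = 1$, these are integers in $\{0, 1, \ldots, m\}$ satisfying $0 = x_0 \le x_1 \le \cdots \le x_m = m$ and $m = y_0 \ge y_1 \ge \cdots \ge y_m = 0$, matching the input format of $mon$-$crossing(m)$ exactly. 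The players then invoke that protocol (at no additional communication cost beyond its own) to obtain an index $i$ with $x_{i-1} \le y_{i-1}$ and $x_i \ge y_i$.

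In the extra round, Alice sends the pair $(x_{i-1}, x_i)$ and Bob sends $(y_{i-1}, y_i)$, using $O(\log m)$ bits each. Both parties now know that on the sub-interval $[(i-1)/m, i/m]$ the maps $t \mapsto v'_1([0, t])$ and $t \mapsto v'_2([t, 1])$ are affine with known endpoint values, and that the former is at most the latter at $t = (i-1)/m$ and at least the latter at $t = i/m$. Each player therefore solves the same one-variable linear equation $v'_1([0, t^*]) = v'_2([t^*, 1])$ on this sub-interval to obtain a common cut point $t^*$, which gives an exactly equitable allocation under the simple valuations. Applying the bound $|v_i(I) - v'_i(I)| \le \epsilon/2$ to the two intervals $[0, t^*]$ and $[t^*, 1]$ then yields $|v_1([0, t^*]) - v_2([t^*, 1])| \le \epsilon$, the desired $\epsilon$-equitable allocation for the original valuations.

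I do not expect a serious obstacle: once one observes that $m$-simple valuations produce exactly the monotone integer sequences demanded by the combinatorial problem, the reduction is essentially mechanical. The one place requiring a brief sanity check is the degenerate regime in which both affine functions are constant on the chosen sub-interval (i.e., $x_{i-1} = x_i$ and $y_{i-1} = y_i$); combined with the crossing inequalities this forces $x_{i-1} = y_{i-1}$, so any $t^*$ in the sub-interval works and no division-by-zero occurs. A secondary point is tuning the constant inside $m = \Theta(\epsilon^{-1})$ so that Lemma \ref{simple} delivers the $\epsilon/2$ accuracy used at the end, but this is routine.
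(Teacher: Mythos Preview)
Your proposal is correct and follows essentially the same approach as the paper: both construct $m$-simple approximations, encode Alice's cumulative value and Bob's right-tail value as the monotone integer sequences for $mon$-$crossing(m)$, exchange the four endpoint values in one extra round, and linearly interpolate to the exactly equitable cut for the simple valuations. The only cosmetic difference is that the paper phrases Bob's sequence as $y_i = m - m\cdot v'_2([0,i/m])$ and writes the interpolation condition as $v'_1([0,t^*]) + v'_2([0,t^*]) = 1$, which is of course equivalent to your $y_j = m\cdot v'_2([j/m,1])$ and $v'_1([0,t^*]) = v'_2([t^*,1])$.
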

\begin{proof}
	The players will start by rounding their valuations to $m$-simple
	ones as allowed by Lemma \ref{simple} (with $m=O(\epsilon^{-1})$),
	and will achieve an equitable
	allocation on these simple valuations which by Lemma \ref{simple} also approximately
	solves the problem on the original valuations.  So for the rest of this proof let us assume that each $v$ is already $m$-simple.
	
	Now, for a fixed player, if we define $x_i=v([0,i/m])\cdot m$, we certainly get that 
	$0=x_0 \le x_1 \le x_2 \le \ldots \le x_{m-1} \le x_m = m$.  
	Let us denote, just for Bob, $y_i = m-x_i$ so
	$m = y_0 \ge y_1 \ge y_2 \ge \cdots \ge y_{m-1} \ge y_m = 0$.  (And for Alice we keep the $x$ notation.)
	
	A solution for the $mon-crossing(m)$ problem on these vectors $x_0 \ldots x_m$ and $y_0 \ldots y_m$ will give an index $i$ such that $x_{i-1} \le y_{i-1}$ and $x_i \ge y_i$ and thus
	$v_{Alice}([0, (i-1)/m])) + v_{Bob}([0, (i-1)/m]) \le 1$ while $v_{Alice}([0, i/m]) + v_{Bob}([0, i/m]) \ge 1$ which implies that for some intermediate value $(i-1)/m \le x* \le i/m$ we have 
	exactly $v_{Alice}([0, x*]) + v_{Bob}([0, x*]) = 1$ and moreover since simple valuations, by definition, are linear between $(i-1)/m$ and $i/m$, the exact value of $x*$ can be computed
	from the values $x_{i-1}, x_i, y_{i-1}, y_i$ which the players can send to each other in a single round.  
	Now splitting the cake at location $x*$ is an equitable allocation since $v_{Bob}([x*,1]) = 1 - v_{Bob}([0,x*]) = v_{Alice}([0,x*])$.
\end{proof}

\begin{lemma} \label{lem:crossing_to_equitable}
	The $mon-crossing(m)$ problem can be reduced (without any additional communication) into the $\epsilon$-equitable cake cutting problem among $n=2$ players into $C=2$ connected pieces, with $\epsilon^{-1}=O(m^2)$.
\end{lemma}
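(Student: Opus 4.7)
The natural idea is to invert the reduction of Lemma~\ref{lem:equitable_to_crossing}: given a $mon\text{-}crossing(m)$ instance, Alice encodes her sequence $x$ as the cumulative distribution of a bounded-density valuation, and symmetrically for Bob with $y$, so that an $\epsilon$-equitable cut reveals a valid crossing index. The key difficulty is that the naive choice $V_A(i/m) = x_i/m$ would require density as large as $m$, violating the bounded-density assumption. I resolve this with a \emph{variable-width} partition: Alice places breakpoints at $t^A_i = (i + x_i)/(2m)$ and uses piecewise-constant density $\rho^A_i = 2\Delta x_i/(\Delta x_i + 1)$ on $[t^A_{i-1}, t^A_i]$. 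A direct calculation gives $\rho^A_i \in [0,2]$, total mass $1$, and $V_A(t^A_i) = x_i/m$ exactly. Bob mirrors this with $y'_i := m - y_i$, using $t^B_i = (i + m - y_i)/(2m)$ and $\rho^B_i = 2(y_{i-1} - y_i)/(y_{i-1} - y_i + 1)$, so that $V_B([t^B_i, 1]) = y_i/m$.

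After invoking the $\epsilon$-equitable protocol on $(v_A, v_B)$ with $\epsilon = \Theta(1/m^2)$, the returned cut $x^*$ satisfies $|V_A(x^*) - V_B([x^*,1])| \le \epsilon$. Each player locally identifies the index of the interval in their own partition containing $x^*$: Alice outputs the unique $i_A$ with $x^* \in [t^A_{i_A - 1}, t^A_{i_A}]$ (and Bob analogously computes $j_B$); the claim to prove is that $i_A$ is always a valid crossing index, i.e.\ $x_{i_A - 1} \le y_{i_A - 1}$ and $x_{i_A} \ge y_{i_A}$. The argument sandwiches $V_A(x^*) \in [x_{i_A-1}/m, x_{i_A}/m]$ and $V_B([x^*,1]) \in [y_{j_B}/m, y_{j_B - 1}/m]$ against the equitable bound, yielding the inequalities $x_{i_A - 1} \le y_{j_B - 1} + m\epsilon$ and $y_{j_B} \le x_{i_A} + m\epsilon$. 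Since $x$ is non-decreasing and $y$ is non-increasing, monotonicity of $x - y$ lets me transfer the right-hand indices from $j_B$ to $i_A$: in the case $j_B \ge i_A$ one has $y_{j_B - 1} \le y_{i_A - 1}$, giving $x_{i_A - 1} \le y_{i_A - 1} + m\epsilon$, and the symmetric handling takes care of the opposite case. Because $x_i, y_i$ are integers in $[0,m]$, choosing $m\epsilon < 1$ promotes each slack inequality into the genuine integer comparison that constitutes the crossing condition at $i_A$.

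The main obstacle is the interplay between the case split on $i_A$ versus $j_B$ and the dual role of $\epsilon$: one factor of $m$ in $\epsilon^{-1} = O(m^2)$ comes from the integrality upgrade just described, while the other comes from requiring that the cut location uniquely identify $i_A$. Since bounded density ($D = 2$) makes $V_A$ and $V_B$ at most $2$-Lipschitz, an $\epsilon$-error in cumulative value translates to a location error of $\Theta(\epsilon)$, which must be smaller than the smallest Alice interval width $\min_i (t^A_i - t^A_{i-1}) = 1/(2m)$, forcing $\epsilon = O(1/m^2)$. I also need to handle the degenerate case when Alice's interval has $\Delta x_i = 0$ (density zero), where $V_A$ is flat and the cut location within that interval is effectively chosen by Bob's side; here the integer sandwich still works because $x_{i_A - 1} = x_{i_A}$, so the crossing conditions reduce to a single comparison already implied by the equitable bound.
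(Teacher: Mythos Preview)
Your construction is clever and does produce bounded-density valuations with the right cumulative values, but the central claim---that Alice's index $i_A$ is always a valid crossing---is false.

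From the equitable condition you correctly obtain $x_{i_A-1}\le y_{j_B-1}+m\epsilon$ and $y_{j_B}\le x_{i_A}+m\epsilon$. However, your case split recovers only \emph{one} of the two crossing inequalities at $i_A$, not both. If $j_B\ge i_A$ then $y_{j_B-1}\le y_{i_A-1}$ gives $x_{i_A-1}\le y_{i_A-1}$, but from $x_{i_A}\ge y_{j_B}$ you cannot conclude $x_{i_A}\ge y_{i_A}$ because now $y_{j_B}\le y_{i_A}$, so the inequality points the wrong way. The case $j_B< i_A$ is symmetric: you get $x_{i_A}\ge y_{i_A}$ but not $x_{i_A-1}\le y_{i_A-1}$. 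Invoking ``monotonicity of $x-y$'' does not help, since the inequalities you hold are mixed-index ($x_{i_A}$ versus $y_{j_B}$), not of the form $x_k-y_k$.

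Here is a concrete failure. Take $m=10$, $x=(0,0,1,3,4,5,6,7,8,9,10)$, $y=(10,5,2,2,2,1,1,1,0,0,0)$; the unique crossing index is $i=3$. Your breakpoints give $t^A_3=6/20$, $t^A_4=8/20$, $t^B_1=6/20$, $t^B_2=10/20$. On $[6/20,8/20]$ one computes $V_A(x)+V_B(x)=2.5x+0.05$, so the (unique) exact equitable cut is $x^*=0.38$. This lies in Alice's interval $[t^A_3,t^A_4]$ (so $i_A=4$) and in Bob's interval $[t^B_1,t^B_2]$ (so $j_B=2$). But $i_A=4$ fails since $x_3=3>2=y_3$, and $j_B=2$ fails since $x_2=1<2=y_2$. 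The true crossing $i=3$ lies strictly between $j_B$ and $i_A$, and neither party can recover it from $x^*$ without communication.

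The paper avoids this entirely by encoding $x_i$ and $m-y_i$ as (scaled-down) \emph{values} on a common fixed grid $\{1/3+i/(3m)\}_i$, rather than as player-specific breakpoint \emph{positions}. With a shared grid the cell index containing the cut is the same for both players, so a single cell yields both inequalities $x_{i-1}\le y_{i-1}$ and $x_i\ge y_i$ directly. The price is that the informative part of the valuations carries only $1/m$ total mass, which is what forces $\epsilon^{-1}=O(m^2)$ there; your variable-width construction was trying to avoid that scaling, but it breaks the common indexing that the argument needs.
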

\begin{proof}
	Given her input $0=x_0 \le x_1 \le x_2 \le \cdot \le x_{m-1} \le x_m = m$, Alice will construct the following valuation $v_{Alice}$: 
	\begin{itemize}
		\item $v_{Alice}([0,1/3])=v_{Alice}([2/3,1])= (1-1/m)/2$.  (I.e. only $1/m$ weight is left for the sub-interval $[1/3,2/3]$.)
		\item Constant density within $[0,1/3]$ and $[2/3,1]$.  I.e. the density in these two sub-intervals is $3 \cdot (1-1/m)/2)$.
		\item For every integer $0 \le i \le m$, let $v_{Alice}([1/3,1/3 + i/(3m)]) = x_i/m^2$.  
		\item Constant density within each subinterval $[1/3+i/(3m), 1/3+(i+1)/(3m)]$. I.e. the density is $3 \cdot (x_{i+1}-x_i)/ m \le 3$.  
	\end{itemize}
	
	Bob will similarly create his own valuation $v_{Bob}$ using the values $x_i^{Bob}=m-y_i$.  Notice that for each $i$ we have that 
	$v_{Alice}([1/3,1/3 + i/(3m)]) + v_{Bob}([1/3,1/3 + i/(3m)]) = (x_i + m - y_i)/m^2$, and so 
	$v_{Alice}([0,1/3 + i/(3m)]) + v_{Bob}(0,1/3 + i/(3m)]) = 1-1/m + (x_i + m - y_i)/m^2$.  In particular, due to the integrality of $x_i$ and $y_i$, if $x_i>y_i$ then 
	$v_{Alice}([0,1/3 + i/(3m)]) + v_{Bob}(0,1/3 + i/(3m)]) > 1+1/m^2$ while if $x_i < y_i$ then
	$v_{Alice}([0,1/3 + i/(3m)]) + v_{Bob}(0,1/3 + i/(3m)]) < 1-1/m^2$.  
	
	Now consider a solution $(i-1)/m \le x* \le i/m$ to the $\epsilon$-equitable cake cutting problem 
	where $v_{Alice}([0,x*])-v_{Bob}([x*,1]) | \le \epsilon$, equivalently, $1-\epsilon \le v_{Alice}([0,x*])+v_{Bob}([0,x*]) \le 1+\epsilon$.  
	If $\epsilon<1/m^2$ then this cannot be the case if either $x_i \ge x_{i-1} > y_{i-1} > y_i$ or $x_{i-1} \le x_i < y_i \le y_{i-1}$.  It thus
	follows that $x_{i-1} \le y_{i-1}$ and $x_i \ge y_i$ as needed. 
\end{proof}

Next we show that $\epsilon$-perfect cake cutting is equivalent to the general crossing problem.
The omitted proofs of this section can be found in Appendix \ref{app:crossing_problem}.

\begin{lemma} \label{lem:perfect_to_crossing}
	For $n=2$ players, the $\epsilon$-perfect cake-cutting problem into $C=3$ intervals can be reduced to the general $crossing(m)$
	problem for $m=O(\epsilon^{-1})$ with a single additional round of $O(\log \epsilon^{-1})$ bits
	of communication.
\end{lemma}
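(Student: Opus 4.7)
The plan is to reduce $\epsilon$-perfect cake-cutting for $n=2$ players with the minimum $C=3$ intervals to the general crossing problem by parameterizing over the left cut $a$. For each player $i \in \{A,B\}$, define $b_i(a) \in [0,1]$ to be the unique point satisfying $v_i([a, b_i(a)]) = 1/2$ when this is feasible, and $b_i(a) = 1$ otherwise (i.e., for $a > m_i$, where $m_i$ is player $i$'s midpoint). Each $b_i$ is monotone non-decreasing, with $b_A(0) = m_A$, $b_B(0) = m_B$, and $b_A(1) = b_B(1) = 1$. A perfect allocation with two cuts corresponds to a pair $(a^*, b^*)$ with $b_A(a^*) = b_B(a^*) = b^*$, which must exist by the intermediate value theorem in the range $a \in [0, \min(m_A, m_B)]$.

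The reduction proceeds in three steps. First, each player approximates her valuation by an $m$-simple one via Lemma \ref{simple}, with $m = O(\epsilon^{-1})$ chosen large enough that subsequent discretization errors remain $O(\epsilon)$. Second, on the grid $a_i = i/m$ for $i = 0, 1, \ldots, m$, Alice sets $x_i = \lfloor m \cdot b_A(a_i) \rfloor$ and Bob sets $y_i = \lfloor m \cdot b_B(a_i) \rfloor$; these are monotone non-decreasing integer sequences in $\{0, \ldots, m\}$. The boundary condition $x_m = y_m = m$ holds automatically since both $b_i$ saturate at $1$, and without loss of generality $m_A \le m_B$ (else the two players swap roles in the crossing protocol), giving $x_0 \le y_0$ as required by $crossing(m)$. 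Third, the players invoke the general crossing protocol to obtain an index $i^*$ at which the order of the two sequences flips.

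The single additional round of $O(\log \epsilon^{-1})$ bits is used to exchange the exact values $b_A(a_{i^*})$ and $b_B(a_{i^*})$; each is a rational representable in $O(\log m)$ bits because the $m$-simple valuations are piecewise linear with small denominators. The final output is a cut pair $(a, b)$ with $a = a_{i^*}$ and $b$ chosen in the gap $[\min(b_A(a_{i^*}), b_B(a_{i^*})), \max(b_A(a_{i^*}), b_B(a_{i^*}))]$.

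The main obstacle will be the precision analysis, i.e., verifying that the chosen $(a, b)$ is indeed $\epsilon$-perfect. In the generic case the flip at $i^*$ together with monotonicity of $b_A, b_B$ yields $|b_A(a_{i^*}) - b_B(a_{i^*})| = O(1/m) = O(\epsilon)$, and bounded density $D$ then gives $|v_i([a,b]) - 1/2| = O(\epsilon)$ for both players (and hence for the outside pieces $[0,a] \cup [b,1]$ by additivity). The delicate case is when one of $b_A, b_B$ jumps discontinuously over a zero-density sub-interval of the $m$-simple approximation; the key observation is that inside such a jump interval the corresponding player's value is locally constant in $b$, so any $b$ chosen inside the jump leaves that player satisfied, and the crossing condition combined with monotonicity pins down an aligned $b$ that keeps the other player within $O(\epsilon)$ of $1/2$ as well.
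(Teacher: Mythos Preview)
Your high-level reduction is the same as the paper's: parameterize by the left cut $a$, let each player define the right cut that gives her exactly half, discretize on the grid $\{i/m\}$, and feed the two (monotone non-decreasing) integer sequences into the general $crossing(m)$ problem. The paper does exactly this, first exchanging midpoints so that the boundary conditions $x_0\le y_0$, $x_m\ge y_m$ are met; you impose the same ``$m_A\le m_B$'' assumption, which also costs one round of $O(\log m)$ bits before the crossing call (so your single post-crossing round is not quite enough on its own, but this is a minor bookkeeping point).

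The real gap is in your precision analysis. Your ``generic case'' claim that the flip at $i^*$ forces $|b_A(a_{i^*})-b_B(a_{i^*})|=O(1/m)$ is simply false once either half-mark function can jump, and your ``delicate case'' sketch does not repair it: even when $b_A$ jumps across a zero-density block $[p,q]$, you have no guarantee that $b_B(a_{i^*})$ lands inside $[p,q]$ (it only lies in $[b_A(a_{i^*-1})-O(1/m),\,b_A(a_{i^*})]$), so ``any $b$ in the jump interval satisfies Alice'' does not cover all outputs. The paper avoids this entirely by never arguing about proximity of the cut \emph{points}; instead it bounds values directly. Taking the right cut to be Alice's mark $x_{i^*}/m$, one writes
\[
v_{Bob}\!\left(\bigl[\tfrac{i^*}{m},\tfrac{x_{i^*}}{m}\bigr]\right)
= v_{Bob}\!\left(\bigl[\tfrac{i^*}{m},\tfrac{y_{i^*}}{m}\bigr]\right)
\pm v_{Bob}\!\left(\bigl[\tfrac{\min(x_{i^*},y_{i^*})}{m},\tfrac{\max(x_{i^*},y_{i^*})}{m}\bigr]\right),
\]
observes from the crossing condition that the second interval is contained in $[y_{i^*-1}/m,\,y_{i^*}/m]$, and then telescopes
\[
v_{Bob}\!\left(\bigl[\tfrac{y_{i^*-1}}{m},\tfrac{y_{i^*}}{m}\bigr]\right)
= v_{Bob}\!\left(\bigl[\tfrac{i^*}{m},\tfrac{y_{i^*}}{m}\bigr]\right)
- v_{Bob}\!\left(\bigl[\tfrac{i^*-1}{m},\tfrac{y_{i^*-1}}{m}\bigr]\right)
+ v_{Bob}\!\left(\bigl[\tfrac{i^*-1}{m},\tfrac{i^*}{m}\bigr]\right),
\]
each term being within $O(D/m)$ of $1/2$ or bounded by $D/m$ outright. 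This uses only the density bound $D$, never a lower bound on density, and therefore is indifferent to whether $b_A$ or $b_B$ has jumps. If you replace your point-distance step with this value-telescoping step, your argument goes through (and you can then drop the preliminary passage to $m$-simple valuations altogether).
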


\begin{lemma} \label{lem:crossing_to_perfect}
	The general $crossing(m)$ problem can be reduced (without any additional communication) into the $\epsilon$-perfect cake cutting problem 
	among $n=2$ players into $C=3$ intervals, with $\epsilon^{-1}=O(m^2)$.
\end{lemma}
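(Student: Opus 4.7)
The plan is to extend the construction from Lemma~\ref{lem:crossing_to_equitable} to address two new features: the sequences $(x_i)$ and $(y_i)$ in the general crossing problem are not monotone (so a direct cumulative encoding would yield negative densities), and a perfect allocation into three intervals has two cut positions rather than one (so the perfection condition produces two simultaneous equations instead of the single equation handled in the monotone/equitable case).

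First, I would partition the cake into three regions: two outer regions $[0,1/3]$ and $[2/3,1]$ carrying most of the weight at roughly uniform density, and a middle ``signal'' region $[1/3,2/3]$ subdivided into $m$ sub-intervals of length $1/(3m)$ each. In each sub-interval, every player places a large baseline density plus a small additive perturbation encoding one entry of their sequence, with the baseline chosen large enough to absorb the non-monotonicity of $x_i, y_i$ (so every density stays non-negative and bounded by a constant) and the perturbation scaled at $\Theta(1/m^2)$, matching the desired $\epsilon^{-1} = O(m^2)$. The encoding is mirrored symmetrically about $1/2$ so that the total weight is exactly $1$ independently of the input.

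Second, I would argue that, for sufficiently small $\epsilon$, any $\epsilon$-perfect allocation must have its two cuts $a < b$ pinned, up to $O(\epsilon)$ error, to sub-interval boundaries of the form $a = 1/3 + i/(3m)$ and $b = 1/3 + j/(3m)$: outside the signal region the density is $\Theta(1)$, so a cut displaced by more than $O(\epsilon)$ would shift $v_A([a,b])$ by more than $\epsilon$ and break perfection. At these discrete positions, $|v_A([a,b]) - 1/2| \le \epsilon$ and $|v_B([a,b]) - 1/2| \le \epsilon$ become explicit discrete inequalities in $x_i, x_j, y_i, y_j$; one linear combination fixes the ``width'' $j - i$ of the middle piece (determined by the baseline weights alone), while their difference forces $(x_i - y_i)$ and $(x_j - y_j)$ to have weakly opposite signs.

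The main obstacle is designing the signal encoding so that the two simultaneous perfection conditions decouple cleanly into a width equation and a sign-change equation, and so that the latter returns a valid crossing (of either direction, as permitted by the general variant) rather than a spurious position. I would address this by using two mirrored copies of the signal placed symmetrically about the midpoint of the middle region, which makes the baseline-driven equation determine $j - i$ and leaves the Alice--Bob discrepancy in the two perfection conditions to extract the sign change of $x_i - y_i$ between consecutive indices. By the width constraint one can take $j = i+1$, recovering exactly the condition that defines a general crossing index.
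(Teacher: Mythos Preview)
Your proposal has a genuine gap in the ``width'' step. You claim that the baseline portion of the two perfection conditions forces $j-i$ to a specific value, and that ``one can take $j=i+1$''. But if both cuts lie in the signal region $[1/3,2/3]$ at consecutive grid points, the middle piece has length $1/(3m)$; for that piece to have value within $\epsilon$ of $1/2$ for either player, the density there must be of order $m$, which violates the bounded-density hypothesis that the whole framework relies on. Conversely, if you keep the signal region light (total mass $\Theta(1/m)$, as in the equitable construction you invoke), then no sub-piece of $[1/3,2/3]$ can have value near $1/2$, so the two cuts of an $\epsilon$-perfect allocation cannot both land there. Either way the geometry you sketch is inconsistent. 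A related issue is the ``pinning'' claim: with a uniform baseline inside the signal region, nothing forces the cuts to sub-interval boundaries; they can sit at arbitrary continuous positions, and your discrete equations in $x_i,x_j,y_i,y_j$ never materialise.

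The paper takes a different route that avoids both problems. It first adds the linear term $2m^2+m+2mi$ to every $x_i$ and $y_i$, which makes both sequences strictly increasing (with values in $[H,2H]$, $H=2m^2+m$) without changing the crossing answer. Alice then builds $v_{Alice}$ so that $v_{Alice}([i/(2m),\,x_i/(2H)])=1/2$ for every $i$, by putting mass $1/2$ uniformly on $[0,x_0/(2H)]=[0,1/2]$ and mass $1/(2m)$ uniformly on each $[x_{i-1}/(2H),x_i/(2H)]$; Bob does the same with $y$. Now if the left cut $a$ lies in $[(i-1)/(2m),i/(2m)]$, written as $a=\alpha(i-1)/(2m)+(1-\alpha)i/(2m)$, perfection for Alice forces $b=\alpha\,x_{i-1}/(2H)+(1-\alpha)\,x_i/(2H)+O(\epsilon)$, and perfection for Bob forces the same convex combination with $y_{i-1},y_i$. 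Integrality then yields a crossing at $i$ once $\epsilon^{-1}=O(m^2)$. The key idea you are missing is this monotonisation-plus-cumulative-encoding, which lets the single index $i$ of the left cut do all the work and keeps densities bounded by a constant throughout.
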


Finally, we show $\epsilon$-envy-free cake cutting for $n=3$ players can be reduced to monotone crossing.

\begin{lemma} \label{lem:envy-free_to_crossing}
For $n=3$ players, the $\epsilon$-envy-free cake-cutting problem with connected pieces can be reduced to the $mon-crossing(m)$
problem for $m=O(\epsilon^{-1})$ with an additional constant number of rounds of $O(\log{\epsilon^{-1}})$ bits of communication.
\end{lemma}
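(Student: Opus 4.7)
The plan is to adapt Stromquist's moving-knife procedure for 3-player connected envy-free cake cutting and encode the resulting search as a single monotone crossing instance. First I would have each player approximate their valuation by an $m$-simple one with $m = O(\epsilon^{-1})$ via Lemma~\ref{simple}, so that it suffices to find an exactly envy-free allocation on the simple valuations. In Stromquist's procedure, as the first cut $a$ moves from $0$ to $1$, each player $i$ maintains a halving knife at $f_i(a)$ bisecting $[a,1]$ by $v_i$, and the envy-free allocation arises at the first position $a^*$ where some ``caller'' first has $v_i([0,a^*])$ at least as valuable (to $i$) as their preferred side of the median knife $m(a^*)$.

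In a constant number of auxiliary rounds of $O(\log \epsilon^{-1})$ bits each, the three players would exchange short summary data: each player broadcasts their proportionality threshold $\alpha_i = \max\{a : v_i([0,a]) \leq 1/3\}$ and their initial midpoint $f_i(0)$, each represented with $O(\log \epsilon^{-1})$ bits due to the $m$-simplicity. From this data the players determine a valid role assignment, i.e., which player will take the leftmost piece $[0, a^*]$ (the caller), which will cut the remainder, and which will choose. There are at most six orderings to consider, and the preprocessing identifies a feasible one, which must exist since a connected envy-free allocation is always guaranteed.

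Once the roles are fixed (say player $3$ is the caller, player $1$ the cutter, and player $2$ the chooser), the envy-free cut position becomes the smallest $a \le \min(\alpha_1, \alpha_2)$ at which player $3$ prefers $[0,a]$ to the better of $[a, f_1(a)]$ and $[f_1(a), 1]$. This is a $2$-party problem between players $1$ (holding $f_1$) and $3$ (holding $v_3$), and I would cast it as $mon$-$crossing(m)$ as follows: Alice (player $1$) builds a sequence $x_j$ from the discretized halving knives $f_1(j/m)$, rescaled to integers in $\{0,\ldots,m\}$, while Bob (player $3$) builds $y_j$ from his valuations of the corresponding right-hand pieces, appropriately rescaled. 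The sequences are monotone because $f_1(a)$ is non-decreasing in $a$, $v_3([0,a])$ is non-decreasing, and $v_3([f_1(a),1])$ is non-increasing, so the crossing index identifies (up to precision $1/m$) the envy-free transition.

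The main obstacle will be carefully defining $x_j, y_j$ so that the mon-crossing index corresponds exactly to an $\epsilon$-envy-free cut (matching the boundary conditions $x_0 = 0, x_m = m, y_0 = m, y_m = 0$ required by the monotone variant), together with handling the six role assignments within a constant number of auxiliary rounds. Since the number of role orderings is constant and the preprocessing isolates the correct one, the overall protocol invokes $mon$-$crossing(O(\epsilon^{-1}))$ once plus $O(1)$ auxiliary rounds of $O(\log \epsilon^{-1})$ bits, as required.
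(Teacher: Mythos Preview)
Your plan has two genuine gaps that prevent it from going through as stated.

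\medskip
\textbf{Role assignment cannot be read off from the preprocessing you describe.} You claim that after the players exchange their proportionality thresholds $\alpha_i$ and initial midpoints $f_i(0)$, they can pick a valid (caller, cutter, chooser) assignment. But in Stromquist's procedure the identity of the caller, and more importantly which player's knife is the median at the stopping time, depends on the entire trajectories $f_1,f_2,f_3$, not just on a few scalar summaries. Concretely, for your role assignment (caller $=3$, cutter $=1$) to work you need some $a\le\min(\alpha_1,\alpha_2)$ with $v_3([0,a])\ge \max\bigl(v_3([a,f_1(a)]),\,v_3([f_1(a),1])\bigr)$; whether such an $a$ exists depends on how $f_1$ interacts with $v_3$ globally, and is not determined by $\alpha_3$ and $f_3(0)$. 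So the ``constant number of auxiliary rounds'' you budget is not enough to select the right branch, and trying all six assignments would require six separate crossing instances plus a way to verify which one succeeded.

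\medskip
\textbf{The sequence you hand to Bob is not monotone.} For mon-crossing you need $y_j$ decreasing. You propose that Bob (player $3$) encode $\max\bigl(v_3([j/m,f_1(j/m)]),\,v_3([f_1(j/m),1])\bigr)$ (or something equivalent). But this quantity is \emph{not} monotone in $j$: as $a$ grows, $f_1(a)$ moves to the right, and the larger of the two right-hand pieces (as seen by player $3$) can oscillate depending on where $f_1(a)$ sits relative to player $3$'s own bisection point $f_3(a)$. So even once roles are fixed, you have not produced a valid mon-crossing input. You flag this as ``the main obstacle,'' but it is not merely a matter of rescaling; the underlying one-dimensional search is not obviously monotone in your parametrisation.

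\medskip
\textbf{How the paper's proof differs.} The paper does not try to freeze Stromquist roles in advance. Instead it exchanges each player's $1/3$, $1/2$, and $2/3$ marks, orders the players by their $1/3$ marks, and first tests Alice's equithirds partition. If that fails, both Bob and Carol strictly prefer the same one of Alice's two right pieces, which yields two explicit cases. In each case the mon-crossing instance is built between \emph{Alice and Bob only}: Alice's sequence equalises two designated pieces for her, while Bob's sequence equalises the middle piece with his best outside option, and both sequences are shown to be monotone by construction. The crossing then gives cuts at which Alice and Bob each have an approximate two-way tie for their favourite piece, so Carol can be allowed to pick first and the remaining pieces are distributed without envy. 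The point is that the case split is driven by the outcome of Alice's equithirds test (a constant amount of communication), not by guessing Stromquist roles, and the sequences are tailored so that monotonicity is immediate.
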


\subsection{The Communication Complexity of the Crossing Problem}

We study the communication complexity of the crossing problem and start with an upper bound. 

\begin{theorem} \label{ub}
	The deterministic communication complexity of $crossing(m)$ is $O(\log^2 m)$. The randomized communication complexity of $crossing(m)$ is $O(\log m \log\log m)$.
\end{theorem}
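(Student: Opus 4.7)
The plan is to solve $crossing(m)$ by a binary search for the index $i$. Throughout the search we maintain the invariant that a pair $L<R$ of indices satisfies $x_L \le y_L$ and $x_R \ge y_R$; this is valid initially with $L=0$ and $R=m$ by the boundary conditions supplied in Definition \ref{def:crossing_problem}. In each round both parties compute $M=\lfloor(L+R)/2\rfloor$ (common knowledge from the transcript), jointly determine the one bit $b:=\mathbf{1}[x_M \le y_M]$, and update $L:=M$ if $b=1$ or $R:=M$ otherwise (treating equality as $b=1$). After $\lceil\log_2 m\rceil$ iterations we have $R=L+1$, and outputting $i:=R$ gives $x_{i-1}\le y_{i-1}$ and $x_i\ge y_i$ --- a crossing of the first type, matching the stronger variant the footnote mentions.

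The entire cost is therefore $O(\log m)$ times the cost of comparing two integers $x_M, y_M$ with values in $\{0,\dots,m\}$, i.e.\ $\lceil\log_2(m{+}1)\rceil$-bit numbers. Deterministically one party sends its value in $O(\log m)$ bits, giving total communication $O(\log^2 m)$. For the randomized bound I invoke the standard public-coin protocol for \textsc{GreaterThan} on $n$-bit integers (see \cite{KN96}), tuning its parameters so that on $n=\Theta(\log m)$-bit inputs it achieves error $O(1/\log m)$ within $O(\log\log m)$ bits of communication. A union bound over the $O(\log m)$ outer binary-search rounds then gives a constant-error protocol with $O(\log m \cdot \log\log m)$ total communication; Newman's theorem converts to private coins at an additive $O(\log\log m)$ cost.

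The main subtlety --- and what must be handled carefully --- is choosing the accuracy of the \textsc{GreaterThan} subroutine so that a single union bound over the $\log m$ outer rounds suffices at the claimed asymptotic cost. Naive amplification of a constant-error subroutine by repetition would cost an extra $\log\log m$ factor and miss the bound; one must instead exploit the fact that the natural randomized \textsc{GreaterThan} protocol can already produce error $1/\mathrm{polylog}\,m$ within an $O(\log\log m)$-bit budget, or, alternatively, apply a noisy binary search that tolerates constant per-round error with only a constant-factor blowup in the number of rounds. Apart from this point the proof is the standard binary-search template from communication complexity.
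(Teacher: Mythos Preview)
Your proposal is correct and takes essentially the same approach as the paper: binary search on the index while maintaining the invariant $x_L\le y_L$, $x_R\ge y_R$, with each midpoint comparison costing $O(\log m)$ bits deterministically, and in the randomized case invoking the $O(\log k)$-bit \textsc{GreaterThan} protocol (the paper cites \cite{nis93} rather than \cite{KN96}) with error boosted to $1/\mathrm{polylog}\,m$ so a union bound over the $O(\log m)$ rounds succeeds. Your discussion of the accuracy subtlety and the noisy-binary-search alternative is a nice elaboration but not a departure from the paper's argument.
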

\begin{proof}
	The deterministic upper bound follows from our general simulation result of moving knife procedures, but here is a simple protocol: the players perform a binary search over the indices. At each point during the search they have a sub-interval of indices where $x_a \le y_a$ and $x_b \ge y_b$, 
	looking at the middle point $c$ between $a$ and $b$, once they determine whether $x_c \ge y_c$ or
	$x_c \le y_c$ they know which sub-interval to continue the search in. Since $x_c$ and $y_c$ are
	integers in the range $0 \ldots m$, comparing $x_c$ and $y_c$ requires $O(\log m)$ bits of communication.
	The binary search itself requires $O(\log m)$ such comparisons.
	
	For the randomized protocol, we invoke the results of \cite{nis93} stating that the {\em randomized}
	communication complexity of determining which of two $k$-bit integers (here we have $k=\log m$) is larger is $O(\log k)$. Moreover, with that complexity the probability of error can be 
	made reduced to be $1/k^c$ for any desired constant $c$. Thus each of the comparisons in our protocol can be done, with probability of error, say, $1/(\log^2 m)$ by a randomized protocol with $O(\log\log m)$ bits of communication. Since we make at most $\log m$ such comparisons, the total
	probability of error still stays low, and the total number of bits of communication is as required.
\end{proof}

For the monotone variant we give a more efficient protocol that uses only $O(\log{m})$ communication. Towards this end, we consider two parameters for the problem: $m$, the number of integers held by the players and $k$, the upper bound on their values.

\begin{definition}
	The $mon-crossing(m,k)$ problem is the following: Alice gets a sequence of
	integer numbers $0 = x_0 \le x_1\le x_2 \le \cdots \le x_m = k$ and  Bob gets a sequence of
	integer numbers $k = y_0 \ge y_1 \ge \cdots \ge y_m = 0$.
	Their goal is to find an index $1 \le i \le m$ such that $x_{i-1} \le y_{i-1}$
	and $x_i \ge y_i$. 
\end{definition}

\begin{theorem}
	The (deterministic) communication complexity of the $mon-crossing(m,k)$ 
	problem is $O(\log m + \log k)$.
	In particular when $m=k$ (what we called the $mon-crossing(m)$ problem), the
	communication complexity is $O(\log m)$.
\end{theorem}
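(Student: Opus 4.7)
The plan is to construct a deterministic protocol for $mon\text{-}crossing(m,k)$ using $O(\log m + \log k)$ bits, generalizing the $O(\log m)$ protocol for the symmetric case $m = k$ given in the preceding theorem. Throughout its execution, the protocol maintains a joint state $(a, b, l, r)$ with $0 \le a < b \le m$ and $0 \le l \le r \le k$, together with boundary values $x_a, x_b, y_a, y_b$ known to both parties. The invariants are that $i^* \in (a, b]$, that $v^* := x_{i^*} \in [l, r]$, and that $x_a \le y_a$, $x_b \ge y_b$. Starting from $(a,b,l,r) = (0, m, 0, k)$, the protocol terminates when $b - a = 1$, outputting $i^* = b$.

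The key idea is that each round halves either $b - a$ or $r - l$ using $O(1)$ bits; since $b - a$ halves at most $\log m$ times and $r - l$ at most $\log k$ times, the total communication is $O(\log m + \log k)$. In each round the players compute $c = \lfloor (a+b)/2 \rfloor$ and $v = \lfloor (l+r)/2 \rfloor$; Alice sends the indicator bit $b_A = [x_c \ge v]$ and Bob sends $b_B = [y_c \ge v]$. If $b_A \ne b_B$, the comparison of $x_c$ against $y_c$ is determined by the threshold $v$, so the crossing side is known and $b - a$ halves. If $(b_A, b_B) = (1, 1)$, then by monotonicity of $y$ on $[a, c]$ (so that $y_{i^*} \ge y_c \ge v$ whenever $i^* \le c$, whence $v^* = x_{i^*} \ge y_{i^*} \ge v$) and of $x$ on $[c, b]$ (so that $v^* \ge x_c \ge v$ whenever $i^* > c$), we conclude $v^* \ge v$; updating $l \leftarrow v$ halves $r - l$.

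The main technical obstacle is the case $(b_A, b_B) = (0, 0)$, which is asymmetric to $(1, 1)$: the analogous bound $v^* < v$ holds when $i^* \le c$ (since $v^* \le x_c < v$), but when $i^* > c$ the value $v^* = x_{i^*}$ may exceed $v$ even though $y_c < v$, because monotonicity of $x$ does not bound $x_{i^*}$ from above. The plan is to handle this case by invoking a sub-protocol that iteratively refines the threshold within $[l, v)$, setting $v' = \lfloor (l + v)/2 \rfloor$, then $v'' = \lfloor (l + v')/2 \rfloor$, and so on, and repeating the two-bit query at each step. Each sub-step either resolves the comparison of $x_c$ and $y_c$ (halving $b - a$) or produces a $(1, 1)$-outcome at the refined threshold (halving $r - l$ by the same monotonicity argument). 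An amortized analysis then charges each bit of communication to either an index halving (at most $\log m$ total) or a value halving (at most $\log k$ total), giving the claimed bound of $O(\log m + \log k)$ bits.
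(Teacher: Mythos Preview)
Your proposal has a genuine gap in the handling of the $(0,0)$ case. The claim that ``each sub-step either resolves the comparison of $x_c$ and $y_c$ \ldots\ or produces a $(1,1)$-outcome'' is false: a sub-step can again yield $(0,0)$, and when the sub-protocol eventually exits by resolving the comparison (rather than via a $(1,1)$), the interval $[l,r]$ is left untouched. Nothing then prevents each of the $\log m$ index-halvings from costing $\Theta(\log k)$ sub-steps. Concretely, take $x = (0,0,\ldots,0,k)$ and $y = (k,1,1,\ldots,1,0)$. At every midpoint $c < m$ one has $x_c = 0$ and $y_c = 1$, so the query at threshold $v=(l+r)/2$ yields $(0,0)$, and your sub-protocol must bisect from $v$ all the way down to threshold $1$ before it finally sees the $(0,1)$ that separates $0$ from $1$. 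Since the exit is by resolution rather than $(1,1)$, $[l,r]$ is never updated, and this costs $\Theta(\log k)$ rounds for each of the $\log m$ midpoints, giving $\Theta(\log m \cdot \log k)$ bits in total rather than $O(\log m + \log k)$. (Even when the sub-protocol does exit via $(1,1)$ at some refined threshold $v^{(j)}$, setting $l \leftarrow v^{(j)}$ shrinks $r-l$ only by the factor $2^{j+1}/(2^{j+1}-1)$ after $j{+}1$ rounds, so the amortization still fails.)

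The paper's protocol uses a different idea. Instead of tracking an interval $[l,r]$ containing the value $x_{i^*}$, it \emph{clips} the sequences: when both $x_{m/2}$ and $y_{m/2}$ lie below $k/2$, it replaces every $x_i$ by $\min(k/2,x_i)$ and every $y_i$ by $\min(k/2,y_i)$ (and dually with $\max$ when both lie above $k/2$). The clipped pair is a bona fide monotone-crossing instance with value range $k/2$, and the key lemma is that any crossing of the clipped instance is already a crossing of the original --- this is where the monotonicity of both sequences together with the position of $x_{m/2},y_{m/2}$ relative to $k/2$ is used. Thus every $2$-bit round genuinely halves either $m$ or $k$, giving the recursion $CC(m,k) \le 2 + \max\bigl(CC(m/2,k),\,CC(m,k/2)\bigr)$. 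Your invariant $v^*\in[l,r]$ cannot exploit the $(0,0)$ information in this way; the clipping is the missing ingredient.
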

\begin{proof}
	In the first step Alice and Bob each send a single bit describing whether $x_{m/2} < k/2?$ and
	$y_{m/2} < k/2?$.  Now there are four possible answers:
	\begin{itemize}
		\item
		If $x_{m/2} < k/2$ and $y_{m/2} < k/2$ then define 
		$x'_i = min(k/2,x_i)$ and $y'_i = min(k/2, y_i)$.  Notice that 
		$0 = x'_0 \le x'_1\le x'_2 \le \cdots \le x'_m = k/2$ and  
		$k/2 = y'_0 \ge y'_1 \ge \cdots \ge y'_m = 0$, so we solve the problem recursively on
		the $x'_i$'s and $y'_i$'s and find an index $i$ such that $x'_{i-1} \le y'_{i-1}$
		and $x'_i \ge y'_i$.  
		
		Now if $i \le m/2$ then $x_{i-1} \le x_i \le x_{m/2} < k/2$ and so 
		$x'_{i-1}=x_{i-1}$ and $x'_i = x_i$ and since $y'_i \le x'_i < k/2$ also $y'_i = y_i$,
		while for $i-1$ we only know that $y_{i-1} \ge y'_{i-1}$.  So we get that
		$x_{i-1}=x'_{i-1} \le y'_{i-1} \le y_{i-1}$ and $x_i=x'_i \ge y'_i=y_i$ as needed.
		
		On the other hand  if $i > m/2$ i.e. $i-1 \ge m/2$ then $y_i \le y_{i-1} \le y_{m/2} < k/2$ and so 
		$y'_{i-1}=y_{i-1}$ and $y'_i = y_i$ and since $x_{i-1} \le y_{i-1} < k/2$ also $x'_{i-1} = x_{i-1}$,
		while for $i$ we only know that $x_{i} \ge x'_{i}$.  So we get that
		$x_{i-1}=x'_{i-1} \le y'_{i-1} = y_{i-1}$ and $x_i \ge x'_i \ge y'_i=y_i$ as needed.
		
		\item
		If $x_{m/2} \ge k/2$ and $y_{m/2} \ge k/2$ then define 
		$x'_i = max(k/2,x_i)-k/2$ and $y'_i = max(k/2, y_i)-k/2$.  Notice that 
		$0 = x'_0 \le x'_1\le x'_2 \le \cdots \le x'_m = k/2$ and  
		$k/2 = y'_0 \ge y'_1 \ge \cdots \ge y'_m = 0$, so we solve the problem recursively on
		the $x'_i$'s and $y'_i$'s and find an index $i$ such that $x'_{i-1} \le y'_{i-1}$
		and $x'_i \ge y'_i$.
		
		Now if $i > m/2$ i.e. $i-1 \ge m/2$ then $x_{i} \ge x_{i-1} \ge x_{m/2} \ge k/2$ and so 
		$x'_{i-1}=x_{i-1}-k/2$ and $x'_i = x_i-k/2$ and since $y'_{i-1} \ge  0$ also $y'_{i-1} = y_{i-1}-k/2$,
		while for $i$ we only know that $y'_{i} \ge y_i-k/2$.  So we get that
		$x_{i-1}=x'_{i-1}+k/2 \le y'_{i-1}+k/2 = y_{i-1}$ and $x_i=x'_i+k/2 \ge y'_i+k/2 \ge y_i$ as needed.
		
		On the other hand  if $i \le m/2$ then $y_{i-1} \ge y_{i} \ge y_{m/2} \ge k/2$ and so 
		$y'_{i-1}=y_{i-1}-k/2$ and $y'_i = y_i-k/2$ and since $x'_{i} \ge 0$ also $x'_{i} = x_{i}-k/2$,
		while for $i-1$ we only know that $x'_{i-1} \ge x'_{i-1}-k/2$.  So we get that
		$x_{i-1} \le x'_{i-1}+k/2 \le y'_{i-1}+k/2 = y_{i-1}$ and $x_i=x'_i+k/2 \ge y'_i+k/2 =y_i$ as needed.
		
		\item
		If $x_{m/2} \ge k/2$ and $y_{m/2} < k/2$ then change the values $y_{m/2}=0$
		and $x_{m/2}=k$
		and solve the problem recursively on 
		$0=x_0 \le x_1 \le \cdots \le x_{m/2-1} \le x_{m/2} = k$ and 
		$k=y_0 \ge \cdots \ge y_{m/2-1} \ge y_{m/2} = 0$ 
		and find an index $i$ such that $x_{i-1} \le y_{i-1}$
		and $x_i \ge y_i$ (with the new values of $x_{m/2}$ and $y_{m/2}$). 
		If $i < m/2$ then clearly we have the same inequalities with the old values as well.
		If $i=m/2$ then since with the original values of $x_{m/2}$ and $y_{m/2}$ we have that
		$x_{m/2} \ge k/2 > y_{m/2}$ then $i$ is a solution for the original problem as well.
		
		\item
		If $x_{m/2} < k/2$ and $y_{m/2} \ge k/2$ then 
		change the values $y_{m/2}=k$
		and $x_{m/2}=0$
		and solve the problem recursively on 
		$0=x_{m/2} \le x_{m/2+1} \le \cdots \le x_{m-1} \le x_{m} = k$ and 
		$k=y_{m/2} \ge y_{m/2+1} \ge \cdots \ge y_{m-1} \ge y_{m} = 0$ 
		and find an index $i$ such that $x_{i-1} \le y_{i-1}$
		and $x_i \ge y_i$ (with the new values of $x_{m/2}$ and $y_{m/2}$). 
		If $i-1 > m/2$ then clearly we have the same inequalities with the old values as well.
		If $i-1=m/2$ then since with the original values of $x_{m/2}$ and $y_{m/2}$ we have that
		$x_{m/2} < k/2 \le y_{m/2}$ then $i$ is a solution for the original problem as well.
	\end{itemize}
	
	Now notice that in the first two cases the recursive problem had a value of $k$ that was half
	of the original problem, while in the last two cases, the recursive problem had a value of $m$
	that is half of the original value. Thus we get the upper bound on communication
	of $CC(m,k) \le 2 + max(CC(m/2,k), CC(m,k/2))$ and the lemma follows.	
\end{proof}

Applying the equivalence of the two variants of the crossing problem to our fair division problems we get:

\begin{corollary} \label{cor:bounds_fairness}
	The deterministic communication complexity of finding an $\epsilon$-perfect allocation with two cuts between $n=2$ players is $O(\log^2 \epsilon^{-1})$. The randomized communication complexity of this problem is $O(\log \epsilon^{-1} \log\log \epsilon^{-1})$.
\end{corollary}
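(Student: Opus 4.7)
The plan is to derive this corollary as an immediate composition of the two results already established: the reduction from $\epsilon$-perfect cake-cutting (with two cuts, between two players) to the general crossing problem, and the communication complexity upper bounds for the general crossing problem.

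First I would invoke Lemma \ref{lem:perfect_to_crossing}, which shows that the $\epsilon$-perfect cake-cutting problem for $n=2$ players with $C=3$ intervals reduces to $crossing(m)$ with $m = O(\epsilon^{-1})$, at an additive cost of one extra round of $O(\log \epsilon^{-1})$ bits of communication for converting the chosen crossing index back into the actual cut locations of the cake. Then I would apply Theorem \ref{ub}, which establishes that $crossing(m)$ has deterministic communication complexity $O(\log^2 m)$ and randomized communication complexity $O(\log m \log \log m)$.

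Substituting $m = O(\epsilon^{-1})$ into these bounds yields $O(\log^2 \epsilon^{-1})$ deterministic and $O(\log \epsilon^{-1} \log \log \epsilon^{-1})$ randomized complexity for solving the reduced instance, and the extra single round of $O(\log \epsilon^{-1})$ bits from the reduction is absorbed into these asymptotics in both cases. This gives the claimed bounds.

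There is no real obstacle here, since both ingredients are already proven earlier in the paper; the only thing to verify carefully is that the reduction preserves the communication model (deterministic reduces to deterministic, and the randomized bound for crossing can be combined with the deterministic reduction without affecting error probability beyond a constant factor). Since the reduction itself is deterministic and uses only one round, this composition is transparent and the error probability of the randomized crossing protocol can be made an arbitrarily small constant, so it carries over directly to a randomized protocol for $\epsilon$-perfect cake-cutting with the stated complexity.
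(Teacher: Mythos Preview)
Your proposal is correct and matches the paper's approach exactly: the paper does not even spell out a proof for this corollary, simply stating that it follows by ``applying the equivalence of the two variants of the crossing problem to our fair division problems,'' i.e., combining Lemma~\ref{lem:perfect_to_crossing} with Theorem~\ref{ub} just as you describe. Your additional remarks about the deterministic reduction composing cleanly with the randomized crossing protocol are accurate and, if anything, more careful than the paper's own treatment.
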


\begin{corollary}\label{cor:bounds_fairness_mon}
	The deterministic communication complexity of finding a connected $\epsilon$-equitable allocation 
	between $n = 2$ players and a connected $\epsilon$-envy-free allocation 
	among $n = 3$ players is $O(\log \epsilon^{-1})$.
\end{corollary}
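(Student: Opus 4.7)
The plan is to obtain both bounds as a direct composition of the reductions from fair division to monotone crossing proved earlier in the section (Lemmas \ref{lem:equitable_to_crossing} and \ref{lem:envy-free_to_crossing}) with the deterministic $O(\log m)$ protocol for $mon\text{-}crossing(m)$ established in the preceding theorem. Since both ingredients are already in hand, the only work is to track the overhead of each reduction and verify the totals remain $O(\log \epsilon^{-1})$.

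For the equitable case with $n=2$ and $C=2$ connected pieces, I would first invoke Lemma \ref{lem:equitable_to_crossing} to convert the $\epsilon$-equitable cake-cutting task into an instance of $mon\text{-}crossing(m)$ with $m = O(\epsilon^{-1})$, at the cost of one extra round of $O(\log m) = O(\log \epsilon^{-1})$ bits needed to transmit the four endpoint values $x_{i-1}, x_i, y_{i-1}, y_i$ so the exact cut $x^\ast$ inside the pixel can be computed. Running the $O(\log m)$ protocol for $mon\text{-}crossing(m)$ on the resulting instance then uses $O(\log \epsilon^{-1})$ additional bits, so the total communication is $O(\log \epsilon^{-1})$ as claimed.

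For the $n=3$ connected envy-free case, the strategy is identical, substituting Lemma \ref{lem:envy-free_to_crossing} for Lemma \ref{lem:equitable_to_crossing}. That lemma reduces the problem to $mon\text{-}crossing(m)$ for $m = O(\epsilon^{-1})$ at the cost of a constant number of extra rounds of $O(\log \epsilon^{-1})$ bits each, and the monotone crossing subroutine contributes another $O(\log \epsilon^{-1})$; summing a constant number of logarithmic terms keeps the total at $O(\log \epsilon^{-1})$.

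There is no real obstacle here, as the corollary is essentially a syntactic composition of previously proved statements; the only point worth being careful about is confirming that the reduction overhead (a single round for the equitable case, a constant number of rounds for the envy-free case) is additive rather than multiplicative with the cost of $mon\text{-}crossing$, which is clear from the statements of the two reduction lemmas.
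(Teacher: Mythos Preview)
Your proposal is correct and is exactly the argument the paper intends: the corollary is stated without proof precisely because it is the direct composition of Lemmas~\ref{lem:equitable_to_crossing} and~\ref{lem:envy-free_to_crossing} with the $O(\log m)$ deterministic protocol for $mon\text{-}crossing(m)$, and your tracking of the additive $O(\log \epsilon^{-1})$ overhead in each reduction is accurate.
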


At this point one may wonder whether the communication complexity of the general $crossing(m)$ problem is also
in fact logarithmic, maybe even for deterministic protocols. We leave this as an open question.

The
main apparent difference though between the two fair division problems considered here and the ``easy'' fair division problems mentioned in Corollary \ref{easy} is the fact that it seems that the binary
search process here (as well as in the general simulation of moving knife procedures in Section \ref{moving}) cannot be done in a constant number of rounds.  Our main technical lower bound proves
that this is indeed the case.

\begin{theorem} \label{thm:bound_lower}
	Any (deterministic or randomized) protocol that solves $mon-crossing(m)$ using rounds of 
	communication of
	$t$-bits each (with $\log m \le t \le m$) requires $\Omega(\log m/ \log t)$ rounds of communication.
\end{theorem}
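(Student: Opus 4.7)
My plan is to prove this lower bound by round elimination, following the methodology of Miltersen--Nisan--Safra--Wigderson and its randomized strengthening by Sen--Venkatesh, which the authors explicitly cite. The target of $\Omega(\log m/\log t)$ rounds matches the well-known round--communication tradeoff for predecessor-search-style problems: the intuition is that a $t$-bit message can specify at most $O(t/\log m)$ ``useful probes'' into a partner's $O(\log m)$-bit-per-entry monotone sequence, so each round can shrink the effective search range by at most a factor of $O(t/\log m)$, not by the naive $2^t$.

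First I would fix a hard input distribution $\mu$ on $mon$-$crossing(m)$ instances. A natural choice is to parameterize instances by a pair of secret integers $(a,b)$ drawn uniformly from $[m]^2$ and construct $(x_i)$ from $a$ and $(y_i)$ from $b$ so that (i) the monotonicity constraints are automatically satisfied, (ii) every valid crossing index isolates a $\Theta(1)$-sized window around some function of $(a,b)$, and (iii) conditional on partial information about $a$ (resp.\ $b$), the residual distribution of the crossing location is still nearly uniform over a large range. The simplest concrete candidate I would try is a layered encoding: pick a secret index inside each of $\Theta(\log m/\log t)$ successive scales of size $t$, so that pinpointing the crossing amounts to ``chasing'' one $\log t$-bit coordinate per scale, exactly mirroring a pointer-chasing or predecessor instance.

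Next I would prove the round-elimination step: if a $[t,r]$-protocol solves $mon$-$crossing(m)$ under $\mu$ with bounded error $\delta$, then a $[t,r-1]$-protocol solves $mon$-$crossing(m')$ under a related distribution with $m' = \Theta(m/t)$ and error $\delta + o(1)$. The argument is a standard averaging/pigeonhole over the at most $2^t$ possible first messages: one fixes the most common message on a $2^{-t}$-fraction of inputs and argues that the residual instance, restricted to inputs consistent with that message, still embeds a $mon$-$crossing(m')$ sub-instance with the right marginals, because the geometric monotone structure limits how much a $t$-bit message can ``say'' about the crossing location to $O(\log t)$ effective bits. For randomized protocols with public coins and two-sided error I would invoke the Sen--Venkatesh round-elimination lemma in place of the deterministic MNSW version, which handles the error amplification across the $r$ iterations. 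Iterating $r$ times gives a $0$-round protocol for $mon$-$crossing$ of size $m/t^{\Theta(r)}$; since any non-trivial crossing problem requires at least $\Omega(\log m')$ bits of communication, this forces $m/t^{\Theta(r)} = O(1)$ and hence $r = \Omega(\log m/\log t)$.

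The main obstacle is the second step: designing the hard distribution and verifying that the residual distribution after a single $t$-bit message is still essentially a $mon$-$crossing(\Theta(m/t))$ instance (rather than something smaller, like $\Theta(m/2^t)$). This is precisely where the monotone structure must be exploited carefully, and it is the source of the $\log t$ in the denominator as opposed to $t$. Getting the exact combinatorial embedding requires choosing the distribution so that (a) there is no atypical message that concentrates on few crossing locations, and (b) the restricted sub-instance inherits the same recursive structure, so the round-elimination lemma can be applied inductively without degradation beyond the $o(1)$ error growth. Once this residual-structure claim is in hand, the rest of the proof is a routine iteration.
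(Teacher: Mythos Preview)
Your high-level plan (round elimination via the Miltersen--Nisan--Safra--Wigderson / Sen--Venkatesh lemma) is the same as the paper's, and your arithmetic $m \to m/t^{\Theta(r)} = O(1) \Rightarrow r = \Omega(\log m / \log t)$ is correct. But you are making the problem harder than it is, and the step you flag as ``the main obstacle'' is exactly the step you never carry out.

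The paper does \emph{not} design a hard distribution or redo the averaging argument over first messages. It uses the round-elimination lemma as a black box, in its standard form for the direct-sum-like problem $P^k(f)$: Alice holds $k$ instances $x^1,\ldots,x^k$ of $f$, Bob holds a single $y$ and an index $z$, and they must solve $f(x^z,y)$. The lemma says that a $[t,r]$-protocol for $P^k(f)$ yields a $[t,r-1]$-protocol for $f$ with error inflated by $\sqrt{t/k}$. All the distributional work is already inside that lemma; you do not need to touch it.

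What the paper actually proves is a one-line \emph{self-reducibility} of $mon$-$crossing$: the problem $P^k(mon\text{-}crossing(m))$ reduces, with zero communication, to a single instance of $mon\text{-}crossing(km)$. Alice concatenates her $k$ monotone sequences end-to-end with the appropriate vertical shifts, producing a single monotone increasing sequence of length $km$; Bob embeds his single sequence into block $z$ and pads all other blocks with the constants $km$ (before block $z$) and $0$ (after block $z$). Any crossing of the long sequences must lie in block $z$ and decodes to a crossing of $(x^z,y)$. That is the whole technical content. Setting $k = m^{1/r}$ and iterating, one eliminates all $r$ rounds while the error grows by $\sum \sqrt{t/k} \le r\sqrt{t/m^{1/r}}$, which stays below a constant provided $t \le c\, m^{1/r}/r^4$; rearranging gives $r = \Omega(\log m / \log t)$.

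Your ``layered encoding / pointer chasing'' distribution and the claim that a $t$-bit message reveals only $O(\log t)$ bits about the crossing location are plausible heuristics, but you never substantiate them, and they are not needed. The monotone structure is exploited exactly once, and very concretely: it is what makes the concatenation of $k$ monotone instances again a monotone instance. Replace your distributional program with this self-reducibility lemma and the proof becomes short and complete.
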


\begin{corollary} \label{med-lower}
	Any (deterministic or randomized) protocol for finding an $\epsilon$-perfect allocation with 2-cuts between $n=2$ players or for finding an $\epsilon$-equitable allocation between $n=2$ players using rounds of  communication of
	$polylog(\epsilon^{-1})$-bits each requires $\Omega(\log \epsilon^{-1}/ \log \log \epsilon^{-1})$ rounds of communication.
\end{corollary}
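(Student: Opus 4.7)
The plan is to derive this corollary as a direct consequence of the round lower bound for monotone crossing (Theorem \ref{thm:bound_lower}) combined with the two reductions already established in Lemmas \ref{lem:crossing_to_equitable} and \ref{lem:crossing_to_perfect}. The general strategy is contrapositive: an efficient few-round protocol for the fair-division task would yield an efficient few-round protocol for $mon-crossing(m)$ at the appropriate value of $m$, contradicting Theorem \ref{thm:bound_lower}.

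First I would handle the equitable case. Lemma \ref{lem:crossing_to_equitable} gives a reduction from $mon-crossing(m)$ to the $\epsilon$-equitable cake-cutting problem with $\epsilon^{-1}=O(m^2)$ and no additional communication. Consequently, a $[t,r]$-protocol for connected $\epsilon$-equitable allocation between $n=2$ players immediately yields a $[t,r]$-protocol for $mon-crossing(m)$ with $m=\Theta(\epsilon^{-1/2})$. Plugging $t=\mathrm{polylog}(\epsilon^{-1})$ and $m=\Theta(\epsilon^{-1/2})$ into Theorem \ref{thm:bound_lower}, the required number of rounds is
\[
\Omega\!\left(\frac{\log m}{\log t}\right)=\Omega\!\left(\frac{\log \epsilon^{-1}}{\log\log \epsilon^{-1}}\right),
\]
which is exactly the bound claimed. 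I would need to verify that the condition $\log m \leq t \leq m$ in Theorem \ref{thm:bound_lower} is satisfied by these parameters, which holds for $\epsilon$ small enough since $\mathrm{polylog}(\epsilon^{-1})$ sits comfortably between $\log m = \Theta(\log \epsilon^{-1})$ and $m = \Theta(\epsilon^{-1/2})$.

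For the perfect-allocation case I would argue identically, but routed through Lemma \ref{lem:crossing_to_perfect}. The key observation is that every instance of $mon-crossing(m)$ is, tautologically, an instance of the general $crossing(m)$ problem (the monotonicity assumption is just a restriction on the input, not on what the protocol is required to output). Hence the reduction of Lemma \ref{lem:crossing_to_perfect} from $crossing(m)$ into $\epsilon$-perfect cake-cutting with $\epsilon^{-1}=O(m^2)$ and no additional communication also serves as a reduction from $mon-crossing(m)$ into $\epsilon$-perfect cake-cutting with the same parameters. A $[t,r]$-protocol for the perfect problem therefore solves $mon-crossing(m)$ in the same $[t,r]$ budget, and the same parameter-substitution as above yields the $\Omega(\log \epsilon^{-1}/\log\log \epsilon^{-1})$ round lower bound.

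There is essentially no technical obstacle here, since all the heavy lifting is in Theorem \ref{thm:bound_lower} (the round-elimination argument for $mon-crossing$). The only points requiring care are (i) confirming that Lemmas \ref{lem:crossing_to_equitable} and \ref{lem:crossing_to_perfect} are indeed round-preserving — a reading of their statements shows they use zero additional rounds of communication, so the reduction from a protocol with $r$ rounds gives a protocol with $r$ rounds — and (ii) verifying the parameter regime $\log m \leq t \leq m$ for which Theorem \ref{thm:bound_lower} applies when $t=\mathrm{polylog}(\epsilon^{-1})$ and $m=\Theta(\epsilon^{-1/2})$. Both are routine, so the corollary follows.
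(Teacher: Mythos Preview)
Your proof is correct and is exactly how the corollary is meant to follow from Theorem \ref{thm:bound_lower} via Lemmas \ref{lem:crossing_to_equitable} and \ref{lem:crossing_to_perfect}. One small sharpening of your remark that monotonicity ``is just a restriction on the input, not on what the protocol is required to output'': the output condition for $mon\text{-}crossing$ is formally stricter (only the forward direction $x_{i-1}\le y_{i-1},\ x_i\ge y_i$), but on monotone inputs a reverse crossing $x_{i-1}\ge y_{i-1},\ x_i\le y_i$ forces $x_{i-1}=y_{i-1}$ and $x_i=y_i$ (since $x_i\ge x_{i-1}\ge y_{i-1}\ge y_i\ge x_i$), so the two output conditions coincide there and your reduction goes through.
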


Our lower bound will follow the general ``round elimination technique'' formalized by the
the ``round elimination lemma'' of \cite{mnsw95}, which was somewhat strengthened in \cite{ss08}.

\begin{definition} \label{def:jointproblem}
Given a communication problem $f(x,y)$ and a parameter $k$ we define a new communication problem,
denoted by $P^k(f)$ as
follows: Alice gets $k$ strings $x^1, \ldots, x^k$ and Bob gets a string
$y$ and an integer $1 \le z \le k$. Their aim is to compute $f(x^z , y)$.
\end{definition}

Suppose a protocol for this new problem is given, where
Alice goes first, sending Bob $t$ bits, where $t<<k$. 
Intuitively, it would seem that since Alice does not
know $z$, the first round of communication cannot be productive.
It turns out that this is correct.   

\begin{lemma} \label{lem:round_elimination_lemma}
	({\bf Round Elimination Lemma}, \cite{mnsw95,ss08}): 
	Suppose there is a $[t,r]$ protocol with Alice going first for solving
	$P^k(f)$ with probability of error $\delta$ then there is a 
	$[t,r-1]$-protocol (with Bob going first)
	for solving $f$ with probability of error $\delta+\sqrt{t/k}$.
\end{lemma}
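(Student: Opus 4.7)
The plan is to prove the lemma in the distributional setting, from which the randomized worst-case version follows by standard Yao-style arguments. Fix any distribution $\mu$ on inputs $(x,y)$ for $f$, and let $\mu^{\text{lift}}$ be the induced distribution on inputs of $P^k(f)$ in which $z \in [k]$ is uniform, $(x^z, y) \sim \mu$, and the remaining $x^j$ are drawn independently from the marginal $\mu_X$. Under $\mu^{\text{lift}}$, the given $[t,r]$ protocol $P$ for $P^k(f)$ (Alice first) has error at most $\delta$. The goal is to build a $[t,r-1]$ protocol $Q$ for $f$ (Bob first) with error at most $\delta + \sqrt{t/k}$ under $\mu$.

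The main obstacle is the information-theoretic step: to show that Alice's first message $M_1$ in $P$ reveals very little about the specific coordinate $x^z$. Since $M_1$ is a $t$-bit function of $\vec x = (x^1,\ldots,x^k)$, we have $I(M_1;\vec x) \le t$, and because the $x^j$ are mutually independent under $\mu^{\text{lift}}$, a standard chain-rule computation yields $\sum_{j=1}^k I(M_1; x^j \mid x^{-j}) \le I(M_1;\vec x) \le t$. Averaging over a uniform $z$ then gives $\mathbb{E}_z [I(M_1; x^z \mid x^{-z})] \le t/k$. Applying Pinsker's inequality inside the expectation and then Jensen's inequality to pull the square root outside, the expected total variation distance between the true joint distribution $p(M_1, x^z \mid x^{-z})$ and the product $p(M_1 \mid x^{-z})\, p(x^z \mid x^{-z})$ is at most $\sqrt{t/(2k)}$; morally, conditioned on the other coordinates, the first message and the singled-out coordinate are approximately independent.

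With this in hand, the construction of $Q$ is natural. On input $(x,y) \sim \mu$, Alice and Bob use public randomness to sample $z$ uniformly, $x^j \sim \mu_X$ for each $j \ne z$, and an auxiliary independent copy $\tilde x^z \sim \mu_X$; they then jointly compute $M_1^{*} := P_1(x^1,\ldots,x^{z-1},\tilde x^z,x^{z+1},\ldots,x^k)$, a common surrogate for Alice's first message that is independent of $x$. Conceptually Alice now places her real input $x$ at position $z$, and the two players simulate rounds $2,\ldots,r$ of $P$ with $M_1^{*}$ playing the role of Alice's first message: Bob speaks first, sending $M_2 = P_2((y,z), M_1^{*})$, then Alice replies with $M_3$, and so on, for exactly $r-1$ rounds of $t$ bits each. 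The output of $Q$ is whatever $P$ would declare from this transcript. The simulated and the true executions of $P$ on $\mu^{\text{lift}}$ differ only through the swap $M_1 \leftrightarrow M_1^{*}$; by the expected total variation bound from the previous paragraph (absorbing the $\sqrt{2}$), this swap costs at most $\sqrt{t/k}$ in total variation on the full transcript, so the error of $Q$ under $\mu$ is at most $\delta + \sqrt{t/k}$, as required.
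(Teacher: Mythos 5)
The paper does not actually prove this lemma---it is imported from \cite{mnsw95,ss08}---so your attempt has to stand on its own, and it has a genuine gap at the information-theoretic step. The inequality $\sum_{j=1}^k I(M_1;x^j\mid x^{-j}) \le I(M_1;\vec{x})$ is false, even for independent coordinates: take $x^1,\dots,x^k$ i.i.d.\ uniform bits and let $M_1$ be their parity; then $I(M_1;\vec{x})=1=t$, yet $I(M_1;x^j\mid x^{-j})=1$ for every $j$, so the sum is $k$ and $\mathbb{E}_z\bigl[I(M_1;x^z\mid x^{-z})\bigr]=1\gg t/k$. The chain rule only gives $\sum_j I(M_1;x^j\mid x^{<j})=I(M_1;\vec{x})$ (conditioning on prefixes), or $\sum_j I(M_1;x^j)\le I(M_1;\vec{x})$ for independent coordinates; conditioning on \emph{all} other coordinates can inflate the sum by a factor of $k$. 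This is not merely an analysis problem, it breaks the construction itself: you keep the publicly drawn $x^{-z}$ fixed and swap only a fresh $\tilde{x}^z$ into the first message, so consider a legal $[t,r]$ protocol for $P^k(f)$ whose first message includes the parity of Alice's $k$ strings and whose later rounds cross-check that parity, outputting garbage on mismatch. On honest executions the check never fires, so its error is still $\delta$; in your simulation the check fails with constant probability independent of $k$, so the resulting protocol for $f$ errs with probability far exceeding $\delta+\sqrt{t/k}$.

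The known proof (MNSW, with the $\sqrt{t/k}$ form due to the average-encoding strengthening) differs exactly where yours breaks. One conditions only on the prefix: publicly sample $z$ and $x^{<z}$, generate the surrogate first message $m$ from the conditional distribution $p(M_1\mid x^{<z})$ (which requires nothing from Alice), and---crucially---have Alice, who holds all coordinates, \emph{privately resample the suffix} $x^{>z}$ from $p\bigl(x^{>z}\mid x^{<z},\,x^z=x,\,M_1=m\bigr)$ before the players continue the simulation from round two with Bob speaking. Then the simulated joint distribution of $(\vec{x},M_1)$ differs from the true one only through the distance between $p(x^z,M_1\mid x^{<z})$ and $p(x^z)\,p(M_1\mid x^{<z})$, which Pinsker bounds via the prefix-conditioned information $I(M_1;x^z\mid x^{<z})$, whose average over $z$ is at most $t/k$ by the chain rule. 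Your surrounding scaffolding (lifting a distribution $\mu$, Jensen to pull out the square root, and a minimax argument to return from distributional to worst-case randomized error) is fine in outline, but without prefix-only conditioning and the suffix-resampling step the claimed $\sqrt{t/k}$ loss does not follow.
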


In order to use this lemma repeatedly towards contradiction, each time removing another round of communication, we will require a lemma showing how to reduce the $P^k(mon-crossing(m))$ problem into a single instance of $mon-crossing(mk)$.

\begin{lemma} \label{lem:pk_to_crossing}
The $P^k(mon-crossing(m))$ problem can be reduced, without any communication, to a single $mon-crossing(km)$ problem.
\end{lemma}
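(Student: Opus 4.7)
The plan is a non-interactive reduction in which Alice builds a sequence $X$ of length $km+1$ from $x^1,\ldots,x^k$, Bob builds a sequence $Y$ of length $km+1$ from $y$ and $z$, and any crossing of $(X,Y)$ projects to a crossing of $(x^z,y)$. Alice's side is the natural concatenation with additive shifts: define $X_{(j-1)m+i}=(j-1)m+x^j_i$ for $j\in\{1,\ldots,k\}$ and $i\in\{0,\ldots,m\}$. Since each $x^j$ runs from $0$ to $m$, consecutive blocks agree at their shared index $jm$ and the resulting sequence is monotone from $X_0=0$ to $X_{km}=km$.

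Bob's side embeds $y$ (shifted by $(z-1)m$) into the $z$-th block, and saturates at $km$ before that block and at $0$ after: set $Y_i=km$ for $0\le i\le (z-1)m$, $Y_{(z-1)m+\ell}=(z-1)m+y_\ell$ for $1\le\ell\le m-1$, and $Y_i=0$ for $zm\le i\le km$. Then $Y_0=km$, $Y_{km}=0$, and $Y$ is monotone non-increasing; the only non-trivial monotonicity checks are at the two transitions, where $Y$ drops from $km$ to $(z-1)m+y_1\le zm\le km$ at index $(z-1)m\to(z-1)m+1$, and from $(z-1)m+y_{m-1}\ge 0$ to $0$ at index $zm-1\to zm$.

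Correctness would come from localizing every crossing to block $z$. For $i^*\in[1,(z-1)m]$, block containment gives $X_{i^*}\le(z-1)m<km=Y_{i^*}$, so no crossing can occur; for $i^*\in[zm+1,km]$, $X_{i^*-1}\ge zm>0=Y_{i^*-1}$ does the same. Hence every crossing has the form $i^*=(z-1)m+\ell$ with $\ell\in\{1,\ldots,m\}$, and after subtracting the common shift $(z-1)m$ at indices $i^*-1$ and $i^*$, the conditions $X_{i^*-1}\le Y_{i^*-1}$ and $X_{i^*}\ge Y_{i^*}$ collapse to $x^z_{\ell-1}\le y_{\ell-1}$ and $x^z_\ell\ge y_\ell$, which is exactly the definition of $\ell$ being a crossing of $(x^z,y)$. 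The players therefore solve the single $mon-crossing(km)$ instance and return $\ell=i^*-(z-1)m$.

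The step that requires extra care is the boundary $\ell\in\{1,m\}$, which overlaps with $z\in\{1,k\}$ when block $z$ abuts an endpoint of the full range: Bob's prescribed values $Y_0=km$ and $Y_{km}=0$ overwrite the would-be embedded values $(z-1)m+y_0$ and $(z-1)m+y_m$. But $y_0=m$ and $y_m=0$ are fixed constants that enter the original $mon-crossing(m)$ problem only through the automatic inequalities $x^z_0\le y_0$ and $x^z_m\ge y_m$; the analogous automatic inequalities $X_{(z-1)m}=(z-1)m\le km=Y_{(z-1)m}$ and $X_{zm}=zm\ge 0=Y_{zm}$ hold in the reduced instance, so no information is lost. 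The whole construction uses no communication, completing the reduction.
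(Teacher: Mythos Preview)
Your proof is correct and follows essentially the same approach as the paper: concatenate Alice's $k$ instances with additive shifts of $m$, and have Bob saturate at $km$ before block $z$ and at $0$ after it while embedding the shifted $y$ inside block $z$. If anything, your treatment is slightly cleaner than the paper's, since you define $Y_{(z-1)m+\ell}$ only for $1\le\ell\le m-1$ and handle the block endpoints separately, whereas the paper double-defines $Y_{(z-1)m}$ and $Y_{zm}$ (once as $km$ or $0$, once via the shifted $y$) and glosses over the resulting inconsistency; your explicit boundary discussion for $\ell\in\{1,m\}$ fills exactly that gap.
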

\begin{proof}
	Alice holds $k$ vectors of length $m+1$ each $(x_0^1 \ldots x_m^1) \ldots (x_0^k \ldots x_m^k)$ and will produce from them a single vector as follows: for every $i=0 \ldots m$ and every $j=1 \ldots k$ we set
	$X_{(j-1) \cdot m + i} = x_i^j + (j-1)m$. Notice that indices that are integer multiples of $m$ are defined twice, once as $X_{(j-1)m} = x_0^j + (j-1)m$ and also as $X_{(j-2)m}=x_m^{j-1} + (j-2)m$, but since $x^j_0=0$ and $x^{j-1}_m=m$ these values are the same.  Also notice that the $X$'s are monotone increasing.
	
	Bob that get $y_0 \ldots y_m$ and an index $z$ will produce the longer vector as follows:
	$Y_0=Y_1=\cdots=Y_{(z-1)m}=mk$, $Y_{zm}=Y_{zm+1}=\cdots=Y_{km}=0$ and for all $i=0 \ldots m$, 
	$Y_{(z-1)m+i}= y_i + (z-1)m$.  Notice that $Y$ is monotone decreasing.
	
	Now let us look at a solution for the $mon-crossing(km)$ problem on $X$ and $Y$, which is
	some location $w$ where $Y_w \ge X_w$ but $Y_{w+1} \le X_{w+1}$. First we
	must have $w+1 > (z-1)m$ (since $Y_{(z-1)m} = km > (z-1)m = X_{(z-1)m}$) and also
	$w < zm$ (since $Y_{zm}=0 < zm =X_{zm}$). Thus $w=(z-1)m+i$ for some $0 \le i < m$ so
	$X_w=X_{(z-1)m+i} = x_i^z + (z-1)m$ and $Y_w = Y_{(z-1)m+i} = y_i + (z-1)m$ and similarly
	$X_{w+1} = x_{i+1}^z + (z-1)m$ and $Y_{w+1} = y_{i+1} + (z-1)m$.  It follows that
	$y_i \ge x_i^z$ but $y_{i+1} \le x_{i+1}^z$ as needed for the solution of our 
	$P^k(mon-crossing(m))$ problem.
\end{proof}

We can now complete the proof of the main lower bound.

\begin{lemma} \label{lem:biground}
Any (deterministic or randomized) protocol that solves $mon-crossing(m)$ in $r$ rounds with probability of error $1/3-1/(10r)$ must send at least $t=\Omega(m^{1/r}/r^4)$ bits of communication in at least one of these rounds.
\end{lemma}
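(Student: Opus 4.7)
The plan is to apply the Round Elimination Lemma (Lemma~\ref{lem:round_elimination_lemma}) inductively $r$ times, each application stripping one round of communication while shrinking the problem size by a factor of $k$ and inflating the error by $\sqrt{t/k}$, for a parameter $k$ to be fixed at the end. After all $r$ eliminations I arrive at a zero-round protocol for a still nontrivial instance of $mon-crossing$, which cannot succeed; the resulting constraint translates into the claimed lower bound on $t$.

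Suppose for contradiction there is a $[t,r]$-protocol $\Pi_r$ solving $mon-crossing(m)$ with error at most $\delta_r := 1/3 - 1/(10r)$. I would maintain the inductive invariant: for every $s \in \{r, r-1, \ldots, 0\}$ there is a $[t,s]$-protocol $\Pi_s$ for $mon-crossing(m_s)$ with error at most $\delta_s$, where $m_s = m/k^{r-s}$ and $\delta_s = \delta_r + (r-s)\sqrt{t/k}$. Given $\Pi_s$, whichever of the two players sends the first message, Lemma~\ref{lem:pk_to_crossing} (or its analogue for the other player, which follows from the symmetry of $mon-crossing$ under the involution $x_i \mapsto m - x_{m-i}$, $y_i \mapsto m - y_{m-i}$) lets me view $\Pi_s$ as a $[t,s]$-protocol for $P^k(mon-crossing(m_{s-1}))$ in which the first-speaking player holds the $k$ sub-instances. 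Lemma~\ref{lem:round_elimination_lemma} then yields a $[t,s-1]$-protocol $\Pi_{s-1}$ for $mon-crossing(m_{s-1})$ with error at most $\delta_s + \sqrt{t/k} = \delta_{s-1}$; in $\Pi_{s-1}$ the other player now speaks first, which is harmless since the next iteration handles both cases symmetrically.

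After $r$ steps I obtain a zero-round protocol $\Pi_0$ for $mon-crossing(m_0)$ with error at most $\delta_0 = \delta_r + r\sqrt{t/k}$. A zero-round protocol is merely a public-coin distribution over outputs that ignores the input; since for any $m_0 \geq 2$ one can exhibit an input on which only the index $1$ is a valid answer and another input on which only the index $m_0$ is valid, any such protocol has worst-case error at least $1/2$. Taking $k = \lfloor (m/2)^{1/r} \rfloor$ guarantees $m_0 \geq 2$, so I need only force $\delta_0 < 1/2$. Using the $1/(10r)$ slack built into $\delta_r$ and demanding $r\sqrt{t/k} < 1/(10r)$ gives $t < k/(100 r^4)$, i.e.\ $t = O(m^{1/r}/r^4)$ would yield a contradiction; hence any successful protocol must satisfy $t = \Omega(m^{1/r}/r^4)$.

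The subtle point I expect to require the most care is the alternating-player bookkeeping across the $r$ inductive steps: each Round Elimination flips which player speaks first, so the roles of ``multi-instance holder'' in consecutive $P^k$ reductions alternate between Alice and Bob. Verifying that both orientations of Lemma~\ref{lem:pk_to_crossing} really compose cleanly under the symmetry involution above, without degrading the size relation $m_{s-1} = m_s/k$ or introducing hidden rounds of communication, is the main housekeeping. Everything else is a straightforward telescoping accounting of errors and problem sizes.
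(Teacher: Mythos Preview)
Your proposal is correct and follows essentially the same route as the paper: iterate round elimination (Lemma~\ref{lem:round_elimination_lemma}) together with the embedding of $P^k(mon\text{-}crossing(m/k))$ into $mon\text{-}crossing(m)$ (Lemma~\ref{lem:pk_to_crossing}), accumulating an additive $\sqrt{t/k}$ error per step, and reach a contradiction with a zero-round protocol at the end. The paper phrases this as a one-step induction from $r$ to $r-1$ (with the induction hypothesis carrying the slack through the $1/3-1/(10r)$ error form) rather than your explicit $r$-fold unrolling with a fixed $k$, but the arithmetic is the same. You are also explicit about the alternating first-speaker issue and its resolution via the natural symmetry of $mon\text{-}crossing$, a point the paper leaves implicit.
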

\begin{proof}
	Let us start by proving the base case, $r=1$. 
	By the lemma above, a 1-round protocol for $mon-crossing(m)$ (with, say, probability of error $1/3$)
	implies a 1-round protocol for $P^{m/3}(mon-crossing(3))$ (with the same error) which, if
	$t < m/300$ implies
	a 0-round protocol for $mon-crossing(3)$ with an error larger by an additive $1/10$, but
	a 0-round protocol uses no communication so it cannot solve the $mon-crossing(m)$ problem even
	for a constant $m=3$ (say) and error $1/3+1/10$.

	Now for the induction step: assume by way of contradiction
	a $r$-round protocol for $mon-crossing(m)$ that uses $t < m^{1/r}/(100r^4)$ bits per round and has 
	probability of error at most $\epsilon \le 1/3-1/(10r)$.  Take 
	$k=m^{1/r}$ so by the two lemmas above a $r$-round protocol for $mon-crossing(m)$ implies
	a similar $r$ round protocol for $P^k(m^{(r-1)/r})$ which implies
	a $r-1$ round protocol for $m'= m^{(r-1)/r}$ that uses $t$ bits of communication per round 
	with probability of error 
	$\epsilon + 1/10r^2$.
	But notice that $(m')^{1/(r-1)} = m^{1/r}$ so
	so we get a $(r-1)$-round protocol for $mon-crossing(m')$ with 
	$t < (m')^{1/(r-1)}/(100r^4) \le (m')^{1/(r-1)}/(100(r-1)^4)$
	and error of at most $(1/3-1/(10r)) + 1/1(10r^2) \le 1/3 - 1/(10(r-1))$ which is impossible by the induction hypothesis.
\end{proof}

We note that the exact form of the error probability is only for the convenience of the induction. Clearly the same bound holds for any constant error.

\section{Discussion}
A remaining interesting direction is to understand the joint dependence on $n$ and $\epsilon$. We summarize the upper bounds for deterministic protocols (both from previous literature and here) in the next table.

\begin{table}[h!]
	\label{tab:sum}
	\begin{center}
		\begin{tabular}{|c | c|c|c|c ||} 
			
			\hline \hline
			Fairness notion & Number of players & Upper bounds for deterministic protocols \\
			\hline \hline
			$\epsilon$-proportional (connected) & $n \geq 2$ & $O\left(n \log{n} \cdot \log{\epsilon^{-1}}\right)$ \cite{EP84} \\ \hline 
			$\epsilon$-envy-free (any number of cuts) & $n \geq 2$ &  $O\left(n^{n^{n^{n^{n^{n}}}}} \cdot \log{\epsilon^{-1}}\right)$ \cite{AM16} \\ \hline
			\multirow{3}{*}{\centering{$\epsilon$-envy-free (connected)}}& $n=2$ &  $O(\log{\epsilon^{-1}})$ \cite{RW98} \\
			& $n = 3$ & $O(\log^2{\epsilon^{-1}})$ \cite{BN17} \\ 
			& $n \geq 4$ & $O(n \cdot \epsilon^{-1}\log{\epsilon^{-1}})$ (\textcolor{RubineRed}{$*$}) \\
			\hline
			\multirow{2}{*}{\centering{$\epsilon$-equitable (connected)}}& $n=2$ &  $O(\log^2{\epsilon^{-1}})$ \cite{CP12} \\
			& $n \geq 3$ &  $O\left(\min\{n, \epsilon^{-1}\} \cdot \log^2{\epsilon^{-1}}\right)$ \cite{CP12,PW17} \\  \hline 
			\multirow{2}{*}{\centering{$\epsilon$-perfect (minimum  cuts)}}& $n=2$ & $O(\log^2{\epsilon^{-1}})$\
			\cite{BN17}  \\ 
			& $n \geq 3$ &  $O\left(n^3\cdot \epsilon^{-1} \log{\epsilon^{-1}}\right)$ (\textcolor{RubineRed}{$*$}) \\ \hline \hline
		\end{tabular}
		\label{tab:summary}
	\end{center}
	\caption{Communication complexity upper bounds for deterministic protocols. The number of players is arbitrary. The bounds marked with blue references are obtained by simulating in the communication model protocols from the RW model. The bounds marked with (\textcolor{RubineRed}{$*$}) are implied by the general simulation in this paper (Proposition \ref{lem:sim_all}) of fairness notions that are guaranteed to exist with a bounded number of cuts. }
\end{table}

\newpage

\nocite{*}
\addcontentsline{toc}{section}{\protect\numberline{}References}%

\bibliographystyle{alpha}

\bibliography{cake}

\newpage

\appendix

\section{Equitable, Perfect, and Envy-Free Cake-Cutting}

\subsection{The Crossing Problem} \label{app:crossing_problem}

In this section we include the omitted proofs of the reductions between the crossing problem and fair division problems.

\medskip

\noindent \textbf{Lemma \ref{lem:perfect_to_crossing}.}
\emph{For $n=2$ players, the $\epsilon$-perfect cake-cutting problem into $C=3$ intervals can be reduced to the general $crossing(m)$ problem for $m=O(\epsilon^{-1})$ with a single additional round of $O(\log \epsilon^{-1})$ bits of communication.}

\begin{proof}
	Fix $m=5D/\epsilon$, where $D$ is the upper bound on the density of the players' valuations.
	For ease of notation let each of the two players slightly perturb its valuation by at most 
	$1/m$ so that $v([0,k/m])=v([k/m,1])=1/2$ for some integer $k$ (that may be different for each of
	the two players).    We will find a $4\epsilon/5$-perfect cutting for the perturbed valuations
	which implies an $\epsilon$-perfect one for the original valuations.
	
	In a single round of communication the players exchange their values of $k$, and without loss
	of generality assume that Alice's $k$ is no larger than Bob's $k$, and from now on we will let
	$k$ be Alice's value.  Thus $v_{Alice}([0,k/m])=v_{Alice}[k/m,1])=1/2$ while 
	$v_{Bob}([0,k/m]) \le 1/2$ and $v_{Bob}[k/m,1]) \ge 1/2$. 
	
	Alice chooses $x_0 \le x_1 \le \cdots \le x_k$ so that
	$x_i \in \{0,1, \ldots, m\}$ is the integer that minimizes
	$|v_{Alice}([i/m,x_i/m])-1/2|$.  Note that since the density is at most $D$, 
	the maximum weight of an interval of length $1/m$ is $D/m \le \epsilon/5$ so we sure
	get $|v_{Alice}([i/m,x_i/m])-1/2| \le \epsilon/5$.  Bob chooses in a similar fashion
	$y_0 \le y_1 \le \cdots \le y_k$.  Note that since $v_{Bob}([k/m,1]) \ge 1/2$, then
	the $y_i$'s are still well defined and are within the range $[k,m]$ 
	even though we are using Alice's value of $k$.   
	Thus $x_0=k$ and $x_k=m$ while
	$y_0 \ge k$ and $y_k \le m$ so if we pad $x_k=x_{k+1}=\cdots=x_m$ and $y_k=y_{k+1}=\cdots=y_m$
	then
	we get vectors that satisfy the input conditions to the 
	general $crossing(m)$ problem.\footnote{Note that even though both vectors are monotone, we still
		don't get an instance of the monotone crossing problem since here both vectors are increasing
		rather than one of them decreasing.}
	
	Now assume that we get a solution to the 
	general $crossing(m)$ problem which wlog we assume is $x_{i-1} \le y_{i_1}$ and $x_i \ge y_i$ (the other case is similar).  We claim that cutting the cake
	at $i/m$ and at $x_i/m$ and handing one of the players the middle part is an approximately
	perfect cake cutting.   By our construction $v_{Alice}([i/m,x_i/m])-1/2| \le \epsilon/5$.
	We can calculate $v_{Bob}([i/m,x_i/m]) = v_{Bob}([i/m,y_i/m])-v_{Bob}([x_i/m,y_i/m])$.
	For the first term we have by our construction $|v_{Bob}([i/m,y_i/m])-1/2| \le \epsilon/5$.
	Since $y_{i-1} \le x_{i-1} \le x_i$, the second term can be bounded by 
	$v_{Bob}([x_i/m,y_i/m]) \le v_{Bob}([y_{i-1}/m,y_i/m])$ which in turn can be bounded
	by $v_{Bob}([y_{i-1}/m,y_i/m]) = v_{Bob}([i/m, y_i/m])- v_{Bob}([(i-1)/m, y_{i-1}/m]) -
	v_{Bob}([(i-1)/m, i/m])$.  The first two terms are each within $\epsilon/5$ from $1/2$
	and the second term is at most $D/m \le \epsilon/5$ thus in total we have bounded
	$|v_{Bob}([i/m,x_i/m])-1/2| \le 4\epsilon/5$ as needed.
\end{proof}

\medskip

\noindent \textbf{Lemma \ref{lem:crossing_to_perfect}.}
\emph{The general $crossing(m)$ problem can be reduced (without any additional communication) into the $\epsilon$-perfect cake cutting problem among $n=2$ players into $C=3$ intervals, with $\epsilon^{-1}=O(m^2)$.}
\begin{proof}
	The two players are given their inputs $x_0 \ldots x_m$ and $y_0 \ldots y_m$ to the $crossing(m)$ problem and need to construct valuations $v_A$ and $v_B$ from them. We
	first assume without loss of generality (by padding) that $x_0=0$, $x_m=m$. 
	
	For simplicity of presentation each of the two players will slightly modify their input by adding the quantity $2m^2 + m + 2mi$ to each $x_i$ and $y_i$.  
	Notice that the range of the
	$x_i$'s and $y_i$'s is now $\{H,2H\}$, where $H=2m^2+m$ and each vector is monotone increasing: 
	$m \le x_{i+1}-x_i \le 3m$.  The required
	answer to the $crossing(m)$ problem with these new values is still the same as for the original problem since each pair of $x_i$ and $y_i$ were increased by the same amount.  From now on we will 
	use the notation $x_i$ and $y_i$ to be the new modified values rather than the original ones.  (So now we have $x_0=H$ and $x_m=2H$ while $y_0 \ge H$ and $y_m \le 2H$.)
	
	Alice will define her valuation $v_{Alice}$ as follows: 
	\begin{itemize}
		\item $v_{Alice}([0,x_0/(2H)])=1/2$ with uniform density inside this interval.
		\item For every $i=1 \ldots m$, $v_{Alice}([x_{i-1}/(2H),x_i/(2H)]) = 1/(2m)$, with uniform density inside this interval.  (Note that the density is bounded from below and from above by a constant.)
	\end{itemize}
	
	Bob will create his valuation $x_{Bob}$ similarly but with the $y$ values.
	This implies that for every $i=0 \ldots m$ we have $v_{Alice}([i/(2m),x_i/(2H)])=1/2$, and 
	$v_{Bob}([i/(2m),y_i/(2H)])=1/2$. Now consider some solution to the $\epsilon$-perfect allocation problem.  I.e.
	some interval such that 
	$|v_{Alice}([a,b])-1/2| \le \epsilon$ and $|v_{Bob}([a,b])-1/2| \le \epsilon$.
	Let us assume that $(i-1)/(2m) \le a \le i/(2m)$ for $1 \le i \le m$ and present
	$a = \alpha (1-i)/(2m) + (1-\alpha)i/(2m)$ 
	for some $0 \le \alpha \le 1$.\footnote{We are ignoring here the theoretical possibility that
		$a$ is larger than $1/2$ which, since $v_{Alice}([1/2,1])=1/2$ and $v_{Alice}$ has density bounded
		from below by a constant, can only be true when $a \le 1/2 + O(\epsilon)$, in which case the
		inequality holds for $i=m$ to within an additive $O(\epsilon)$ term, which suffices for the rest of the proof.}    So we must also have
	$b = \alpha x_{i-1}/(2H) + (1-\alpha) x_i/(2H) + O(\epsilon)$ (since $v_{Alice}([i/(2m),x_i/(2H)])=1/2$ and $v_{Alice}$ has uniform density within
	each of the sub-intervals $[(i-1)/(2m),i/(2m)]$ and $[x_{i-1}/(2H),x_i/(2H)]$ and since 
	the density is
	bounded by a constant 
	from below and from above.) For the same reason, using $v_{Bob}$ 
	we must have $b = \alpha y_{i-1}/(2H) + (1-\alpha) y_i/(2H) + O(\epsilon)$.
	Now notice that if both $x_{i-1}>y_{i-1}$ and $x_i>y_i$ then since these are integers then 
	$\alpha x_{i-1}/(2H) + (1-\alpha) x_i/(2H)$ must be 
	larger than $\alpha y_{i-1}/(2H) + (1-\alpha) y_i/(2H)$ by at least $1/(2H)$ (and the opposite must)
	be true in the dual case where $x_{i-1}<y_{i-1}$ and $x_i<y_i$).  This is impossible if the 
	$O(\epsilon)$ term is smaller than $1/(2H)$ in which case it follows that 
	either $x_{i-1} \ge y_{i-1}$ and $x_i \le y_i$ or vice versa, as required.
\end{proof}

We also show that $mon-crossing$ can be used to solve the envy-free cake cutting problem with connected pieces for three players.

\medskip

\noindent \textbf{Lemma \ref{lem:envy-free_to_crossing}.}
\emph{For $n=3$ players, the $\epsilon$-envy-free cake-cutting problem with connected pieces can be reduced to the $mon-crossing(m)$
	problem for $m=O(\epsilon^{-1})$ with an additional constant number of rounds of $O(\log{\epsilon^{-1}})$ bits of communication.}
\begin{proof}		
	Given a cake cutting problem among Alice, Bob, and Carol, we will construct a monotone crossing instance whose solution will reveal an approximate solution to the envy-free allocation among the three players. 
	Again each player starts by perturbing its valuation to make it $m$-simple, for $m = 10 D/\epsilon$. We will find an $\epsilon/2$-envy-free allocation for the $m$-simple valuations, which will automatically give an $\epsilon$-envy-free allocation for the original valuations. 
	
	Let the players exchange their midpoints of the cake ($m_A,m_B,m_C$), the $1/3$ marks (i.e. the points $\ell_A, \ell_B, \ell_C$ for which $v_{Alice}([0,\ell_A])$ $=$ $v_{Bob}([0,\ell_B])$ $=$ $v_{Carol}([0,\ell_C]) = 1/3$), and the $2/3$ marks ($r_A, r_B, r_C$). This can be done in one round with $O(\log{\epsilon^{-1}})$ bits of communication.
	
	W.l.o.g., assume that $\ell_A \leq \ell_B \leq \ell_C$. If there is an $\epsilon/2$-envy-free allocation with cuts $\ell_A$ and $r_A$, return it. These preferences can be determined with $O(1)$ communication. 
	Otherwise, both Bob and Carol prefer one of the pieces $[\ell_A, r_A]$ or $[r_A, 1]$ to the outside options by more than $\epsilon/2$. We have two cases: \\
	
	$\bullet$ \emph{Case 1:} Bob and Carol prefer the piece $[\ell_A, r_A]$. 
	\medskip
	
	If $m_A \leq m_B$, Alice and Bob calculate $j = \lfloor m \cdot \ell_A \rfloor$ and $k = \lfloor m \cdot m_A \rfloor$.
	If $m_A > m_B$, they set $j = \lfloor m \cdot \ell_A \rfloor$ and
	$k = \lfloor m \cdot m_B\rfloor$. We will show the details for $m_A \leq m_B$; the arguments for the other case are similar. Clearly $0 \leq j \leq k < m$.
	Alice defines her sequence $x$ starting with $x_{j} = \lfloor m \cdot r_A \rfloor$.
	The first term, $x_j$, is set to an integer for which the difference $|v_{Alice}([0, j/m]) - v_{Alice}([x_j/m,1])|$ is minimized. Each additional term $x_i$ is defined as an integer smaller than or equal to $x_{i-1}$ for which 
	$|v_{Alice}([0, i/m]) - v_{Alice}([x_i/m,1])|$ is minimal. 
	Since each interval $[\ell/m,(\ell+1)/m]$ is worth at most $\epsilon/10$ to either player, we have that $|v_{Alice}([0, i/m]) - v_{Alice}([x_i/m,1])| \leq \epsilon/10$. 
	
	Bob will use his valuation to define a weakly increasing sequence $y$ as follows. For each $i = j, \ldots, k$, set $y_i$ as the integer for which the piece $[i/m,y_i/m]$ is as close as possible from Bob's point of view to the largest outside option (i.e. $[0,i/m]$ or $[y_i/m,1]$), such that $y_i \geq y_{i-1}$. In case of ties, the first term, $y_{j}$, is chosen as small as possible.
	The sequence $y$ can always be set to be weakly increasing for the following reason. On the range of indices $i$ for which $i/m \in [\ell_A, \ell_B]$, the terms $y_i$ are increasing since the piece $[0,i/m]$ is worth less than $1/3$ to Bob, the piece $[i/m,1]$ is shrinking as $i$ increases, and so the (approximate) tie must be created between $[i/m,y_i/m]$ and $[y_i/m,1]$. On the range of indices $i$ for which $i/m \geq \ell_B$, the $y_i$ terms continue to grow as $i$ increases since the largest of $[i/m,y_i/m], [y_i/m,1]$ must now match the (increasing) value of the piece $[0,i/m]$, which has become the largest (approximately).
	We obtain that for all $i = j \ldots k$,
	$\left|v_{Bob}([i/m,y_i/m]) - \max\left\{v_{Bob}([0,i/m]), v_{Bob}([y_i/m,1])\right\} \right| \leq \epsilon/10.$
	
	Finally, Alice sets the prefix of her sequence to m and the suffix to zero, while Bob does so reversely (i.e. $x_1 = \ldots = x_{j-1} = m = y_{k+1} = \ldots = y_m$ and $y_1 = \ldots = y_j = 0 = x_{k+1} = \ldots x_m$). 
	
	We show the crossing terms for $x$ and $y$ must be in the range $j \ldots k$.
	By construction $v_{Alice}([x_{j}/m,r_A]) \leq \epsilon/10$ and $v_{Alice}([x_{k}/m, m_A]) \leq \epsilon/10$. We have $y_{j} \leq x_{j}$, since $x_j$ is such that for Bob $[j/m,x_{j}/m]$ is larger than the outside options by more than $\epsilon/10$, while $[j/m,y_{j}/m]$ is approximately tied with one of $[0,j/m]$ or $[y_j/m]$ for Bob.
	On the other hand, $y_{k} \geq x_{k}$ since $v_{Bob}([k/m,x_k/m]) \leq \epsilon/10 < 1/3 - \epsilon/10 \leq v_{Bob}([k/m,y_k/m])$. 
	
	Suppose we are given a solution $x_{i-1} \geq y_{i-1}$ and $x_i \leq y_i$ for the $mon-crossing(m)$ problem with sequences $x$ and $y$. The existence of a crossing at index $i$ implies $S = [x_{i}, x_{i-1}] \cap [y_{i-1},y_i] \neq \emptyset$. Let $z$ be the  maximum element of $S$. Then $x_i \leq z \leq x_{i-1}$ and $y_{i-1} \leq z \leq y_i$.
	
	We claim that cutting at $i/m$ and $z/m$ is an $\epsilon/2$-envy-free solution with respect to the simple valuations. By the bound on the densities, the choice of $m$ and of the sequences $x$ and $y$, we have that $[(i-1)/m,i/m]$ is worth at most $\epsilon/10$ for both Alice and Bob. Since Alice views the pieces $[0,(i-1)/m]$ and $[x_{i-1}/m,1]$, and $[0,i/m]$ and $[x_{i}/m,1]$, respectively, as tied within $\epsilon/10$, we get $v_{Alice}([x_{i-1}/m,x_i/m]) \leq 3\epsilon/10$.
	For Bob we similarly have $v_{Bob}([y_{i-1}/m,y_i/m]) \leq 3\epsilon/10$. 
	From $-\epsilon/10 \leq v_{Alice}([x_i/m,1]) - v_{Alice}([0,i/m]) \leq \epsilon/10$ and
	$v_{Alice}([z/m,1])$ $=$ $v_{Alice}([x_i/m,1]) - v_{Alice}([x_i/m,z/m])$, we obtain 
	\begin{eqnarray*}
		- 4\epsilon/10 & \leq & -\epsilon/10 - v_{Alice}([x_i/m,z/m]) \leq v_{Alice}([z/m,1]) - v_{Alice}([0,i/m]) \\
		& = & v_{Alice}([x_i/m,1]) - v_{Alice}([x_i/m,z/m]) - v_{Alice}([0,i/m])
		\leq \epsilon/10 - v_{Alice}([x_i/m,z/m]) \\
		& < & 4\epsilon/10
	\end{eqnarray*}
	The pieces $[0,i/m]$ and $[z/m,1]$ are tied for Alice within $4\epsilon/10$.
	Moreover, since Alice values $[i/m,x_i/m]$ for at most $1/3-\epsilon/10$, the piece $[i/m,z/m]$ is worth at most $1/3 + 2\epsilon/10$ to her, which implies that $\min\{v_{Alice}([0,i/m]), v_{Alice}([z/m,1])\} \geq v_{Alice}([i/m,z/m]) - 5\epsilon/10$. Thus giving Alice one of $[0,i/m]$ or $[z/m,1]$ would lead to her having envy of at most $\epsilon/2$ for any other player. For Bob, if the tie is between $[i/m,y_i/m]$ and $[y_i/m,1]$, then $|v_{Bob}([i/m,y_i/m]) - v_{Bob}([y_i/m,1])| \leq \epsilon/10$. From $v_{Bob}([z/m,y_i/m]) \leq 3\epsilon/10$, we get that $|v_{Bob}([i/m,z/m])$ $-$ $v_{Bob}([z/m,1])|$ $\leq$ $4\epsilon/10$ and $|v_{Bob}([i/m,z/m])- v_{Bob}([0,i/m])| \leq 4 \epsilon/10$. The error term when Bob's tie is between $[0,i/m]$ and $[i/m,y_i/m]$ is similar, which implies Bob has two largest pieces tied within $\epsilon/2$, one of which is $[i/m,z/m]$.
	
	These cuts are revelead to Carol, who chooses her favorite piece among $[0,i/m]$, $[i/m,z/m]$, and $[z/m,1]$. If Carol chooses $[0,i/m]$, then Alice takes $[z/m,1]$ and Bob $[i/m,z/m]$. If Carol chooses $[i/m,z/m]$, Bob takes his favorite from $[0,i/m]$ and $[z/m,1]$, while Alice gets the remaining piece. If Carol chooses $[z/m,1]$, Alice picks $[0,i/m]$ and Bob $[i/m,z/m]$. \\
	
	$\bullet$ \noindent \emph{Case 2:} Bob and Carol prefer the piece $[r_A,1]$ to $[0, \ell_A]$ and $[\ell_A, r_A]$ by more than $\epsilon/2$. 
	\medskip
	
	Then $v_{Bob}([r_B,1]) > 1/3$, so $r_A < r_B$.
	Alice and Bob compute $j = \lfloor r_A \cdot m \rfloor$
	and $k = \lceil r_B \cdot m \rceil$.
	For each $i = j \ldots k$, Alice sets $x_i$ as the integer minimizing for her the difference in values between $[0,x_i/m]$ and $[x_i/m,i/m]$. Then $|v_{Alice}([0,x_i/m]) - v_{Alice}([x_i/m,i/m])| \leq \epsilon/10$ and the sequence $x$ is weakly increasing since the interval $[0,i/m]$ grows with $i$.
	For each $i = j \ldots k$, Bob sets $y_i$ as the integer that equalizes as much as possible for him the piece $[i/m,1]$ with $[0,y_i/m]$. Then $|v_{Bob}([0,y_i/m]) - v_{Bob}([i/m,1])| \leq \epsilon/10$ and the sequence $y$ is weakly decreasing.
	
	To see why the sequences must cross in the range $j \ldots k$, note that $y_j \geq x_j$ since Bob values $[0,x_j/m]$ less than $1/3 - \epsilon/10$ (by $\ell_A \leq \ell_B$) while $[0,y_j/m]$ is worth to him at least $1/3 - \epsilon/10$.
	We must also show $y_k \leq x_k$. Suppose the opposite holds, $y_k > x_k$. Then Bob views the pieces $[0,y_k/m]$, $[y_k/m,i/m]$, and $[k/m,1]$ as equal within $\epsilon/10$. Moreover, Carol cannot prefer $[0,y_k/m]$ to the other pieces by more than $\epsilon/10$, since $\ell_C > \ell_B$. Then we can immediately obtain an $\epsilon/2$-envy-free allocation by letting Carol take her favorite piece between $[y_k/m,k/m]$ and $[k/m,1]$, giving Alice $[0,y_k/m]$ (which is worth to her at least as much as $[0,x_k/m]$ and $[k/m,1]$), and Bob the remaining piece. This allocation can be found with $O(\log{\epsilon^{-1}})$ communication without even the players constructing the sequences. Thus the more difficult case is when $y_k \leq x_k$. Let Alice and Bob pad their sequences so that $x_1 = \ldots x_{j-1} = 0 = y_{k+1} = \ldots = y_m$ and $y_1 = \ldots y_{j-1} = m = x_{k+1} = \ldots x_m$.
	
	Suppose we have a solution to the $mon-crossing(m)$ problem with $x_{i-1} \leq y_{i-1}$ and $x_i \geq y_i$. Then $S = [x_{i-1}, x_i] \cap [y_{i}, y_{i-1}] \neq \emptyset$. Let $z$ be the maximum element of $S$.
	We will obtain an $\epsilon/2$-envy-free allocation by cutting at $z/m$ and $i/m$. As in Case 1, Alice and Bob view the pieces $[0,z/m]$, $[z/m,i/m]$ and $[0,z/m]$, $[i/m,1]$, respectively, as tied to be the largest within $\epsilon/2$. Carol is asked for her favorite piece. If Carol selects $[0,z/m]$, Alice takes $[z/m,i/m]$ and Bob $[i/m,1]$. If Carol selects $[z/m,i/m]$, Alice takes $[0,z/m]$ and Bob $[i/m,1]$. Otherwise, Carol takes $[i/m,1]$, Alice $[z/m,i/m]$ and Bob $[0,z/m]$.
	
	In both cases we obtained an $\epsilon/2$-envy-free allocation for the $m$-simple valuations, which is an $\epsilon$-envy-free solution with respect to the real valuations as required.
\end{proof}

\section{Moving Knife Protocols} \label{sec:appendix_mk}

A moving knife \emph{step} is a continuous operation that involves several devices moving continuously across the cake until some stopping condition is met.


\begin{definition} {\bf (A Moving Knife Step)}  \label{def:knife}
	There are a constant number $K$ of {\em Devices} some of which have a position on the cake
	and are called 
	{\em Knives} and others can have arbitrary real values and may be called {\em Triggers}.  
	The devices are numbered $1 \ldots K$ and 
	each device $j$ is controlled by a player $i_j$ and has a real value $x_j$ that changes continuously as {\em time} proceeds from $\alpha$ to $\omega$, where $0 \leq \alpha \leq \omega \leq 1$.
	Thus the value of each knife $j$ (i.e. its position on the cake) is given by some function 
	$x_j(t) \in [0,1]$, while the value of each trigger $j$
	is given by a function $x_j(t) \in \Re$. 
	
	\medskip
	
	The requirement is that the first device is a knife whose position is equal to the time, while the value of any additional device $j$ is $x_j(t) = F_j\left(t, x_1(t), \ldots, x_{j-1}(t)\right)$. The function $F_j$ may depend in an arbitrary way on the valuation $v_{i_j}$ of the player $i_j$ controlling the device and on any outcomes of previous steps of the protocol, but its dependence on the time $t$ and the values of previous devices $x_1(t) \ldots x_{j-1}(t)$ is Lipschitz continuous for every hungry valuation $v_{i_j}$.
	\medskip
	
	An outcome for the moving knife step is the index of a trigger $j$ with different signs at time $\alpha$ and $\omega$ (i.e. where $x_j(\alpha)\cdot x_j(\omega) \le 0$), together with a time $t$ such that $x_j(t)=0$, as well as the values of all devices $x_{j'}(t)$ at that time. (If the value of the trigger $x_j$ happens to be monotone then the time $t$ is unique, but in general there may be different such $t$ and any one of them may be given.)
\end{definition}

We may not always be able to find an exact outcome of a moving knife step with bounded communication, as the time at which a trigger fires can be irrational for example, so we will concentrate on obtaining approximate outcomes.

\begin{definition}[$\epsilon$-outcome of a moving knife step; see also \cite{BN17}]
	An $\epsilon$-outcome of a moving knife step running from time $\alpha$ to $\omega$ with devices $1 \ldots K$ is
	the index of a trigger $i$ that switches signs from $\alpha$ to $\omega$ (i.e. with $x_i(\alpha)\cdot x_i(\omega) \le 0$), together with a time $t$ and approximate values $\tilde{x}_1(t) \ldots \tilde{x}_K(t)$ of all the devices at this time, such that 
	\begin{itemize}
		\item $x_i(t) \in [-\epsilon,\epsilon]$, and 
		\item $|\tilde{x}_j(t) - x_j(t)| \leq \epsilon$, for all $j = 1 \ldots K$.
	\end{itemize}
\end{definition}

Moving knife steps can be simulated approximately in the communication model \footnote{We note that this simulation may find a time that is in fact far from the time where the trigger becomes exactly zero. This is unlike the RW model, where we can maintain the values of the devices with infinite precision, case in which for any point in time $t$ the values $x_1(t) \ldots x_K(t)$ are known exactly. Nevertheless, the moving knife protocols from the literature work for both notions of approximation.
} as follows.

\begin{theorem}
	Let $\mathcal{M}$ be a moving knife step.
	Then for all $\epsilon > 0$ and hungry valuations $v_1 \ldots v_n$, for every trigger of $\mathcal{M}$ that switches signs from the beginning to the end of time that $\mathcal{M}$ runs, there is a communication protocol $\mathcal{M}_{\epsilon}$ that finds an $\epsilon$-outcome associated with that trigger using
	$O\left( \log \epsilon^{-1}\right)$ rounds each with 
	$O\left(\log \epsilon^{-1}\right)$ bits of communication.
\end{theorem}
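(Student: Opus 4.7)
The plan is to reduce the task to a binary search over the time parameter $t \in [\alpha,\omega]$ for the selected trigger $i$. The key primitive is a sub-protocol that, given any query time $t$, evaluates approximate values $\tilde x_1(t),\ldots,\tilde x_K(t)$ of every device at time $t$. Because the devices are indexed so that each $x_j(t) = F_j(t, x_1(t),\ldots,x_{j-1}(t))$ depends only on earlier devices and on the time, this can be carried out by $K-1$ sequential rounds of communication: $\tilde x_1(t)=t$ is public, and for each $j\geq 2$ the player $i_j$ (who knows $F_j$ via her valuation $v_{i_j}$) computes $F_j(t,\tilde x_1(t),\ldots,\tilde x_{j-1}(t))$, rounds the result to the nearest multiple of a discretization step $\delta$, and broadcasts this value using $O(\log \delta^{-1})=O(\log \epsilon^{-1})$ bits once $\delta$ is chosen to be a fixed polynomial in $\epsilon$.

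Next I would carry out the error analysis for this sub-protocol. Let $L$ be a common Lipschitz constant for all the $F_j$'s, which exists because $K$ is constant and each $F_j$ is Lipschitz on hungry valuations by Definition \ref{def:knife}. If $e_j = |\tilde x_j(t) - x_j(t)|$ denotes the accumulated error, then Lipschitz continuity of $F_j$ yields the recurrence $e_j \leq \delta + L\cdot(e_1+\cdots+e_{j-1})$, whence inductively $e_j \leq \delta\cdot(L+1)^{j-1} \leq \delta\cdot(L+1)^{K}$. Taking $\delta = \Theta(\epsilon\cdot(L+1)^{-K})$ gives $e_j\leq \epsilon$ for every $j$, exactly matching the error tolerance on the reported device values in the definition of an $\epsilon$-outcome. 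Because $K$ and $L$ are constants independent of $\epsilon$, this choice of $\delta$ still fits into $O(\log \epsilon^{-1})$ bits per message, and each evaluation costs only a constant number of rounds.

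Armed with this primitive, the outer loop is a textbook binary search on the sign of $\tilde x_i$. I would maintain an interval $[a,b] \subseteq [\alpha,\omega]$ such that $x_i(a)$ and $x_i(b)$ have opposite signs, initialized at $[\alpha,\omega]$. At each step, evaluate $\tilde x_i$ at the midpoint $c=(a+b)/2$ using the primitive; as long as $|x_i(c)| > \epsilon$, the sign of $\tilde x_i(c)$ coincides with that of $x_i(c)$ by the error bound above, so we can safely replace one endpoint by $c$ while preserving the sign-flip property. After $O(\log \epsilon^{-1})$ iterations either some midpoint $c$ already satisfies $|\tilde x_i(c)|\leq \epsilon$ and we output $(i,c,\tilde x_1(c),\ldots,\tilde x_K(c))$, or the interval length has shrunk below $\epsilon/L_i$, where $L_i$ is the Lipschitz constant of $t \mapsto x_i(t)$; then by the intermediate value theorem one endpoint satisfies $|x_i| \leq \epsilon/2$, and the corresponding $\tilde x_i$ lies in $[-\epsilon,\epsilon]$ as required.

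The main obstacle I anticipate is cleanly propagating errors through the composition of Lipschitz maps while keeping the bit-budget at $O(\log \epsilon^{-1})$ per message; the multiplicative blow-up factor $(L+1)^{K}$ only adds $O(K\log L)=O(1)$ extra bits to $\log \delta^{-1}$, which is absorbed into the big-O. A secondary subtlety is that the binary search tracks the sign of $\tilde x_i$ rather than the true sign of $x_i$: this is only dangerous when $|x_i(c)|$ is already within $\epsilon$ of zero, but at precisely such a midpoint the stopping condition $|\tilde x_i(c)|\leq \epsilon$ is met and the protocol halts with a valid $\epsilon$-outcome, so no inconsistency can arise.
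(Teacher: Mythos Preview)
Your approach is essentially the same as the paper's: sequentially evaluate the devices at a query time using the Lipschitz bounds to control error propagation, then binary-search the time interval on the sign of the approximate trigger. The paper's error bookkeeping is organized slightly differently (it tracks a single global Lipschitz constant $\zeta$ for the composed map $t\mapsto x_i(t)$ and uses nested powers of $4\zeta\sqrt{M}$ rather than your recurrence $e_j\le \delta+L(e_1+\cdots+e_{j-1})$), but the substance is identical.

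One small slip to fix: your stopping rule $|\tilde x_i(c)|\le \epsilon$ combined with the evaluation error bound $|\tilde x_i(c)-x_i(c)|\le \epsilon$ only yields $|x_i(c)|\le 2\epsilon$, not $|x_i(c)|\le \epsilon$ as the definition of an $\epsilon$-outcome requires. The paper avoids this by running the evaluation primitive with target accuracy $\epsilon/2$ and using the threshold $|\tilde x_i(c)|\le \epsilon/2$; with that adjustment your argument goes through verbatim. Your last paragraph's claim that ``at precisely such a midpoint the stopping condition $|\tilde x_i(c)|\le \epsilon$ is met'' is also not literally true (when $|x_i(c)|\le \epsilon$ you only get $|\tilde x_i(c)|\le 2\epsilon$), but the invariant is nonetheless preserved because $|\tilde x_i(c)|>\epsilon$ already forces $\operatorname{sign}(\tilde x_i(c))=\operatorname{sign}(x_i(c))$, which is all the binary search needs.
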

\begin{proof}
	Let the moving knife step run from time $\alpha$ to $\omega$ with devices $1 \ldots K$.
	Since the dependence functions of the devices are $\alpha$-Lipschitz, for some constant $\alpha$, the number of devices is constant, it follows that there exists a constant 
	$\zeta > \max\{1, \alpha\}$ such that for any device $i = 1 \ldots K$ and times $s < t$, where $s,t \in [\alpha, \omega]$, 
	\begin{equation} \label{ineq:inductive}
	|x_i(t) - x_i(s)| \leq  \zeta \cdot  |t-s|.
	\end{equation}
	
	Let $M = K+1$ be the maximum number of arguments of any function $F_i$.
	First, given $\epsilon > 0$ and time $t$, we note there exist approximate values $\tilde{x}_1(t) \ldots \tilde{x}_K(t)$ with $O(\log \epsilon^{-1})$ bits of all the devices at this time, such that $|\tilde{x}_i(t) - x_i(t)| \leq \epsilon/ \left((4 \cdot \zeta)^{K-i}\sqrt{M^{K-i}} \right)$ for each device $i$. For the first device this holds since the device is a knife with position equal to the time $t$, and so there is 
	$\tilde{x}_1(t) \in [0,1]$ with $O(\log \epsilon^{-1})$ bits such that 
	$|\tilde{x}_1(t) - t| \leq \epsilon/ \left((4 \cdot \zeta)^{K-1}\sqrt{M^{K-1}} \right)$.
	Suppose the condition holds for devices $1 \ldots s-1$. 
	
	\medskip
	
	We have the inequalities:
	\begin{eqnarray*}
		& & \left|F_{s}\left(t,x_1(t),\ldots, x_{s-1}(t)\right) - 
		F_s\left(t,\tilde{x}_1(t), \ldots, \tilde{x}_{s-1}(t)\right)
		\right| \\
		& \leq &
		\zeta \cdot ||\left(t,x_1(t),\ldots, x_{s-1}(t)\right) -
		\left(t,\tilde{x}_1(t), \ldots, \tilde{x}_{s-1}(t)
		\right)
		||_2 \\
		& \leq & \zeta \cdot  \sqrt{M} \cdot \frac{\epsilon}{2 \cdot 4^{K-s} \cdot \zeta^{K-s+1} \sqrt{M^{K-s+1}}}\\
		& = & \frac{\epsilon}{2 \cdot (4 \cdot \zeta)^{K-s} \sqrt{M^{K-s}}}
		%
	\end{eqnarray*}
	
	Then player $i_s$ controlling the device can report a value $\tilde{x}_s(t)$ with $O(\log \epsilon^{-1})$ bits such that $|\tilde{x}_s(t) - F_s\left(t,\tilde{x}_1(t), \ldots, \tilde{x}_{s-1}(t)
	\right)| \leq  \frac{\epsilon}{2 \cdot (4 \cdot \zeta)^{K-s} \sqrt{M^{K-s}}}$.
	The correct value of device $s$ at time $t$ is $x_s(t) = F_{s}\left(t,x_1(t),\ldots, x_{s-1}(t)\right)$; we obtain $|\tilde{x}_s(t) - x_s(t)| \leq  \frac{\epsilon}{(4 \cdot \zeta)^{K-s} \sqrt{M^{K-s}}}$ as required. Thus $|\tilde{x}_i(t) - x_i(t)| \leq \epsilon$ for each time $t$ and device $i$.
	
	\medskip
	
	To find a time when trigger $j$'s value is approximate zero, we will construct a protocol that maintains throughout the search two points in time at which the trigger has different signs. We are assured that $x_j(\alpha) \cdot x_j(\omega) \leq 0$, so initialize $s = \alpha$ and $t = \omega$. 
	By
	previous arguments, player $i_j$ can report $\tilde{x}_j(s)$ and $\tilde{x}_j(t)$ with $O(\log \epsilon^{-1})$ bits such that $|\tilde{x}_j(t') - x_j(t')| \leq \epsilon/2$, for $t' = s,t$.
	If $\tilde{x}_j(t') \in [-\epsilon/2, \epsilon/2]$ for some $t' \in \{s,t\}$, then $x_j(t') \in [-\epsilon,\epsilon]$, so the time $t'$ is a correct solution that can be output together with the values $\tilde{x}_1(t') \ldots \tilde{x}_K(t')$. Otherwise, 
	$\tilde{x}_j(t') \not \in [-\epsilon/2,\epsilon/2]$ for $t' = s,t$, which implies that $x_j(s),x_j(t) \not \in [-\epsilon,\epsilon]$. In this case the product $\tilde{x}_j(s) \cdot \tilde{x}_j(t)$ has the same sign as $x_j(s) \cdot x_j(t)$, and we can check the midpoint $(s+t)/2$. If $\tilde{x}_j(s) \in [-\epsilon/2,\epsilon/2]$, return $(s+t)/2$. Otherwise, continue the search in the interval between $[s, (s+t)/2]$ and $[(s+t)/2,t]$ where $\tilde{x}_j$ switches signs. If a solution has not been found after $2\zeta/\epsilon$ steps, stop and return the time $s$. Note that since $|\alpha - \omega| \leq 1$, after $2 \zeta/\epsilon$ steps we have $|s -t| \leq \epsilon/(2 \zeta)$, and so $|x_i(s) - x_i(t)| \leq \zeta \cdot |s - t| \leq \epsilon/2$. Moreover, we are guaranteed by the condition maintained throughout the search that $x_i$ switches signs in $[s,t]$, thus $\tilde{x}_i(s) \in [-\epsilon,\epsilon]$ and the time $s$ is a correct approximation obtained using $O(\log \epsilon^{-1})$ rounds each with $O(\log \epsilon^{-1})$ bits of communication.
\end{proof}

Note that even if the players have simple valuations, we may not be able to simulate exactly moving knife steps with very complex dependence functions. For example, a moving knife step $\mathcal{K}$ may be such that on some instances the only time when a trigger fires is an irrational point, case in which the simulation in the communication model is bound to return an approximate time. The moving knife step may be further embedded in a moving knife protocol with \emph{if} branches checking whether $\mathcal{K}$ was executed exactly or only approximately, and with radically different allocations in the two scenarios.
Thus we will require from moving knife protocols that they remain robust when the answers of the players are only approximately correct.

\begin{definition}[Moving Knife Protocol]
	A \emph{moving knife protocol} $\mathcal{M}$ consists of a finite number of steps $1 \ldots K$,
	where the outcome of the $k^{th}$ step is denoted by $R_k$ and represents either 
	\begin{itemize}
		\item any real number that a player can output as a function of its own valuation, or
		\item the outcome of a moving knife step
	\end{itemize}
	and in both cases may additionally depend on the history, i.e. on outcomes $R_1 \ldots R_{k-1}$ of the previous steps. At the end of each execution, $\mathcal{M}$ outputs an allocation of the cake
	that depends on the outcomes of the steps it executed.
	
	\smallskip
	We also require robustness: if $\mathcal{M}$ outputs $\mathcal{F}$-fair allocations, then for all $\epsilon > 0$, by iteratively replacing each outcome $R_k$ of step $k = 1 \ldots K$ of $\mathcal{M}$ with an $\epsilon$-outcome $\tilde{R}_k$,
	$\mathcal{M}$ still outputs $\epsilon$-$\mathcal{F}$-fair partitions and runs in $K$ steps.
\end{definition}

\medskip

It is easy to observe that the existing moving knife protocols (see, e.g., \cite{RW98}) are robust.

\medskip

\noindent \textbf{Theorem} \ref{thm:mk_protocol_sim} (restated). \emph{Let $\mathcal{M}$ be any moving knife protocol that runs in constantly many steps and outputs $\mathcal{F}$-fair partitions.
	Then for every $\epsilon > 0$, there is a communication protocol $\mathcal{M}_{\epsilon}$ that uses 
	$O(\log \epsilon^{-1})$ rounds each with $O(\log \epsilon^{-1})$ bits of communication and outputs $\epsilon$-$\mathcal{F}$-fair partitions. }
	\begin{proof}
	Let each player $i$ make its valuation hungry, if it is not already, by setting $v_i'(x) = (1 - \epsilon/2) \cdot v_i(x) + \epsilon/2$ for all $x \in [0,1]$. Otherwise, set $v_i' = v_i$. 
	
	Consider the following protocol $\mathcal{M}_{\epsilon}$, which follows the execution of $\mathcal{M}$ on valuations $v_i'$ as follows. For each step $k$ of $\mathcal{M}$, given approximate outcomes $\tilde{R}_1 \ldots \tilde{R}_{k-1}$ of the previous steps:
	\begin{itemize}
		\item if step $k$ requires some player $i$ to output a number, the player can communicate an $\frac{\epsilon}{2r}$-approximation, $\tilde{R}_k$, of the exact number it has computed with $O(\log \epsilon^{-1})$ bits of communication. 
		\item if step $k$ is a moving knife step, then $\mathcal{M}_{\epsilon}$ can set $\tilde{R}_k$ to an $\frac{\epsilon}{2r}$-outcome of the step with $O(\log \epsilon^{-1})$ rounds of $O(\log \epsilon^{-1})$ bits of communication.
	\end{itemize}
	By the robustness of the protocol, the allocation obtained is $\epsilon/2$-$\mathcal{F}$-fair with respect to valuations $v_i'$. Since $\mathcal{F}$ is an abstract fairness notion and the valuations $v_i'$ are $\epsilon/2$ close to $v_i$, this allocation is $\epsilon$-$\mathcal{F}$-fair with respect to the original valuations $v_i$ as required.
\end{proof}

\medskip
We obtain the following simulations for the known moving knife procedures.

\bigskip

\noindent \textbf{Corollary} \ref{cor:mk_sim_all}. \emph{
	Austin's procedure for computing a perfect allocation between two players, the Barbanel-Brams, Stromquist, and Webb procedures for computing an envy-free allocation with connected pieces among $n=3$ players, and the moving knife step for computing an equitable allocation for any fixed number $n$ of players \cite{BN17,SS17} can be simulated using $O(\log{\epsilon^{-1}})$ rounds
	each with $O(\log{\epsilon^{-1}})$ bits of communication for all $\epsilon > 0$.}

\end{document}